\title{Upper and Lower Bounds on the Smoothed Complexity of the Simplex Method}
\newcommand{\R}{\mathbb R}
\newcommand{\E}{\mathbb E}
\newcommand{\N}{\mathbb N}
\newcommand{\Z}{\mathbb Z}
\newcommand{\T}{\top}
\newcommand{\sfe}{\mathbb{S}^{d-1}}
\newcommand{\aff}{\operatorname{affhull}}
\newcommand{\eps}{\varepsilon} 
\renewcommand{\epsilon}{\varepsilon} 
\newcommand{\conv}{\operatorname{conv}}
\newcommand{\edges}{\operatorname{edges}}
\newcommand{\verts}{\operatorname{vertices}}
\newcommand{\poly}{\operatorname{poly}}
\newcommand{\norm}[1]{\left\|#1 \right\|}
\newcommand{\dist}{\operatorname{dist}}
\newcommand{\vol}{\operatorname{vol}}
\newcommand{\ball}{\mathbb{B}_2}
\DeclarePairedDelimiter\abs{\lvert}{\rvert}
\definecolor{b2}{RGB}{51,153,255}
\definecolor{red}{RGB}{255,153,51}
\renewcommand{\subset}{\subseteq}
\crefname{fact}{Fact}{Facts}
\crefname{claim}{Claim}{Claims}
\crefname{sidefigure}{Figure}{Figures}
\begin{document}

\maketitle

\begin{abstract}
The simplex method for linear programming is known to be highly efficient in practice,
and understanding its performance from a theoretical perspective is an active research topic.
The framework of smoothed analysis, first introduced by Spielman and Teng (JACM '04)
for this purpose, defines the smoothed complexity of solving a linear program
with $d$ variables and $n$ constraints as the expected running time when
Gaussian noise of variance $\sigma^2$ is added to the LP data.
We prove that the smoothed complexity of the simplex method is
$O(\sigma^{-3/2} d^{13/4}\log^{5/4} n)$,
improving the dependence on $1/\sigma$ compared to the previous bound of
$O(\sigma^{-2} d^2\sqrt{\log n})$.
We accomplish this through a new analysis of the \emph{shadow bound},
key to earlier analyses as well.
Illustrating the power of our new approach, we moreover prove a
nearly tight upper bound on the smoothed complexity of two-dimensional
polygons.

We also establish the first non-trivial lower bound on the smoothed complexity
of the simplex method, proving that the \emph{shadow vertex simplex method}
requires, with a given auxiliary objective, at least
$\Omega \Big(\min \big(\sigma^{-1/2} d^{-1/2}\log^{-1/4} d,2^d \big) \Big)$
pivot steps with high probability. A key part of our analysis
is a new variation on the extended formulation for the regular $2^k$-gon.
We end with a numerical experiment that suggests our lower bound could
be further improved.
\end{abstract}




\section{Introduction}

Introduced by Dantzig \cite{dan51},
the simplex method is one of the primary methods
for solving linear programs (LP's) in practice and is an essential component in many software packages
for combinatorial optimization.
It is a family of local search algorithms which begin by finding a vertex of the set of feasible solutions
and iteratively move to a better neighboring vertex along
the edges of the feasible polyhedron until an optimal solution is reached. These moves are known as \emph{pivot steps}.
Variants of the simplex method can be differentiated by the choice of \emph{pivot rule},
which determines which neighboring vertex is chosen in each iteration,
as well as by the method for obtaining the initial vertex.
Some well-known pivot rules are the most negative reduced cost rule,
the steepest edge rule, and an approximate steepest edge rule known as the
Devex rule.  In theoretical work, the parametric objective rule, also known as
the shadow vertex rule, plays an important role.

Empirical evidence suggests that the simplex algorithm typically takes $O(d + n)$ pivot steps, see
\cite{sha87, andrei2004complexity, Goldfarb1994}.
However, obtaining a rigorous explanation for this excellent
performance has proven challenging.
In contrast to the practical success of the simplex method, all studied variants
are known to have super-polynomial or even exponential worst-case running times.
For deterministic variants, many published bad inputs are based on \emph{deformed cubes},
see \cite{km72, jer73, AC78, GS79, Mur80, gol76} and a unified construction
in \cite{jour/cm/AZ98}.
For randomized and history-dependent variants, bad inputs have been constructed
based on \emph{Markov Decision Processes}
\cite{k92, msw96, conf/stoc/FHZ11, conf/ipco/Friedmann11, hz15, disser2020exponential}.
The fastest provable (randomized) simplex algorithm takes
$O(2^{\sqrt{d\log n}})$ pivot steps in expectation \cite{k92, msw96, hz15}.

Average-case analyses of the simplex method have been performed for a variety of random distributions over linear programs
\cite{b82,b87,b99, jour/mapr/Smale83, report/Haimovich83, jour/mapr/Megiddo86, jour/jc/AKS87, jour/mapr/Todd86, jour/jacm/AM85}.
While insightful, the results from average-case analyses might not be fully realistic due to the fact that ``random''
linear programs tend to have certain properties that ``typical'' linear programs do not.

To better explain why the simplex algorithm performs well in practice, while avoiding some of the pitfalls
of average-case analysis, Spielman and Teng \cite{ST04} introduced the smoothed complexity framework. 
For any base LP
data $\bar{A} \in \R^{n\times d}, \bar{b}\in \R^n, c\in \R^d \backslash \{0\}$
where the rows of $(\bar A,\bar b)$ are normalized to have $\ell_2$ norm at most $1$, they consider
the \emph{smoothed LP} obtained by adding independent Gaussian perturbations to the
constraints:
\begin{align*}
    \max_{x\in \R^d} c^\top x \quad \text{subject to} \quad (\bar{A} + \hat{A}) x \leq (\bar{b} + \hat{b}).
\end{align*}
The entries of $\hat{A}$ and $\hat{b}$ are i.i.d. Gaussian random variables
with mean $0$ and variance $\sigma^2$. The \emph{smoothed complexity} of a simplex
algorithm $\mathcal{A}$ is defined to be the maximum (over $\bar A, \bar b, c$) expected
number of pivot steps the algorithm takes to solve the smoothed LP, i.e.,
\begin{align*}
    \mathcal{SC}_{\mathcal{A},n,d,\sigma} := \max_{\substack{\bar{A} \in \R^{n\times d}, \bar{b}\in \R^n, c\in \R^d \\ \|[\Bar{A},\Bar{b}]\|_{1, 2} \leq 1}} \left(\E_{\hat{A},\hat{b}}\left[ T_{\mathcal{A}}(\bar{A} + \hat{A}, \bar{b} + \hat{b}, c) \right] \right).
\end{align*}
Here $T_{\mathcal{A}}(A,b,c)$ is the number of pivot steps that the algorithm $\mathcal{A}$ takes to solve the linear program $\max_{x\in \R^d} \{c^\top x:  ~ A x \leq b\}$.
We may note that if $\sigma \to \infty$ then $\mathcal{SC}_{\mathcal A, n, d, \sigma}$
approaches the average-case complexity of $\mathcal{A}$ on independent Gaussian distributed input data.
In contrast, if $\sigma \to 0$ then $\mathcal{SC}_{\mathcal A, n, d, \sigma}$ will approach
the worst-case complexity of $\mathcal A$.
As a result, most interest has been directed at understanding the dependence on $\sigma$
in the regime where $\sigma \geq 2^{-\Omega(d)}$ but $\sigma \leq 1/\poly(d)$.

The motivation for smoothed analysis lies in the observation that the above-mentioned
worst-case instances are very ``brittle'' to perturbations, and computer implementations
require great care in handling numerical inaccuracies to obtain the theorized running times
even on problems with a small number of variables. When implemented with a larger number of
variables, the limited accuracy of floating-point numbers make it impossible to reach
the theorized running times.

An algorithm is said to have polynomial smoothed complexity if under the perturbation of constraints, it has expected running
time $\poly(n,d,\sigma^{-1})$, and \cite{ST04} proved that the smoothed complexity of a specific simplex method based on the shadow vertex simplex method (which we will describe next) is at most
$O(d^{55}n^{86}\sigma^{-30} + d^{70}n^{86})$.
The best bound available in the literature is
$O(\sigma^{-2}d^2\sqrt{\log n})$ pivot steps due to \cite{DH18}, assuming $\sigma \leq 1/\sqrt{d\log n}$.
We note that assuming an upper bound on $\sigma$ can be done without loss of generality;
its influence can be captured as an additive term in the upper bound that does not depend on $\sigma$.

This work improves the dependence on $\sigma$ of the smoothed complexity, obtaining an upper bound of $O(\sigma^{-3/2} d^{13/4} \log^{5/4} n)$
for $\sigma \leq 1/d\sqrt{\log n}$.
As a second contribution, we prove the first non-trivial lower bound on the smoothed complexity of a simplex method,
finding that the shadow vertex simplex method requires $\Omega(\min(\frac{1}{\sqrt{\sigma d \sqrt{\log n}}}, 2^d))$
pivot steps.

\paragraph{Shadow Vertex Simplex Algorithm}
One of the most extensively studied simplex algorithms in theory is the shadow vertex simplex algorithm \cite{gas55,b82}.
Given an LP
\begin{align*}
    \max_{x\in \R^d} c^\top x, ~~ A x \leq b,
\end{align*}
for $A \in \R^{n\times d}, b\in \R^n, c\in \R^d$,
let $P = \{x\in \R^d : A x \leq b\}$ denote the feasible polyhedron of the linear program.
The algorithm starts from an initial vertex $x_0 \in P$ that optimizes an initial objective $c_0$\footnote{There are many standard methods of finding such initialization with at most multiplicative $O(d)$ overhead in running time, so we can assume that both $x_0$ and $c_0$ are already given. See the discussion in \cite{DH18}.}.
During the execution, it maintains an intermediate objective $c_\lambda = \lambda c + (1-\lambda) c_0$ and a vertex that optimizes $c_{\lambda}$. 
Thus by slowly increasing $\lambda$ from $0$ to $1$ during different pivot steps, the temporary objective gradually changes from $c_0$ to $c$, revealing the desired solution at the end.
Since each pivot step requires $\poly(d,n)$ computational work, theoretical analysis
has focused on analysing the number of pivot steps.

The algorithm is called the shadow vertex simplex method because, when performing orthogonal
projection of the feasible set onto the two-dimensional linear subspace $W = \mathrm{span}(c_0, c)$, the
vertices visited by the algorithm project onto the boundary of the projection (``shadow'')
$\pi_W(P)$. Assuming certain non-degeneracy conditions, which will hold with probability $1$ for the distributions
we consider, this projection gives an injective map from iterations of the
method to vertices of the shadow, meaning that we can
upper bound the number of pivot steps in the algorithm by the number of
vertices of the shadow polygon. 
This characterization makes the shadow vertex simplex method ideally suited for
probabilistic analysis.

To analyse the ``shadow size'', the number of vertices of the shadow polygon,
we follow earlier work of \cite{ver09} and reduce to the case that $b=\mathbf 1_n$.
In this case, well-established principles of polyhedral duality show that
we can bound the number of vertices of a convex polygon by the number of edges of its dual polygon:
\[
    \text{vertices}(\pi_W(P)) \leq \edges(W \cap \conv(0,a_1,\dots,a_n)) \leq \edges(W \cap \conv(a_1,\dots,a_n))+1.
\]
Here, $a_1,\dots,a_n$ denote the rows of the matrix $A$ used to define
$P = \{x \in \R^d : Ax \leq \mathbf 1_n\}$.
Note that the second inequality holds because $0 \in W$.

The smoothed complexity of shadow vertex simplex algorithm can thus be reduced to the
smoothed complexity of a two-dimensional slice of a convex hull.
For this reason, let us define the maximum smoothed shadow size as
\begin{align}\label{eq:def-shadow-bound}
    \mathcal{S}(n, d,\sigma) = \max_{\substack{\bar{a}_1, \ldots, \bar{a}_n \in \R^d \\ \max_{i \in [n]} \|\bar{a}_i\|_2 \leq 1 \\ W \subset \R^d, \dim(W) = 2}} \E_{\tilde{a}_1, \ldots, \tilde{a}_n \sim \mathcal{N}(0,\sigma^2)} \left[ \edges\big( \conv(\bar{a}_1 + \tilde{a}_1, \ldots, \bar{a}_n + \tilde{a}_n) \cap W\big)\right]
\end{align}

The following upper bound we take from \cite{DH18}, which states that the analysis of \cite{ver09}
can be strengthened to obtain the claimed bound. This upper bound should be understood
as stating that there exists a shadow vertex rule based simplex algorithm which
satisfies that smoothed complexity bound.
The lower bound is due to \cite{b87} and shows that the shadow vertex simplex rule, with
two adversarially specified objectives $c_0, c$,
can be made to follow paths of this length.

\begin{theorem}[Smoothed Complexity of Shadow Vertex Simplex Algorithm]
    Given any $n \geq d \geq 2, \sigma > 0$, the smoothed complexity of the shadow vertex simplex algorithm satisfies
    \begin{align*}
        \mathcal{S}(n, d, \sigma)/4 \leq \mathcal{SC}_{\textsc{ShadowSimplex}, n, d, \sigma} \leq 2\cdot\mathcal{S}\left(n + d, d, \min(\sigma, \frac{1}{\sqrt{d} \log d}, \frac{1}{\sqrt{d \log n}}) \right) + 4.
    \end{align*}
\end{theorem}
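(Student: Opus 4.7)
The two inequalities are quite different, so I would attack them separately. The upper bound reduces the pivot-count of a full simplex run to a bound on the shadow of a perturbed convex hull; the lower bound is a simple matching construction.

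\textbf{Upper bound direction.} The plan is to reduce an arbitrary smoothed LP $\max\{c^\T x : (\bar A + \hat A)x \leq \bar b + \hat b\}$ to a shadow-size problem of the form appearing in \eqref{eq:def-shadow-bound}. First I would address \emph{initialization}: following the template of \cite{DH18}, one adjoins at most $d$ auxiliary constraints whose corresponding vertex is known a priori and which carry their own Gaussian noise; this accounts for the shift from $n$ to $n+d$ and contributes at most an additive $O(1)$ to the number of pivots (part of the ``$+4$''). Next, the \emph{reduction to $b=\mathbf 1$} is done via the standard Vershynin rescaling, which divides each row $(\bar a_i + \hat a_i, \bar b_i + \hat b_i)$ by its last coordinate to lift constraints into $\R^{d+1}$; here one uses the fact (up to a constant-probability event) that the perturbed right-hand side is positive, which is where the truncation $\sigma \leq 1/\sqrt{d}\log d$ enters, since the analysis needs the noise to not blow up the rescaling. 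After this reduction, polyhedral duality gives $\verts(\pi_W(P)) \leq \edges(W \cap \conv(0,a_1,\ldots,a_n)) \leq \edges(W \cap \conv(a_1,\ldots,a_n)) + 1$, so the pivot count during the main shadow-vertex traversal is at most the shadow size plus a constant.

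The remaining subtlety is the $\sigma$ in the argument. The definition \eqref{eq:def-shadow-bound} assumes $\|\bar a_i\| \leq 1$, and we need the effective noise after all rescalings to be the right standard deviation. Truncating $\sigma$ to $\min(\sigma, 1/\sqrt{d}\log d, 1/\sqrt{d\log n})$ is without loss of generality because an instance with smaller noise can only be harder: one couples $(\hat A, \hat b)$ with noise $\sigma$ to $(\hat A', \hat b')$ with noise $\sigma'$ by conditioning, and shows the shadow size is monotone up to constants in this coupling (this is where the factor of $2$ comes in). Combining initialization cost, reduction constants, and the bound on the main phase yields the claimed $2\cdot\mathcal{S}(n+d,d,\sigma') + 4$. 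I expect the main obstacle to be bookkeeping the constants and probabilistic conditioning so that the three separate $\sigma$ truncations can all be absorbed cleanly; the geometric content (duality, projection = shadow) is essentially routine once the setup is right.

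\textbf{Lower bound direction.} For $\mathcal{S}(n,d,\sigma)/4 \leq \mathcal{SC}_{\textsc{ShadowSimplex}}$, I would pick an instance $\bar a_1,\ldots,\bar a_n$ and subspace $W$ achieving (or approaching) the maximum in \eqref{eq:def-shadow-bound}. Letting $Q = \conv(\bar a_1 + \hat a_1, \ldots, \bar a_n + \hat a_n) \cap W$ and interpreting the $a_i$ as constraint normals with $b = \mathbf 1$, the polytope $P = \{x : Ax \leq \mathbf 1\}$ has a shadow on $W$ whose vertices are in bijection (up to constants) with $\edges(Q)$. Borgwardt's argument \cite{b87} shows that for any choice of $W$, there exist objective vectors $c_0,c \in W$ such that the shadow-vertex path from the optimum of $c_0$ to that of $c$ traverses at least a constant fraction of the shadow boundary; picking the longer of the two directions along $W\cap \partial Q$ (or rotating $c_0,c$) loses at most a factor of $2$, and the passage from edges of $Q$ to pivot steps of the algorithm loses at most another factor of $2$, giving the factor $4$. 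Taking expectations over $\hat a_i$ preserves the inequality. This direction is essentially a packaging of Borgwardt's construction; the main care point is that $c_0,c$ may be chosen \emph{adversarially} (the bound is on the worst-case over $(\bar A, \bar b, c)$), which is exactly what the shadow-vertex rule allows.
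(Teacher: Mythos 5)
The paper does not actually prove this theorem: it is stated as a citation, with the upper bound attributed to \cite{DH18} (strengthening \cite{ver09}) and the lower bound to \cite{b87}. So there is no in-paper proof to compare your argument against line by line; what you have written is a sketch of what those cited works do, and it is broadly consistent with the paper's framing (initialization contributing the $n \mapsto n+d$ shift, Vershynin's reduction to $b = \mathbf 1$, the duality chain $\verts(\pi_W(P)) \leq \edges(W\cap\conv(0,a_1,\dots,a_n)) \leq \edges(W\cap\conv(a_1,\dots,a_n))+1$, and Borgwardt's path construction for the lower bound).

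Two points worth tightening if you were to flesh this out. First, Vershynin's reduction to $b=\mathbf 1$ does not require lifting to $\R^{d+1}$; after arranging for a strictly feasible origin one rescales each constraint row by its right-hand side and stays in $\R^d$, and the truncation thresholds on $\sigma$ come in to guarantee, with high probability, that the rescaled rows retain $\ell_2$ norm $O(1)$ and that the effective noise stays of order $\sigma$. Second, your ``monotonicity in $\sigma$ by coupling'' step needs care: writing $\mathcal N(0,\sigma^2) = \mathcal N(0,\sigma'^2) * \mathcal N(0,\sigma^2-\sigma'^2)$ and conditioning on the outer noise changes the base points to $\bar a_i + \tilde a_i''$, which may have norm exceeding $1$, so one must renormalize the instance and absorb the resulting scaling of $\sigma'$ into the constant; this is exactly where the factor of $2$ and the $\min(\cdot,1/\sqrt{d\log n})$ truncation are used, and your sketch correctly gestures at this but the claim as stated (shadow size is ``monotone up to constants'' under the coupling) is only true after this renormalization argument.
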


With this reduction, analysing the smoothed complexity of the simplex method
comes down to bounding the smoothed shadow size $\mathcal S(n,d,\sigma)$.
As such, that will be the focus of the remainder of this paper.

\subsection{Our Results}
The previous best shadow bound is due to \cite{DH18}, who prove that $\mathcal S(n,d,\sigma) \leq O(d^2 \sqrt{\log n}\sigma^{-2})$.
Our first main result strengthens this result for small values of $\sigma$.
\begin{theorem}
    For $n \geq d \geq 3$ and $\sigma \leq \frac{1}{8d\sqrt{\log n}}$,
    the smoothed shadow size satisfies
    \[
        \mathcal S(n,d,\sigma) = O\left( \sigma^{-3/2} d^{13/4} \log^{5/4} n \right).
    \]
\end{theorem}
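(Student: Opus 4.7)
The plan is to follow the parametric shadow-size framework of \cite{ver09, DH18} while improving the dependence on $\sigma$ through a threshold-based splitting argument. Writing $a_i = \bar a_i + \tilde a_i$ with $\tilde a_i \sim \mathcal N(0, \sigma^2 I_d)$, parametrize the unit circle in the two-dimensional subspace $W$ by $\theta \in [0, 2\pi)$. For each $\theta$, let $v(\theta)$ denote the maximizer of $\sprod{\theta}{\cdot}$ over the two-dimensional polygon $W \cap \conv(a_1, \dots, a_n)$, and let $\alpha(\theta)$ be the angular extent of the arc on which $v(\cdot)$ is constant. The number of vertices of the shadow equals $\int_0^{2\pi} \alpha(\theta)^{-1}\, d\theta$, so by Fubini it suffices to prove a uniform bound $\E[\alpha(\theta)^{-1}] \lesssim \sigma^{-3/2} d^{13/4} \log^{5/4} n$ for every fixed $\theta$.

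Next, I would bound $\alpha(\theta)$ geometrically in terms of two local quantities at $v^* = v(\theta)$: the distance $R = \|v^*\|$ from the origin and a gap $g$ that measures how strictly $v^*$ beats the next competing face structure of $\conv(a_1, \dots, a_n)$. Planar geometry gives $\alpha(\theta) \gtrsim g / R$, and hence $\E[\alpha(\theta)^{-1}] \lesssim \E[R/g]$. The previous $\sigma^{-2}$ bound of \cite{DH18} follows from $\E[g^{-1}] \lesssim \sigma^{-1}$ by Gaussian anti-concentration, combined with an additional factor of $\sigma^{-1}$ absorbed into the union bound over the candidate optima. To sharpen this to $\sigma^{-3/2}$, I would introduce a threshold $L$ and split by $\{R \leq L\}$ versus $\{R > L\}$. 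On $\{R \leq L\}$, the pointwise bound $R/g \leq L/g$ yields a contribution of order $L/\sigma$; on $\{R > L\}$, a Gaussian tail bound gives $\Pr[R > L] \lesssim n \exp\bigl(-(L - O(1))^2 / (2\sigma^2)\bigr)$, and Cauchy--Schwarz applied to $\E[R \, g^{-1} \mathbf 1_{R > L}]$ produces a contribution decaying exponentially in $L/\sigma$. Optimizing $L \sim \sigma \sqrt{\log n}$ balances the two regimes; the remaining factors $d^{13/4} \log^{5/4} n$ accumulate from careful bookkeeping of lower-dimensional face counts of the $d$-dimensional convex hull and from union bounds over the $n$ constraints.

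The main obstacle is the correlation between $R$ and $g$ in the conditional moment $\E[g^{-1} \mathbf 1_{R > L}]$, since conditioning on $\|v^*\|$ being large biases the distribution of $g$. To decouple them I would condition on the identity $i^*$ of the optimum and on the component of $\tilde a_{i^*}$ orthogonal to $\theta$; this renders the one-dimensional noise component along $\theta$ independent of $R$ and of the other noise vectors. That one-dimensional Gaussian has density uniformly bounded by $1/(\sigma \sqrt{2\pi})$, yielding the factor of $\sigma^{-1}$ in the gap moment without paying an extra factor for the tail event. Carrying out this decoupling while tracking which face of $\conv(a_1, \dots, a_n)$ is visited, and accumulating the correct polynomial factors in $d$ and $\log n$, forms the technical heart of the argument.
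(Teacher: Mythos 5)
Your proposal follows the Spielman--Teng/Vershynin angular-decomposition scheme of bounding $\E[\alpha(\theta)^{-1}]$ for each direction, whereas the paper sums $\Pr[E_I]$ over index sets $I$ and, for each $I$, charges the edge either to the perimeter (if the next edge is likely long) or to the exterior-angle sum $2\pi$ (if it is likely short), balancing via a length threshold $t$ (\Cref{thm:ub-general}). That divergence in framework would be fine in principle; the problem is that your threshold split on $R=\|v^*\|$ cannot deliver $\sigma^{-3/2}$. Since $\|\bar a_i\|\le 1$ and $\sigma\le 1/(8d\sqrt{\log n})$, the optimal vertex of the shadow polygon has $R=\Theta(1)$ with overwhelming probability, as your own tail bound $\Pr[R>L]\lesssim n\exp\bigl(-(L-O(1))^2/(2\sigma^2)\bigr)$ already reflects. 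Taking $L\sim\sigma\sqrt{\log n}\ll 1$ as you propose makes that tail bound vacuous, so the ``bad'' event $\{R>L\}$ carries essentially all the probability and nothing has been split off; taking $L=\Theta(1)$ so the tail is genuinely small leaves the good regime contributing $L/\sigma=\Theta(1/\sigma)$, which does not improve on a linear $1/\sigma$ dependence, and the extra $\sigma^{-1}$ that you attribute to the union bound over candidate optima is untouched by the split in either case.

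Separately, the Cauchy--Schwarz step requires a finite second inverse moment such as $\E[R^2 g^{-2}]$, but $g$ is essentially a linear functional of Gaussian noise with density bounded away from zero near the origin, so $\E[g^{-2}]=\infty$; conditioning on $i^*$ and the orthogonal noise does not repair this, since a one-dimensional Gaussian factor remains. The $\sigma^{-3/2}$ in the paper is ultimately earned from a different small quantity with better scaling: the distance from the affine hull of a shadow edge to the next shadow vertex, which \Cref{lem:det} factors into a facet margin $\delta$ and a projected-ridge inner radius $r$, each then lower bounded with constant probability via log-Lipschitzness of the Laplace--Gaussian substitute (\Cref{lem:rand-delta-lb}, \Cref{lem:rand-r-lb}, combined in \Cref{lem:normalvector}). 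Your objective-value gap $g$ is not this quantity, and nothing in the proposal supplies an analogue of that distance lemma, which is where the $d^{13/4}$ and $\log^{5/4} n$ factors actually come from.
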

A full overview of bounds on the smoothed shadow size, including previous results in the literature, can be found in
\Cref{tab:shadow_history}.

Second, we prove the first non-trivial lower bound on the smoothed shadow size, establishing that
$\mathcal S(4d-13,d,\sigma) \geq \Omega(\min(\frac{1}{\sqrt{\sigma d \sqrt{\log d}}}, 2^d))$
for $d > 5$. This lower bound is proven by constructing a polyhedron
$P = \{x \in \R^d : Ax \leq \mathbf 1_n\}$ and a two-dimensional subspace $W$ such that
for any small perturbation of $A$, the new polyhedron $P$ projected onto $W$ will have many vertices.
The construction is based on an extended formulation similar to those first constructed by
\cite{BN01, glineur}.
\begin{theorem}
    For any $d > 5$ and $\sigma \leq \frac{1}{360d\sqrt{\log(4d)}}$,
    the smoothed shadow size satisfies
    \[
        \mathcal S(4d-13,d,\sigma) = \Omega \left(\min \Big(\frac{1}{\sqrt{d\sigma\sqrt{\log d}}},2^d \Big) \right).
    \]
\end{theorem}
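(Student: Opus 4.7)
The plan is to exploit a known extended formulation of the regular $2^k$-gon to construct base points $\bar a_1, \ldots, \bar a_{4d-15} \in \R^d$ and a two-dimensional subspace $W$ such that $\conv(\bar a_1, \ldots, \bar a_{4d-15}) \cap W$ is a regular $2^k$-gon with $k = \Theta(d)$. The sectional, rather than projectional, construction is obtained by taking the polar of the Ben-Tal--Nemirovski and Glineur formulations. After rescaling, the $\bar a_i$ will satisfy $\|\bar a_i\|_2 \leq 1$ while the polygon has center-to-vertex distance $r = \Theta(1/d)$. This base configuration already has $\Theta(2^d)$ edges in the slice, and the task reduces to controlling how many of those edges survive the Gaussian perturbation.

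For Step 1 (construction), one verifies directly that the prescribed $4d-15$ points and choice of $W$ yield a regular $2^k$-gon after slicing, and that the rescaling needed for $\|\bar a_i\|_2 \leq 1$ is consistent with a slice radius of $r = \Theta(1/d)$. This follows the standard recursive construction of the $2^k$-gon, doubling the number of sides at the cost of adding $O(1)$ vertices and dimensions per level, up to a depth $k$ to be chosen later in terms of $\sigma$ and $d$.

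The heart of the argument (Step 2) is a sagitta-style robustness analysis. Each vertex of the polygonal slice $\conv(\bar a_i) \cap W$ arises as the transverse intersection of $W$ with an edge $[\bar a_i, \bar a_j]$ of the convex hull. When the two endpoints are perturbed by independent $\mathcal{N}(0, \sigma^2 I_d)$ vectors, the affine line through them shifts in its normal direction by a one-dimensional quantity of standard deviation $\Theta(\sigma)$, independent of $d$. A union bound over all $O(n)$ relevant edges bounds the maximum shift by $O(\sigma\sqrt{\log n})$ with high probability. On the other hand, in a regular $m$-gon of radius $r$ each vertex stands at sagitta $\Theta(r/m^2)$ above the chord joining its two neighbors, so every vertex whose sagitta exceeds the edge-shift threshold is preserved (up to a small displacement) in the perturbed slice. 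Setting
\[
    \frac{r}{m^2} \;=\; \Theta\!\left(\sigma\sqrt{\log n}\right),
\]
with $r = \Theta(1/d)$ and $n = 4d-15$, yields $m = \Theta(1/\sqrt{d\sigma\sqrt{\log d}})$; then choosing $k$ so that $2^k \approx \min(m, 2^d)$ produces the desired $\Omega(\min(1/\sqrt{d\sigma\sqrt{\log d}}, 2^d))$ surviving vertices with high probability, and hence in expectation as well.

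The main obstacle will be making the sagitta argument simultaneously valid across all surviving vertices: not only must the particular edge defining each vertex continue to cross $W$ near the correct location, but no other perturbed vertex of the polytope should move far enough to intrude into the corresponding corner of the slice. The sagitta must therefore dominate the maximum over $O(n)$ edges of the line-shift, which is precisely why the $\sqrt{\log n}$ factor appears in the final bound. A secondary challenge is the exact constant matching: obtaining the specific figure $4d-15$ and guaranteeing uniformly $\|\bar a_i\|_2 \leq 1$ after rescaling will likely require a mild variant of the classical Ben-Tal--Nemirovski construction tailored to this combined norm-and-count budget, as hinted at in the abstract.
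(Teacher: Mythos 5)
Your overall plan coincides with the paper's in broad strokes: construct an extended formulation whose polar dual, sliced by $W$, is a near-regular $2^k$-gon, and then argue that small perturbations cannot destroy most of the structure of the slice. The final parameter bookkeeping also lands in the right place. However, there are two substantive gaps, and one of them is precisely where the paper's main technical work lives.

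The more serious gap is conditioning. You assert that rescaling the classical Ben-Tal--Nemirovski or Glineur extended formulation yields $\|\bar a_i\|_2 \leq 1$ together with slice radius $r = \Theta(1/d)$. That is false: those classical formulations are exponentially ill-conditioned, with primal inradius-to-circumradius ratio $2^{-\Omega(k)}$, so after dualizing and normalizing the generators to unit norm the slice radius collapses to $2^{-\Omega(k)}$. Your threshold $r/m^2 \gtrsim \sigma\sqrt{\log n}$ then forces $m = O(1)$ whenever $\sigma \gtrsim 2^{-\Omega(k)}$, and the bound is vacuous in exactly the regime of interest. You file this under a "secondary challenge" of constant-matching, but it is the load-bearing step: the whole point of the paper's new formulation (the system \eqref{cons:absvalue}--\eqref{cons:s} with its auxiliary slack variables $t,s$, together with the careful centering in \Cref{lem:innerball} and \Cref{lem:lb-radius-P}) is that a translate of $P$ sandwiches an $\ell_\infty$-ball with ratio only $45$, hence the polar $Q$ sandwiches an $\ell_1$-ball with the same constant ratio and its slice $Q\cap W$ has inradius-to-circumradius ratio $1+4^{-k-2}$. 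Without that, nothing downstream works.

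The second gap is the per-vertex survival argument. For $d \geq 5$, a vertex of $\conv(a_1,\dots,a_n)\cap W$ comes from a $(d-2)$-dimensional ridge, the convex hull of $d-1$ generators, not from a one-dimensional edge $[\bar a_i,\bar a_j]$; indeed the unperturbed slice has roughly $2^{k+1}$ vertices while the polytope has only $O(k^2)$ edges, so the correspondence you posit is already impossible by counting. The one-dimensional normal-shift heuristic does not transfer directly to a ridge of $d-1$ random points, and you explicitly flag but leave open the simultaneity issue ("no other perturbed vertex should intrude"). The paper sidesteps all of this by reasoning globally rather than vertex-by-vertex: \Cref{lem:hdist} shows that $\ell_1$-bounded perturbations keep $\conv(a_i)$ sandwiched between $(1-O(\eps))$ and $(1+O(\eps))$ dilates of $\conv(\bar a_i)$, hence the slice is sandwiched between two disks of radius ratio $1+O(\eps + 4^{-k})$; and \Cref{lem:edgecount} shows any polygon sandwiched that tightly has $\Omega(1/\sqrt{\eps+4^{-k}})$ edges. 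No face identification, no survival bookkeeping, and no ridge-variance calculation are needed.
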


It is possible that the exponent of $\sigma$ in our bound can be further optimized.
In \Cref{sub:experiments}, we describe numerical experiments which suggest that
the actual shadow size for random perturbations of our constructed polytope might be as high as $\Omega(\min(\sigma^{-3/4}, 2^d))$.
We leave open the question whether having $n/d > 4$ can lead to stronger
lower bounds in the regime of $\sigma < 2^{-d}$.

\begin{table}[h!]
  \begin{center}
  \renewcommand{\arraystretch}{1.6}
    \begin{tabular}{|l|l|l|l|} 
        \hline
        {\bf Reference} & {\bf Shadow size} & {\bf Model} \\
        \hline
        \cite{b87} & $\Theta(d^{3/2}\sqrt{\log n})$ & \makecell[l]{Average-\\ case,\\ Gaussian } \\ 
        \hline
        \hline
        \cite{ST04} & $O(\sigma^{-6}d^3n + d^6n\log^3 n)$ & Smooth \\
        \hline
        \cite{ds05} & $O(\sigma^{-2}dn^2\log n + d^2n^2\log^2 n)$ & Smooth \\
        \hline
        \cite{ver09} & $O(\sigma^{-4}d^3 + d^5\log^2n)$ & Smooth \\
        \hline
        \cite{DH18} & $O(\sigma^{-2}d^2\sqrt{\log n}+ d^{3}\log ^{1.5}n)$ & Smooth \\
        \hline
        This paper & $O(\sigma^{-3/2} d^{13/4} \log^{5/4} n + d^{19/4} \log^{2} n)$ & Smooth \\
        \hline
        \hline
        This paper & $\Omega(\min(\frac{1}{\sqrt{\sigma d \sqrt{\log d}}},2^d))$ & Smooth \\
        \hline
        \cite{zad73,Mur80, report/Goldfarb83, } & $2^d$ & Worst \\
        \hline
    \end{tabular}
    \caption{Bounds of expected number of pivots in previous literature, assuming $d \geq 3$. Logarithmic factors are simplified. The lower bound of \cite{b87} holds in the smoothed model as well.}\label{tab:shadow_history}
  \end{center}
\end{table}

\paragraph{Two-dimensional polygons}
To better understand the smoothed complexity of the intersection polygon $\conv(a_1,\dots,a_n)\cap W$,
we also analyse its two-dimensional analogue introduced by \cite{damerow2004extreme}.
Taking $\bar a_1,\dots,\bar a_n \in \R^2$, each satisfying $\|\bar{a}_i\|_2\leq 1$,
we are interested in the number of edges of the smoothed polygon
$\conv(\bar a_1 + \hat a_1,\dots,\bar a_n + \hat a_n)$,
where $\hat a_1,\dots,\hat a_n \sim N(0,\sigma^2)$ are independent.
The previous best upper bound on the smoothed complexity of this polygon
is $O(\sigma^{-1} + \sqrt{\log n})$, due to \cite{bwcachapter}.
Their analysis is based on an adaptation of the shadow bound by \cite{DH18}.
In \Cref{sec:ub-2d} we improve this upper bound, obtaining the following theorem.
\begin{theorem}[Two-Dimensional Upper Bound]\label{thm:twod-intro}
    Let $\bar{a}_1, \ldots, \bar{a}_n \in \R^2$ be $n > 2$ vectors with norm at
    most $1$. For each $i \in [n]$, let $a_i$ be independently distributed as
    $\mathcal{N}(\bar{a}_i, \sigma^2 I_{2})$. Then
    \begin{align*}
        \E\left[ \edges\left(\conv(a_1, \ldots, a_n) \right) \right] \leq O\left(\sqrt{\log n} + \frac{\sqrt[4]{\log n}}{\sqrt{\sigma}}\right).
    \end{align*}
\end{theorem}
To confirm that the above upper bound is stronger than that of \cite{bwcachapter}, one may
use the AM-GM inequality to verify that $2\frac{\sqrt[4]{\log n}}{\sqrt{\sigma}} \leq \sqrt{\log n} + \sigma^{-1}$.
Combined with the trivial upper bound of $n$ vertices, our bound nearly matches the lower bound of
$\Omega(\min(\sqrt{\log n} + \frac{\sqrt[4]{\log (n\sqrt\sigma)}}{\sqrt\sigma}, n))$ proven by \cite{DGGT16}.  
A full overview of previous results on the smoothed complexity of the two-dimensional convex hull
can be found in \Cref{tab:twod_history}.

\begin{table}[h!]
  \begin{center}
  \renewcommand{\arraystretch}{1.6}
    \begin{tabular}{|l|l|}
        \hline
        {\bf Reference} & {\bf Smoothed polygon complexity} \\
        \hline
        \cite{damerow2004extreme} & $O(\log(n)^2 + \sigma^{-2}\log n)$ \\
        \hline
        \cite{thesis/Schnalzger14} & $O(\log n + \sigma^{-2})$ \\
        \hline
        \cite{DGGT16} & $O(\sqrt{\log n} + \sigma^{-1}\sqrt{\log n})$ \\
        \hline
        \cite{bwcachapter} & $O(\sqrt{\log n} + \sigma^{-1})$ \\
        \hline
        This paper & $O(\sqrt{\log n} + \frac{\sqrt[4]{\log (n)}}{\sqrt{\sigma}})$ \\
        \hline
        \hline
        \cite{DGGT16} & $\Omega(\min(\sqrt{\log n} + \frac{\sqrt[4]{\log (n\sqrt\sigma)}}{\sqrt{\sigma}},n))$ \\
        \hline
    \end{tabular}
    \caption{Bounds on the smoothed complexity of a two-dimensional polygon.}\label{tab:twod_history}
  \end{center}
\end{table}

\subsection{Related work}
\paragraph{Shadow Vertex Simplex Method}

The shadow vertex simplex algorithm has played a key role in many analyses of simplex and simplex-like algorithms.
On well-conditioned polytopes, such as those of the form $\{x \in \R^d : Ax \leq b\}$ where $A$
is integral with subdeterminants bounded by $\Delta$, the shadow vertex method has been studied
by in \cite{conf/icalp/BR13, jour/dcg/DH16}.
The shadow vertex method on polytopes all whose vertices are integral was studied in \cite{simplexzeroone,black22}

On random polytopes of the form $\{x \in \R^d : Ax \leq \mathbf 1_n\}$, assuming the
constraint vectors are independently drawn from any rotationally symmetric distribution, the
expected iteration complexity of the shadow vertex simplex method was studied by \cite{b82, b87, b99}.
In the case when the rows of $A$ arise from a Poisson distribution on the unit sphere,
concentration results for the shadow size and diameter bounds were proven in \cite{bdghl21}.
The diameter of smoothed polyhedra was studied by \cite{narayanan2021spectral}, who used the shadow bound of \cite{DH18}
to show that most vertices, according to some measure, are connected by short paths.

A randomized algorithm for solving linear programs in weakly polynomial time, using
the shadow vertex simplex method as a subroutine, was proposed in \cite{conf/stoc/KS06}.
The shadow vertex algorithm was recently used as part of the analysis of an interior point method
by \cite{adlnv}.

\paragraph{Extended Formulations}
For a polyhedron $P \subset \R^d$, an \emph{extended formulation}
is any polyhedron $Q \subset \R^{d'}$, $d' \geq d$,
such that $P$ can be obtained as the orthogonal projection of $Q$ to
some $d$-dimensional subspace.
Importantly, $Q$ can have much fewer facets than $P$.
While there is a wider literature on extended formulations,
here we describe only what is most relevant to the construction in \Cref{sec:lb}.

The construction in our lower bound is based on an adaptation
of the extended formulation by \cite{BN01} of the regular $2^k$-gon.
They used this construction to obtain a polyhedral approximation
of the second order cone
$\{x \in \R^{n+1} : \sum_{i=1}^n x_i^2 \leq x_{n+1}^2\}$.
A variant on their construction using fewer variables and inequalities was
given by \cite{glineur}. A more general construction based on \emph{reflection relations}
is used to construct extended formulation for the regular $2^k$-gon, as well as
other polyhedra, in \cite{Kaibel2013}.
Extended formulations for regular $n$-gons, when $n$ is not a power of $2$,
can be found in \cite{VANDAELE2017217}.

Approximations of the second order cone based on the work of \cite{BN01,
glineur} have been used to solve second order conic programs, see, e.g.,
\cite{Brmann2015}. These approximations were included in the open-source MIP solver
\texttt{SCIP} until version 7.0 \cite{GamrathEtal2020OO}.

\subsection{Proof Overview}

\subsubsection{Smoothed Complexity Upper Bound}

We write the random polytope from \eqref{eq:def-shadow-bound} as $Q = \conv(a_1, \ldots, a_n)$ where each $a_i$ is sampled independently from $\mathcal{N}_d(\bar{a}_i, \sigma^2 I)$ such that $\|\bar{a}_i\| \leq 1$. 
Our goal is to upper bound the expected number of edges of the polygon $Q \cap W$ for
fixed two-dimensional plane $W \subset \R^d$ and $\bar{a}_1, \ldots, \bar{a}_n$. 
This will immediately give us an upper bound of $\mathcal{S}(n, d, \sigma)$.

A fact used since the early days of smoothed analysis \cite{ST04} states that the intersection polygon $Q \cap W$ is \emph{non-degenerate} with probability measure $1$:
every edge on $Q \cap W$ is uniquely given by the intersection between $W$ and a facet of $Q$,
and every facet of $Q$ is spanned by exactly $d$ vertices.
For any index set $I \in \binom{[n]}{d}$, write $E_I$ as the event that $\conv(a_i : i\in I) \cap W$ is an edge of $Q \cap W$.
Non-degeneracy implies that every edge of $Q \cap W$ uniquely corresponds to an index set $I \in \binom{[n]}{d}$ such that $E_I$ holds.

Before sketching our proof, we first review the approach of \cite{DH18}, then discuss the main technical challenges in achieving an upper bound with better dependence on $\sigma$.

As a first step in \cite{DH18}, the authors replace the Gaussian distribution with a distribution they call the
Laplace-Gaussian distribution\footnote{The Laplace-Gaussian distribution follows the Gaussian probability density function near its mean but exhibits exponentially decaying tails.}.
 The latter distribution approximates the probability density of the former, in
 particular having nearly equivalent smoothed shadow size, while being
$O(\sigma^{-1}\sqrt{d\log n})$-log-Lipschitz for any point on its domain.
A probability distribution with probability density function $\mu$ is
$L$-log-Lipschitz for some $L > 0$ if, for any $x, y\in \R^d$, the condition
$|\log(\mu(x)) - \log(\mu(y))| \leq L\|x - y\|$ holds.

Next, define $\ell_I$ as the length of the edge on $Q \cap W$ that corresponds to $I$, i.e., when $\conv(a_i : i\in I) \cap W$ is an edge of $Q \cap W$ then $\ell_I$ gives the length of this line segment, and otherwise $\ell_I = 0$.
\cite{DH18} showed that, for any family $S \subseteq \binom{[n]}{d}$,
the expected number of edges of $Q \cap W$ coming from $S$ is at most
\begin{align}
    \E\left[\sum_{I \in S} 1[E_I]\right] \leq \frac{\E[\mathrm{perimeter}(Q \cap W)]}{\min_{I \in S} \E[\ell_I \mid E_I]}.
    \label{eq:counting-strategy-length}
\end{align}
Therefore, by taking $S = \{I \in\binom{[n]}{d}: \Pr[E_I] \geq \binom n d ^{-1}\}$, the expected number of edges of $Q \cap W$ is at most
\begin{align}
\E[\edges(Q \cap W)] = \E\Bigg[\sum_{I \in \binom{[n]}{d}} 1[E_I]\Bigg] \leq 1 + \E\left[\sum_{I \in S} 1[E_I]\right] \leq 1 + \frac{\E[\mathrm{perimeter}(Q \cap W)]}{\min_{I \in S} \E[\ell_I \mid E_I]}.
    \label{eq:counting-strategy-old}
\end{align}

To upper bound the numerator of \eqref{eq:counting-strategy-old}, notice that $Q \cap W$ is a convex polygon contained in the two-dimensional disk centered at $\mathbf{0}$ with radius $\max_{i \in [n]}\|\pi_W(a_i)\|$. 
It follows that the perimeter of $Q \cap W$ is at most the perimeter of this disk, namely,
\begin{align}
    \E[\mathrm{perimeter}(Q \cap W)] \leq 2\pi \cdot \E[\max_{i \in [n]}\|\pi_W(a_i)\|] \leq 2\pi \cdot (1 + 4\sigma \sqrt{\log n}), \label{eq:perimeter}
\end{align}
where the last step comes from a Gaussian tail bound.
For the denominator of \eqref{eq:counting-strategy-old}, \cite{DH18} showed that for
any $I \in \binom{[n]}{d}$ with $\Pr[E_I] \geq \binom n d ^{-1}$ that, conditional on $E_I$, the expected edge length is at least
\begin{align}
    \E[\ell_I \mid E_I] \geq \Omega(\frac{\sigma^2}{d^2 \sqrt{\log n}} \cdot \frac{1}{1 + \sigma \sqrt{d \log n}}) \label{eq:shortest-edge}
\end{align}
By combining \eqref{eq:counting-strategy-old}, \eqref{eq:perimeter} and \eqref{eq:shortest-edge}, they proved an upper bound of $O(\sigma^{-2} d^2 \sqrt{\log n} + d^3 \log^{1.5}n)$.

\paragraph{New Strategy for Counting Edges}

While \cite{DH18} made the best possible analysis based of their edge-counting strategy as outlined in \eqref{eq:counting-strategy-old}, the strategy itself is sub-optimal.
The main drawback is that using the minimum expected length of edge $\min_{I \in \binom{[n]}{d}} \E[\ell_I \mid E_I]$ at the denominator of \eqref{eq:counting-strategy-old} is too pessimistic when the edges of $Q \cap W$ are long.
For instance, consider the case where an edge on $Q \cap W$ has length $\Omega(1)$ without perturbation. 
After the perturbation, is very likely that the length of this edge is still $\Omega(1)$, but \cite{DH18} uses a lower-bound of $\Omega(\frac{\sigma^2}{d^2 \sqrt{\log n}})$.

To improve this, we developed a new edge-counting strategy that can handle the long and short edges separately with two different ways of counting the edges. 
Take any index set $I \in \binom{[n]}{d}$; conditional on $E_I$, we write $e_I$ for its edge $\conv(a_i : i \in I) \cap W$.
We call the next edge in clockwise direction as $e_{I^+}$ and say it has length $\ell_{I^+}$.
We say $e_{I^+}$ is \emph{likely to be long} if $\Pr[\ell_{I^+} \geq t \mid E_I] \geq 0.05$ for some parameter $t > 0$ to be specified later. Here $0.05$ can be replaced by some arbitrary constant in $(0, 0.5)$.
Let $S_0 \subseteq \binom{[n]}{d}$ denote the set of $I \in \binom{[n]}{d}$ such that $e_{I^+}$ is likely to be long.
Following \eqref{eq:counting-strategy-length}, the expected number of edges in $Q \cap W$ that are likely to be long is at most
\begin{align}
    \E[\sum_{I \in S_0} 1[E_I]] \leq \frac{\E[\mathrm{perimeter}(Q \cap W)]}{0.05 t} \leq \frac{(2\pi + O(\sigma \sqrt{\log n}))}{0.05 t}.
    \label{eq:ub-long-edges}
\end{align}
Here the second step uses the exact same upper bound of $\E[\mathrm{perimeter}(Q \cap W)]$ as in \eqref{eq:perimeter}.

In the other case, $e_{I^+}$ is \emph{unlikely to be long}, i.e., $\Pr[\ell_{I^+} \geq t \mid E_I] < 0.05$. 
Now we will upper bound the number of such edges by claiming that their exterior angles each are large in expectation.
Let $\theta_I$ be the exterior angle at the endpoint of $\conv(a_i : i\in I)\cap W$ that comes last in clockwise order, i.e.,
the vertex where $e_I$ and $e_{I^+}$ meet.
Our key observation is that $\sin(\theta_I) \cdot \ell_{I^+}$ equals the distance from the affine hull of $e_I$ to the second vertex of $e_{I^+}$ in clockwise order. Therefore, by establishing a universal lower bound for such line-to-vertex distances, we can derive a lower bound for $E[\theta_I \mid E_I]$ for any edge $e_{I^+}$ that is unlikely to be long.
See \Cref{fig:edgecounts} for an illustration.

More formally,
let $p_{I^+}$ denote the next vertex of $Q \cap W$ after $e_I$ in clockwise order.

Let $S_1 \subseteq \binom{[n]}{d}$ be the collection of index sets $I \in \binom{[n]}{d}$ for which $\Pr[E_I] \geq 10 \binom{n}{d}^{-1}$. For a specific value of $\gamma > 0$, suppose that for each $I \in S_1$ we have
\begin{align}
    \Pr[\dist(p_{I^+}, \aff(e_I)) \geq \gamma \mid E_I] \geq 0.1 . \label{eq:gamma}
\end{align}

Then, for $I \in S_1 \setminus S_0$, conditional on $E_I$, we have
$\theta_I \geq \sin(\theta_I) \geq \frac{\gamma}{t}$ with probability at least $0.05$,
and the expectation of the exterior angle at the shared endpoint of $e_I$ and $e_{I^+}$ is at least $\frac{\gamma}{20t}$.

\begin{figure}[hbtp]
    \centering 
        \includegraphics[width=0.5\linewidth]{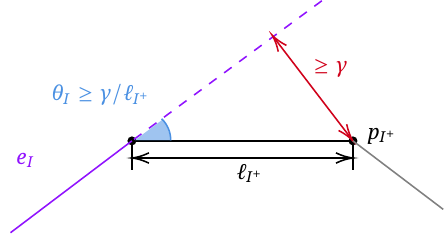}
        \caption{Illustration of the case when $e_{I^+}$ is short. In purple is the edge $e_I$ with its extension line dashed.
        The next edge in clockwise direction, $e_I^+$, has length $\ell_{I^+}$ and is drawn in black. In red is the line-to-vertex distance $\dist(p_{I^+}, \aff(e_I))$, and in blue is the angle $\theta_I$. If $\dist(p_{I^+}, \aff(e_I)) \geq \gamma$ then $\theta_I \geq \sin(\theta_I) \geq \gamma / \ell_{I^+}$.}
    \label{fig:edgecounts}
\end{figure}

On the other hand, the sum of exterior angles of a polygon equals to $2\pi$.
Therefore we can upper bound the expected number of edges that are not likely to be long of at most
\begin{align}
    \E[\sum_{I \in S_1 \setminus S_0} 1[E_I]] \leq \frac{2\pi}{\min_{I \in S_1 \setminus S_0} \mathbb{E}[\theta_I \mid E_I]} \leq \frac{2\pi\cdot20t}{\gamma}.
    \label{eq:ub-short-edges}
\end{align}
We will select $\gamma > 0$ as large as possible subject to
the fact that every $I \in \binom{[n]}{d}$ with $\Pr[E_I] \geq 10 \binom{n}{d}^{-1}$
satisfies $I \in S_1$.

Summing up the 
number of edges induced by sets in $S_0, S_1\setminus S_0$ and $\binom{[n]}{d} \setminus S_1$,
we get an upper bound on the expected edge-count of $Q \cap W$ of at most
\begin{align}
    \E[\edges(Q \cap W)] \leq \frac{2\pi + O(\sigma \sqrt{\log n})}{t} + \frac{40\pi t}{\gamma} + 10
    = O\left(\sqrt{\frac{1 + \sigma \sqrt{\log n}}{\gamma} } \right) \label{eq:counting-strategy-new}
\end{align}
where the final step follows from optimizing $t > 0$ to get the strongest possible bound.
We summarize our result in \Cref{thm:ub-general}.
For details of the edge-counting strategy, see \Cref{sec:ub-general}.

\paragraph{Two-dimensional Upper Bound}

In the second part of our proof, we need to show a lower bound of the expected
distance from the affine hull of an edge of $Q \cap W$ to the next vertex in clockwise order,
which is the quantity $\gamma$ mentioned in \eqref{eq:gamma}. 

As a warm-up, we first discuss the two-dimensional case $d=2$, which will be explained in detail in \Cref{sec:ub-2d}.
In this case, $W$ will become the entire two-dimensional space and will disappear from consideration. 
Therefore, we can focus on lower-bounding the distance from any edge $e$ of the polygon $Q = \conv(a_1, \ldots, a_n)$ to any other vertices, i.e., it suffices to find $\gamma > 0$ such that for any $I \in \binom{[n]}{2}$,
\begin{align*}
    \Pr[ \dist(\{a_j : j \notin I\}, \aff(e_I)) \geq \gamma \mid E_I] \geq 0.1.
\end{align*}
We can obtain a lower bound on this quantity for any $L$-log-Lipschitz distribution.
Through an appropriate coordinate transformation we prove that,
irrespective of the values of $a_j, j \notin I$,
the distance $\dist(\conv(a_j, j \notin I), \aff(e_I))$, conditional on being non-zero,
follows from a $2L$-log-Lipschitz distribution.
We will calculate that we may choose $\gamma = \Omega(1/L)$.
This result can be applied with small changes to our Gaussian random variables $a_1, \ldots, a_n$
by substituting for the Laplace-Gaussian distribution following \cite{DH18}.
Plugging into \eqref{eq:counting-strategy-new}, we get that in the
two-dimensional case,
$\E[\edges(Q)] \leq O(\sqrt{(1+\sigma\sqrt{\log n})/(\sigma/\sqrt{\log n})}) = O(\sqrt[4]{\log n} / \sqrt{\sigma} + \sqrt{ \log n})$
as stated in \Cref{thm:twod-intro}.

\paragraph{Multi-Dimensional Upper Bound}

As in the two-dimensional case, we must lower-bound of the line-to-vertex distance $\dist(p_{I^+}, \aff(e_I))$ (see \eqref{eq:gamma}) of $Q \cap W$.
Analysing this, however, becomes more challenging.
In two-dimensional case, each edge is the convex hull of two vertices among $a_1, \ldots, a_n$ and is independent of the other potential vertices $a_i$. 
In contrast, if $d \geq 3$ then each edge on $Q \cap W$ will be the intersection between $W$ and a $(d-1)$-dimensional facet of $Q$ (which is the convex hull of $d$ vertices), and each vertex will be the intersection between $W$ and a $(d-2)$-dimensional ridge of $Q$ (which is the convex hull of $d-1$ vertices). 
So the distributions of $e_I$ and $p_{I^+}$ are correlated.

To overcome these difficulties, we first factor $\dist(p_{I^+}, \aff(e_I))$ into the product
of separate parts which are easier to analyse, and then use log-Lipschitzness of $a_1, \ldots, a_n$ to lower-bound each part with good probability.
Fix without loss of generality $e = e_{[d]} = \conv(a_1, \ldots, a_d) \cap W$, as the potential edge of interest. 
Consider the second endpoint $p$ on $e$ in
clockwise direction and let $J \in \binom{[d]}{d-1}$ be the index set
such that $\{p\} = \conv(a_j : j \in J) \cap W$.
Let $p' = \conv(a_i : i \in J') \cap
W$ (with $J' \in \binom{[n]}{d-1}$) be the vertex next to the edge $e$ in clockwise direction.
From the non-degeneracy conditions, we know that $J'$ only differs from $J$ with
two vertices almost surely, so we can assume without loss of generality that $J
= \{2, \ldots, {d}\}$ and $J' = \{3, \ldots, d\} \cup \{k\}$ for some
$k \in \{d+1, \ldots, n\}$.

The main idea of our analysis is the observation that if the Euclidean diameter of $Q$ is
bounded above by $O(1)$ (which happens with overwhelming probability due to
Gaussian tail bound), then we can lower bound the two-dimensional line-to-vertex
distance $\dist(p', \aff(e))$ by the product of two distances $\Omega(\delta \cdot r)$, where
\begin{itemize}
    \item $\delta$ is the $d$-dimensional distance from the facet-defining hyperplane $\aff(a_1, \ldots, a_d)$ containing $e$, to the vertices of $Q$ that are not in the facet, i.e., 
    $$\delta = \dist(\conv(a_{d+1}, \ldots, a_n), \aff(a_1, \ldots, a_d));$$
    \item $r$ is the distance from the boundary of the ridge $\partial \conv(a_2, \ldots, a_d)$ to the one-dimensional line $\aff(e)$, i.e., $r = \dist(\aff(e), \partial \conv(a_2, \ldots, a_d))$.
\end{itemize}
We will give the formal statement of the distance splitting lemma in \Cref{lem:det}.


It then remains to show that $r$ and $\delta$ are both unlikely to be too small. 
Similar to the two-dimensional case, we will also use log-Lipschitzness of $a_1, \ldots, a_d$ as our main tool.
\begin{itemize}
    \item After specifying $\aff(a_1,\dots,a_d)$, the lower bound on $\delta$ is derived from
        the remaining randomness in $a_{d+1},\dots,a_n$. Here we use both the $L$-log-Lipschitzness
        of the distributions of $a_{d+1},\dots,a_n$, as well as the knowledge that
        we only need to consider hyperplanes $\aff(a_1,\dots,a_d)$ which are
        likely to have all points $a_{d+1},\dots,a_n$ on its one side.
        This is made precise in \Cref{sub:rand-delta-lb}.
    \item The lower bound of $r$ resembles a more technical version of the proof of the ``distance lemma'' of \cite{ST04}.

    Write $\pi : \aff(a_1,\dots,a_d) \to \aff(e)^\perp$
    for the orthogonal projection sending $e$ to a single point $p = \pi(e)$.
    With this notation we have $r = \dist(p, \partial\conv(\pi(a_2),\dots,\pi(a_d)))$.
    
        First we show that each vertex of the projected ridge $\conv(\pi(a_2), \ldots, \pi(a_d))$ is a distance $\Omega(1/(d^2L))$ away from the hyperplane spanned by its other vertices.
        That means that the projected $(d-2)$-dimensional ridge $\conv(\pi(a_2),\dots,\pi(a_d))$ is likely to be wide in every direction.

        In the second step, we show that $W$ intersects $\conv(a_2,\ldots,a_d)$ ``through the center''.
        Specifically, we show that if we write $p = \sum_{i \in [d]} \lambda_i \pi(a_i)$ as the convex combination of $a_2, \cdots, a_d$, 
        then with constant probability $\min_{i \in [d]}\lambda_i \geq \Omega(1/(d^2L))$.

        The product of the lower bounds $
        \Omega(1/(d^2L))$
        and $\Omega(1/(d^2L))$ for the individual quantities
        will yield a lower bound for $r$
        with good probability.
        This is included in \Cref{sub:rand-r-lb}.
\end{itemize}

We conclude our main result of the line-to-vertex distance lower bound in \Cref{lem:normalvector}. Readers are referred to \Cref{sec:ub-multi-new} for detailed discussions.

\subsubsection{Smoothed Complexity Lower Bound}

Our smoothed complexity lower bound (\Cref{thm:lb-main}) is based
on two geometric observations using the inner and outer radius of the perturbed polytope.
For a polytope $P$ and a unit norm ball $\mathbb{B}$, its outer radius with center $x$ is the smallest $R$ such that there exists  $P \subset R \cdot \mathbb{B} + x$.
Its inner radius with center $x$ is the largest $r$ such that $r \cdot \mathbb{B} + x \subset P$.

The first observation is that, if a two-dimensional polygon $T$ has
inner $\ell_2$-radius of $r$ and outer $\ell_2$-radius of $(1 + \epsilon) \cdot r$ with respect to the same center, then $T$ has
at least $\Omega(\epsilon^{-1/2})$ edges (\Cref{lem:edgecount}). This comes
from the fact that every edge of $T$ has length at most $O(r\sqrt{\epsilon})$,
whereas the perimeter of $T$ is at least $2\pi r$.

Second, if two polytopes $Q, \tilde{Q} \subseteq \R^d$, each with inner radius $t$,
have Hausdorff distance $\epsilon < t/2$ to each other, 
then $Q$ will approximate $\tilde{Q}$ in the way that (\Cref{lem:hdist})
$$(1 - 2\epsilon/t)\cdot Q \subseteq \tilde{Q} \subseteq (1 + \epsilon/t)\cdot Q.$$
In particular, for any two-dimensional linear subspace $W$ we have
\begin{equation}\label{eq:introlowerbound}
    (1 - 2\eps/t)\cdot Q \cap W \subseteq \tilde{Q} \cap W \subseteq (1 + \eps/t)\cdot Q \cap W.
\end{equation}

To prove our lower bound, we construct a polytope 
$Q = \conv(\bar{a}_1, \ldots, \bar{a}_n) \subset \R^d$
and a two-dimensional linear subspace $W$ such that
$\Omega(1) \cdot  \mathbb{B}^d_1 \subset Q \subset \mathbb{B}^d_1$,
and $Q \cap W$ has outer $\ell_2$-radius $r > 0$ and inner $\ell_2$-radius $\frac{r}{(1+4^{-d})}$.
Perturbing the vertices of $Q$, we obtain
$\tilde{Q} = \conv(a_1,\ldots, a_n)$, where $a_i \sim N(\bar a_i, \sigma^2 I_{d\times d})$ for each $i \in [n]$.
Note that $Q \subset \mathbb B_1^d$ implies that $\bar a_1,\dots,\bar a_n$ satisfy
the normalization requirement in \eqref{eq:def-shadow-bound}.
With high probability the Hausdorff distance in $\ell_1$ between $Q$ and $\tilde Q$ is bounded by
$\max_{i \in [n]} \|a_i - \tilde{a}_i\|_1 \leq O(\sigma d\sqrt{\log n})$.
Using \eqref{eq:introlowerbound}, we bound the inner and outer radius of $\tilde Q \cap W$.
A lower bound on the number of edges of $\tilde Q \cap W$ then follows from \Cref{lem:edgecount} as described above.

We remark that the polytopes $Q = \conv(a_1,\dots,a_n) \subset \R^d$ with $n = O(d)$ and two-dimensional subspaces $W$
such that $Q \cap W$ has inner $\ell_2$-radius $\frac{r}{1+4^{-d}}$ and outer $\ell_2$-radius $r > 0$
were first obtained by \cite{BN01} as an \emph{extended formulation} for
a regular $2^k$-gon with $O(k)$ variables and $O(k)$ inequalities.
Their polytope, however, has an outer and inner radius that differ by a factor $2^{\Omega(k)}$,
meaning that we cannot apply \Cref{lem:hdist} for $\sigma > 2^{-k}$.
We construct an alternative extended formulation where the ratio between
inner and outer $\ell_1$-radius is only $O(1)$.
With an appropriate scaling to get $Q \subset B_1^d$,
we find that the perturbed polytope $\tilde Q$ will have intersection
$\tilde Q \cap W$ with inner radius $\frac{r}{1+4^{-d}}(1-2\epsilon/t)$ and outer radius $(1+\epsilon/t)r$,
where $\epsilon = O(\sigma d \sqrt{\log n})$, and thus has
$\Omega(\min(\frac{1}{\sqrt \epsilon},2^d))$ edges,
with high probability.

\section{Preliminaries}
We write $\mathbf 1_n$ for the all-ones vector in $\R^n$, $\mathbf 0_n$ for the all-zeroes vector in $\R^n$,
and $I_{n \times n}$ for the $n$ by $n$ identity matrix. The standard basis vectors are denoted by $e_1,\dots,e_n \in \R^n$.
For a linear subspace $W \subset \R^n$ we denote the orthogonal projection onto $W$ by $\pi_W$.
The subspace of vectors orthogonal to a given vector $\omega \in \R^n$ is denoted $\omega^\perp$.

For a vector $x \in \R^n$, the $\ell_1$ norm is $\|x\|_1 = \sum_{i\in[n]} |x_i|$, the $\ell_2$-norm is $\|x\|_2 = \sqrt{\sum_{i\in[n]} x_i^2}$
and the $\ell_\infty$-norm is $\|x\|_\infty = \max_{i\in[n]} |x_i|$.
A norm without a subscript is always the $\ell_2$-norm.
Given $p \geq 1, d \in \Z_+$, define $\mathbb{B}_p^d = \{x \in \R^d : \|x\|_p \leq 1\}$ as the $d$-dimensional unit ball of $\ell_p$ norm.

We write $[n] := \{1,\dots,n\}$. By $\conv(a_1,\dots,a_n) = \conv(a_i : i \in [n])$ we denote the convex hull of vectors $a_1,\dots,a_n$,
and similarly we write for the affine hull as $\aff(a_i : i \in [n])$.
For sets $A, B \subset \R^d$, the distance between the two is
$\dist(A, B) = \inf_{a \in A, b \in B} \|a-b\|$.
For a point $x \in \R^d$ we write $\dist(x, A) = \dist(A, x) = \dist(A, \{x\})$.

We say a random event happens \emph{almost surely} if it occurs with probability $1$.

For a convex body $K \in \R^d$,  we define $\partial K \subset \mathrm{span}(K)$ as the boundary of $K$ in the linear subspace spanned by the vectors in $K$.

\subsection{Polytopes}

\begin{definition}[Polytope]
    A convex set $P \subset \mathbb{R}^d$ is a polyhedron if it can be expressed as $P = \{x \in \mathbb{R}^d : A x \leq b\}$ for some $A \in \mathbb{R}^{n \times d}$, $b \in \mathbb{R}^n$ where $n \in \mathbb{Z}_+$.
    A bounded polyhedron is also called a polytope.
\end{definition}

\begin{definition}[Valid Condition and Facet]
    Given a polytope $P \subset \R^d$, vector $c\in \R^d$ and $d\in \R$, we say the linear condition $x^\top c \leq d$ is valid for $P$ if  the condition holds for all $x\in P$.
    
    A subset $F \subset P$ is called a face of $P$ if $F = P \cap \{x\in \R^d : x^\top c = d\} \neq \emptyset$ for some valid condition $x^\top c \leq d$.
    A facet is a $d-1$-dimensional face, a ridge is a $d-2$-dimensional face, an edge is a $1$-dimensional face and a vertex is a $0$-dimensional face.
\end{definition}

\begin{definition}[Polar dual of a convex body]\label{def:smooth_polar}
    Let $P \subset \R^d$ be a convex set. Define the polar dual of $P$ as 
    \begin{align*}
        P^\circ = \{y \in \R^d : y^\top x \leq 1, \forall x \in P\}.
    \end{align*}
\end{definition}

We state some basic facts from duality theory:
\begin{fact}[Polar dual of polytope]
    Let $P \subset \R^d$ be a polytope given by the linear system $P=\{x\in \mathbb{R}^d, A x \le \mathbf{1}_n\} \subset \R^d$ for some $A\in \R^{n\times d}$. Then the polar dual of $P$ equals to
    \begin{align*}
        P^\circ := \mathrm{conv}(\mathbf{0}_d,a_1,a_2,\ldots,a_n).
    \end{align*}
    where $a_1, \ldots, a_n \in \R^d$ are the row vectors of $A$.
    Moreover, $P$ is bounded if and only if $\mathbf{0}_d \in \operatorname{int}(\conv(a_1, \ldots, a_n))$.
\end{fact}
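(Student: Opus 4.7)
The statement has two parts: the explicit formula $P^\circ = \conv(\mathbf 0_d, a_1, \ldots, a_n)$, and a boundedness characterization. My plan is to prove the set equality by double inclusion, and then derive the boundedness statement as a consequence of standard polarity (with the remark that $\conv(\mathbf 0_d, a_1, \ldots, a_n)$ is automatically bounded as a finite convex hull, so the natural reading of the ``moreover'' clause is that it is \emph{$P$} whose boundedness is characterized; the argument below handles both readings).

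For the inclusion $\conv(\mathbf 0_d, a_1, \ldots, a_n) \subseteq P^\circ$, observe that for each $i$ the statement ``$a_i \in P^\circ$'' is by definition ``$\sprod{a_i}{x} \leq 1$ for every $x \in P$,'' which is exactly the $i$-th row of $Ax \leq \mathbf 1_n$ and hence holds by definition of $P$. Clearly $\mathbf 0_d \in P^\circ$ as well. Since $P^\circ$ is convex (an intersection of halfspaces indexed by $x \in P$), it contains the convex hull of $\mathbf 0_d, a_1, \ldots, a_n$.

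For the reverse inclusion I would argue by contrapositive using the separating hyperplane theorem. Suppose $y \notin \conv(\mathbf 0_d, a_1, \ldots, a_n)$; since the latter set is closed and convex, there exist $x \in \R^d$ and $\alpha \in \R$ with $\sprod{y}{x} > \alpha$ and $\sprod{z}{x} \leq \alpha$ for all $z$ in the convex hull. Plugging in $z = \mathbf 0_d$ gives $\alpha \geq 0$; rescaling $x$ if $\alpha > 0$ (or scaling up $x$ sufficiently in the $\alpha = 0$ case so that $\sprod{y}{x} > 1$ while $\sprod{a_i}{x} \leq 0 \leq 1$) we may assume $\sprod{a_i}{x} \leq 1$ for all $i$ and $\sprod{y}{x} > 1$. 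Then $x \in P$ witnesses $y \notin P^\circ$, completing the proof of equality.

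For the boundedness claim I would appeal to the standard polarity fact that for any closed convex set $K$ containing the origin, $K$ is bounded if and only if $\mathbf 0 \in \operatorname{int}(K^\circ)$. Applied with $K = P$, and using that $P^{\circ\circ} = P$ together with the formula proved above, boundedness of $P$ becomes equivalent to $\mathbf 0_d \in \operatorname{int}(\conv(\mathbf 0_d, a_1, \ldots, a_n))$, which in turn equals $\operatorname{int}(\conv(a_1, \ldots, a_n))$ since adjoining a point that already lies in the interior leaves the interior unchanged. The main subtlety in the whole argument is the separating-hyperplane step, where care is needed when the separating value $\alpha$ equals $0$; all other steps are routine.
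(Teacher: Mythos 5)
The paper records this Fact without proof, so there is no in-paper argument to compare against; your double-inclusion proof via separating hyperplanes, together with the standard polarity criterion for boundedness, is the natural route, and your observation that the ``moreover'' clause must be characterizing boundedness of $P$ rather than of $P^\circ$ (the latter being a finite convex hull, hence always bounded) is well taken.

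There is, however, one genuine misstep in your last sentence. The set equality $\operatorname{int}\big(\conv(\mathbf 0_d, a_1,\dots,a_n)\big) = \operatorname{int}\big(\conv(a_1,\dots,a_n)\big)$ that you invoke is false in general: take $d=n=1$ and $a_1=1$, where the left-hand side is $(0,1)$ but the right-hand side is empty. The justification ``adjoining a point that already lies in the interior leaves the interior unchanged'' presupposes $\mathbf 0_d \in \operatorname{int}(\conv(a_1,\dots,a_n))$, which is precisely the condition being characterized, so as written your argument only yields one direction of the biconditional. What you actually need, and what is true, is the \emph{membership} equivalence $\mathbf 0_d \in \operatorname{int}(\conv(\mathbf 0_d, a_1,\dots,a_n))$ if and only if $\mathbf 0_d \in \operatorname{int}(\conv(a_1,\dots,a_n))$. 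The backward implication follows exactly from your observation. For the forward implication you need an additional step: if $\mathbf 0_d \notin \operatorname{int}(\conv(a_1,\dots,a_n))$, then either $\mathbf 0_d$ lies outside $\conv(a_1,\dots,a_n)$ and a strict separator gives $v$ with $\sprod{v}{a_i} > 0$ for all $i$, or $\mathbf 0_d$ lies on its boundary and a supporting hyperplane at $\mathbf 0_d$ gives $v$ with $\sprod{v}{a_i} \geq 0$ for all $i$; in either case $\sprod{v}{\cdot}$ is nonnegative on $\conv(\mathbf 0_d, a_1,\dots,a_n)$ and vanishes at $\mathbf 0_d$, so $\mathbf 0_d$ is a boundary point of that hull too. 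With this repair the proof is complete; the double-inclusion argument for the formula itself, including the careful treatment of the $\alpha = 0$ case in the separating-hyperplane step, is sound.
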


\begin{fact}\label{fact:dual-containing}
    Let $P, Q \subset \R^d$ be two convex sets such that $P \subseteq Q$. Then $Q^\circ \subseteq P^\circ$.
\end{fact}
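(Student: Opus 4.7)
The plan is a direct element chase from the definition of polar dual. I would pick an arbitrary point $y \in Q^\circ$ and aim to show $y \in P^\circ$. By \Cref{def:smooth_polar}, membership $y \in Q^\circ$ unpacks to the statement that $\sprod{y}{x} \leq 1$ for every $x \in Q$. Using the hypothesis $P \subseteq Q$, every $x \in P$ is in particular a point of $Q$, so the inequality $\sprod{y}{x} \leq 1$ continues to hold for every $x \in P$. This is precisely the defining condition for membership in $P^\circ$, so $y \in P^\circ$. Since $y \in Q^\circ$ was arbitrary, we conclude $Q^\circ \subseteq P^\circ$.

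There is no real obstacle: polarity is defined via a universal quantifier over the underlying set, and shrinking the set over which the universal quantifier ranges can only make the condition easier to satisfy, which is exactly what produces the order-reversing behavior. It is worth remarking that the argument uses nothing beyond the definition itself, so the fact is valid for arbitrary (not necessarily polyhedral, not necessarily bounded) convex sets $P, Q \subset \R^d$, and in particular does not require either set to contain the origin. A one-line proof along these lines is the entire argument.
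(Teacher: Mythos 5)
Your proof is correct and is the standard one-line argument: polarity is defined by a universal quantifier over the underlying set, so shrinking the set weakens the condition, giving order reversal. The paper states this as a Fact without proof precisely because it is this elementary, so there is nothing to compare against; your element-chase is exactly the argument one would write.
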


\begin{fact}\label{fact:dual-intersection}
    Let $P \subset \R^d$ be a polytope, and let $W \subset \R^d$ be any $k \leq d$-dimensional linear subspace. Then the polar dual of $\pi_W(P)$, considered as a subset of the linear space $W$, is equal to $P^\circ \cap W$.
\end{fact}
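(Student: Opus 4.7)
The plan is to prove both inclusions simultaneously by a direct element-chasing argument, exploiting the fact that orthogonal projection onto $W$ is self-adjoint and acts as the identity on $W$. Throughout I write $(\cdot)^{\circ_W}$ for the polar dual taken in the ambient space $W$ (so $(\pi_W(P))^{\circ_W} \subset W$), to distinguish it from the polar $(\cdot)^\circ$ taken in $\R^d$.

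First I would take an arbitrary $y \in W$ and unfold the definition: $y \in (\pi_W(P))^{\circ_W}$ iff $\sprod{y}{z} \leq 1$ for every $z \in \pi_W(P)$, which by definition of $\pi_W(P)$ is the same as $\sprod{y}{\pi_W(x)} \leq 1$ for every $x \in P$. The key calculation is then the identity
\begin{equation*}
    \sprod{y}{\pi_W(x)} = \sprod{\pi_W(y)}{x} = \sprod{y}{x},
\end{equation*}
valid for any $y \in W$ and $x \in \R^d$; the first equality uses that $\pi_W$ is an orthogonal projection (hence a symmetric linear map), and the second uses that $y \in W$ implies $\pi_W(y) = y$. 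Substituting, $y \in (\pi_W(P))^{\circ_W}$ iff $\sprod{y}{x} \leq 1$ for every $x \in P$, which is precisely the statement $y \in P^\circ$. Combined with $y \in W$, this yields $y \in P^\circ \cap W$, and the chain of equivalences shows the reverse inclusion as well.

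There is no real obstacle here beyond bookkeeping: the only thing to be careful about is that the polar on the left is interpreted inside the subspace $W$ (so $P^\circ \cap W$ is automatically a subset of $W$ and both sides live in the same ambient space), and that the $\leq 1$ normalization is the same in both duals because the inner product on $W$ is inherited from that on $\R^d$. No nondegeneracy or boundedness assumption on $P$ is needed for this identity.
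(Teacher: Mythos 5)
The paper lists this among ``basic facts from duality theory'' and supplies no proof, so there is no paper argument to compare against. Your element-chasing proof is correct and is the standard one: for $y \in W$, the identity $\langle y, \pi_W(x)\rangle = \langle \pi_W(y), x\rangle = \langle y, x\rangle$ (self-adjointness of the orthogonal projection together with $\pi_W(y)=y$ for $y\in W$) turns the defining inequalities of the polar of $\pi_W(P)$ within $W$ directly into those of $P^\circ \cap W$, and you are right that neither boundedness nor non-degeneracy of $P$ is needed.
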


\subsection{Probability Distributions}\label{sec:laplace-gaussian}
All probability distributions considered in this paper will admit a probability density
function with respect to the Lebesgue measure.

\begin{definition}[Gaussian distribution]
    The $d$-dimensional Gaussian distribution $\mathcal{N}_d(\Bar{a}, \sigma^2I)$ with support $\R^d$, mean $\Bar{a}\in \R^d$, and standard deviation $\sigma$ is defined by the probability density function
    \begin{align*}
        (2\pi)^{-d/2} \cdot \exp\left(-\|s - \Bar{a}\|^2 / (2\sigma^2) \right).
    \end{align*}
    at every $s \in \R^d$.
\end{definition}

A basic property of Gaussian distribution is the following strong tail bound:
\begin{lemma}[Gaussian tail bound]\label{lem:gaussian-tail}
    Let $x \in \R^d$ be a random vector sampled from $\mathcal{N}_d(\mathbf{0}, \sigma^2 I)$.
    For any $t\geq 1$ and any $\mathbf{\theta}\in \mathbb{S}^{d-1}$ where $\mathbb{S}^{d-1}$ is the unit sphere in the $d$-dimensional space, we have
        \begin{align*}
            \Pr[\|x\| \geq t\sigma \sqrt{d}] & ~ \leq \exp(-(d/2)(t-1)^2).
        \end{align*}
\end{lemma}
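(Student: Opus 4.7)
The plan is to reduce to the isotropic standard Gaussian case by homogeneity and then invoke Gaussian concentration of Lipschitz functions. First, since $y := x/\sigma \sim \mathcal{N}_d(\mathbf 0, I)$, the event $\{\|x\| \geq t\sigma\sqrt d\}$ coincides with $\{\|y\| \geq t\sqrt d\}$, so it suffices to prove the bound for the standard Gaussian in $\R^d$.

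Next I would observe that the map $f(y) := \|y\|_2$ is $1$-Lipschitz (by the reverse triangle inequality) and apply the Borell--Tsirelson--Ibragimov--Sudakov inequality, which states that for any $1$-Lipschitz function $f : \R^d \to \R$ and $y \sim \mathcal{N}_d(\mathbf 0, I)$,
\[
\Pr\bigl[f(y) \geq \E[f(y)] + s\bigr] \leq \exp(-s^2/2) \qquad \text{for every } s \geq 0.
\]
The mean is controlled by Jensen's inequality applied to the concave function $\sqrt{\cdot}$: $\E[\|y\|] \leq \sqrt{\E[\|y\|^2]} = \sqrt d$.

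Finally, for $t \geq 1$, choose $s = (t-1)\sqrt d \geq 0$ and combine the two previous displays:
\[
\Pr\bigl[\|y\| \geq t\sqrt d\bigr]
\leq \Pr\bigl[\|y\| \geq \E[\|y\|] + (t-1)\sqrt d\bigr]
\leq \exp\!\bigl(-(t-1)^2 d/2\bigr),
\]
which is the claimed bound after undoing the $\sigma$-rescaling.

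There is essentially no substantive obstacle: the argument is a textbook application of Gaussian Lipschitz concentration, and the hypothesis $t \geq 1$ is used only to ensure $s \geq 0$. The parameter $\theta \in \mathbb{S}^{d-1}$ that appears in the statement is not used in the conclusion and seems to be a vestigial artifact from a more general bound (e.g.\ a directional projection estimate $\Pr[|\langle \theta, x\rangle| \geq t\sigma]$, which would follow from the same concentration principle applied to the $1$-Lipschitz linear functional $y \mapsto \langle \theta, y\rangle$). If a fully self-contained argument were desired, one could replace the Borell--TIS inequality by a direct chi-squared tail bound (e.g.\ the Laurent--Massart inequality) on $\|y\|^2 = \sum_{i=1}^d y_i^2$ and verify the stated exponent by elementary manipulation, but the concentration route above is the cleanest.
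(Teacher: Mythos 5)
Your proof is correct, and since the paper states this lemma as a standard preliminary without supplying its own proof, there is nothing to compare against: the Borell--TIS route with Jensen's inequality to bound $\E[\|y\|] \leq \sqrt d$ is the textbook argument and gives exactly the claimed exponent. Your observation that $\theta$ is vestigial in this version of the statement is also accurate (it survives from the fuller tail bound stated in the commented-out Fact in the source, where a directional bound $\Pr[|x^\top\theta| \geq t\sigma] \leq \exp(-t^2/2)$ appears alongside the norm bound).
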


From this, one can upper-bound the maximum norm over $n$ Gaussian random vectors with mean $\mathbf{0}_d$ and variance $\sigma^2$ by $4\sigma \sqrt{d \log n}$ with dominating probability.
\begin{corollary}[Global diameter of Gaussian random variables]\label{cor:gaussian-globaldiam}
    For any $n \geq 2$,
    let $x_1, \ldots, x_n \in \R^d$ be random variables where each $x_i \sim \mathcal{N}_d(\mathbf{0}_d, \sigma^2 I)$. Then with probability at least $1 - \binom{n}{d}^{-1}$, $\max_{i \in [n]}\|x_i\| \leq 4 \sigma \sqrt{d\log n}$.
\end{corollary}
\begin{proof}
    From \Cref{lem:gaussian-tail}, we have for each $i \in [n]$ that
    \begin{align*}
        \Pr[\|x_i\| > 4\sigma \sqrt{d\log n}] \leq \exp(-\frac{d(4\sqrt{\log n} - 1)^2}{2}) \leq \exp(-2d\log n) \leq n^{-1} \cdot \binom{n}{d}^{-1}.
    \end{align*}    
    Then the statement follows from the union bound over all choices of $i \in [n]$.
\end{proof}

A helpful technical substitute for the Gaussian distribution was introduced by \cite{DH18}:
\begin{definition}[$(\sigma, r)$-Laplace-Gaussian distribution]\label{def:laplace-gaussian-distribution}
    For any $\sigma, r > 0, \Bar{a} \in \R^d$, define the $d$-dimensional $(\sigma, r)$-Laplace-Gaussian distribution with mean $\Bar{a}$, or $LG_d(\Bar{a}, \sigma, r)$, if its density function is proportional to
    \begin{align}
        f(x) = \begin{cases}
        \exp\left( -\frac{\|x - \Bar{a}\|^2 }{2\sigma^2} \right) & \text{, if } \|x - \Bar{a}\| \leq r\sigma \\
        \exp\left( -\frac{\|x - \Bar{a}\| r }{\sigma} + \frac{r^2}{2} \right) & \text{, if } \|x - \Bar{a}\| > r\sigma.
        \end{cases} \label{eq:laplace-gaussian}
    \end{align}
\end{definition}

The Laplace-Gaussian random variables satisfies many desirable properties. Like the
Gaussian distribution, the distance to its mean is bounded above with high
probability. Moreover, its probability density is log-Lipschitz throughout its
domain (as a contrast, the probability density of Gaussian distribution is only
log-Lipschitz close to the expectation).  The definition of $L$-log-Lipschitz
is as follows:

\begin{definition}[$L$-log-Lipschitz random variable]\label{def:log-lipschitz}
    Given $L > 0$, we say a random variable $x \in \R^d$ with probability density $\mu$ is $L$-log-Lipschitz (or $\mu$ is $L$-log-Lipschitz), if for all $x, y\in \R^d$, we have
    \begin{align*}
        |\log(\mu(x)) - \log(\mu(y))| \leq L\|x - y\|,
    \end{align*}
    or equivalently, $\mu(x) / \mu(y) \leq \exp(L\|x - y\|)$.
\end{definition}

\begin{lemma}[Properties of Laplace-Gaussian random variables, Lemmas 3.7 and 3.33 of \cite{DH18}]\label{lem:properties-laplace-gaussian}
    Given any $n \geq d$, $\sigma > 0$. Let $a_1, \ldots, a_n \in \R^d$ be independent random variables each sampled form $LG_d(\Bar{a}, \sigma, 4\sigma \sqrt{d \log n})$ (see Definition \ref{def:laplace-gaussian-distribution}).
    Then $a_1, \ldots, a_n$ satisfy the follows:
    \begin{enumerate}
        \item (Log-Lipschitzness) For each $i \in [n]$, the probability density of $a_i$ is $(4\sigma^{-1} \sqrt{d\log n})$-log-Lipschitz.
        \item (Bounded maximum norm) 
        With probability at least $1 - \binom{n}{d}^{-1}$, $\max_{i \in n} \|a_i - \bar{a}_i\| \leq 4\sigma \sqrt{d \log n}$.
        \item (Bounded expected radius of projection)  For any $k \leq d$, any fixed $k$-dimensional linear subspace $H \subset \R^d$, we have $\E[\max_{i \in [n]}\|\pi_H(a_i - \bar{a}_i)\|] \leq 8\sigma\sqrt{k\log n}$.
    \end{enumerate}
\end{lemma}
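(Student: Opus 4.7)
The plan is to verify each of the three claims by direct analysis of the LG density, following \cite{DH18}. Write $r = 4\sqrt{d \log n}$ so that the density at $x \in \R^d$ is proportional to $f(x) = \exp(\phi(\|x - \bar a\|))$, where the radial profile $\phi$ satisfies $\phi(\rho) = -\rho^2/(2\sigma^2)$ for $\rho \leq r\sigma$ and $\phi(\rho) = -\rho r/\sigma + r^2/2$ for $\rho > r\sigma$. The two pieces are matched precisely so that $\phi$ and $\phi'$ agree at the seam $\rho = r\sigma$, making $\phi$ continuously differentiable and monotonically decreasing.

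For Claim~1, I would compute $\nabla \log f(x) = \phi'(\|x-\bar a\|)\cdot (x-\bar a)/\|x-\bar a\|$, so that $\|\nabla \log f(x)\| = |\phi'(\|x-\bar a\|)|$. On the inner region $|\phi'(\rho)| = \rho/\sigma^2 \leq r/\sigma$, while on the outer region $|\phi'(\rho)| = r/\sigma$ exactly. Hence $\|\nabla \log f\| \leq r/\sigma = 4\sigma^{-1}\sqrt{d\log n}$ everywhere, and integrating along line segments yields the required log-Lipschitz estimate.

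For Claim~2, it suffices to show $\Pr[\|a_i - \bar a\| > r\sigma] \leq \binom{n}{d}^{-1}/n$ for each $i$ and then union bound. Split the normalizer as $Z = Z_{\mathrm{in}} + Z_{\mathrm{out}}$ by the cutoff radius $r\sigma$. The inner piece equals $(2\pi\sigma^2)^{d/2}$ times the probability that a $d$-dimensional standard Gaussian lies in a ball of radius $r$; since $r \gg \sqrt{d}$, by \Cref{lem:gaussian-tail} this gives $Z_{\mathrm{in}} \geq \tfrac{1}{2}(2\pi\sigma^2)^{d/2}$. Switching to polar coordinates and substituting $u = \rho r/\sigma$, we get $Z_{\mathrm{out}} = \omega_{d-1}\, e^{r^2/2}(\sigma/r)^d \int_{r^2}^\infty u^{d-1} e^{-u}\,du$, and the incomplete Gamma factor is at most $2 r^{2(d-1)} e^{-r^2}$ for $r^2 \geq 2d$. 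These combine to $Z_{\mathrm{out}}/Z \leq e^{-r^2/2 + O(d\log r)}$, and since $r^2/2 = 8d\log n$ this is comfortably below $\binom{n}{d}^{-1}/n$.

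For Claim~3, reduce to $\bar a = 0$ using the rotational symmetry of $f$. The marginal density $g(y)$ of $\pi_H(a_i)$ on $H$ equals $\int_{\R^{d-k}} f(\sqrt{\|y\|^2 + \|z\|^2})\,dz$. In the inner region $\|y\| \leq r\sigma$ the Gaussian factorization $\exp(-(\|y\|^2+\|z\|^2)/(2\sigma^2)) = \exp(-\|y\|^2/(2\sigma^2))\exp(-\|z\|^2/(2\sigma^2))$ gives $g(y) \leq Z^{-1}(2\pi\sigma^2)^{(d-k)/2}\exp(-\|y\|^2/(2\sigma^2))$, plus an outer-region contribution controlled as in Claim~2. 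Hence $g$ is pointwise at most $O(1)$ times the $\mathcal N_k(0, \sigma^2 I_k)$ density on the ball of radius $r\sigma$, yielding $\Pr[\|\pi_H(a_i)\| > t\sigma\sqrt{k}] \leq O(e^{-k(t-1)^2/2})$ for $t \leq r/\sqrt{k}$; outside this range the tail is handled directly by Claim~2. Integrating this tail against $dt$ and taking a union bound over $i$ delivers $\E[\max_i \|\pi_H(a_i)\|] \leq 4\sigma\sqrt{k\log n}$. The main obstacle lies precisely in this density-domination step, since the projection of an LG variable is not itself LG; extracting the $k$ (rather than $d$) in the projected tail requires carefully integrating out the $(d-k)$ perpendicular coordinates in both the Gaussian and Laplace regions rather than relying on the trivial bound $\|\pi_H(a_i)\| \leq \|a_i\|$.
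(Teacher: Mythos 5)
The paper does not prove this lemma; it imports it verbatim from \cite{DH18} (Lemma 45), so there is no internal proof to compare against. Your proposal is thus an independent reconstruction, and it is largely sound, but I will flag where it is loose.

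For Claim~1 your gradient computation is correct: $\phi$ is $C^1$ at the seam because both $\phi$ and $\phi'$ are matched there, $\|\nabla \log f\| = |\phi'(\|x-\bar a\|)| \leq r/\sigma$ on both pieces, and integrating the gradient bound along segments gives the stated log-Lipschitz constant. Claim~2 is also essentially right: the split $Z = Z_{\mathrm{in}} + Z_{\mathrm{out}}$, the polar-coordinate substitution, and the incomplete-Gamma bound $\int_{r^2}^\infty u^{d-1}e^{-u}\,du \leq 2r^{2(d-1)}e^{-r^2}$ (valid once $r^2 \geq 2d$) combine to give $\Pr[\|a_i - \bar a_i\| > r\sigma] \leq e^{-r^2/2 + O(d\log r)}$, which with $r^2/2 = 8d\log n$ is comfortably below $\binom{n}{d}^{-1}/n$; a union bound finishes it. (Note the lemma as printed says $\|a_i\|$ rather than $\|a_i - \bar a_i\|$; from how the paper later uses it — e.g.\ $\E[\max_i\|\hat a_i\|] \leq 1 + 4\sigma\sqrt{2\log n}$ — the intended content is the deviation bound you actually prove.)

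The genuine soft spot is Claim~3. Your density-domination claim for the marginal $g(y)$ only treats the inner region cleanly; you defer the outer-region contribution to ``controlled as in Claim~2,'' but that contribution is now a $(d-k)$-dimensional integral over a shell whose inner radius $\sqrt{(r\sigma)^2 - \|y\|^2}$ shrinks to zero as $\|y\| \uparrow r\sigma$, so the bound does not reduce to the same computation and would need its own estimate. A cleaner route that avoids this entirely: split on the event $\{\|a_i - \bar a_i\| > r\sigma\}$, whose probability is controlled by Claim~2. On the complement, the LG law is exactly a truncated Gaussian with normalizer $Z \geq \tfrac12(2\pi\sigma^2)^{d/2}$, so the conditional law is dominated by twice the $\mathcal N_d(0,\sigma^2 I)$ law, and projecting gives $\Pr[\|\pi_H(a_i - \bar a_i)\| > t\sigma\sqrt k] \leq 2e^{-k(t-1)^2/2} + \binom{n}{d}^{-1}/n$ directly, with no marginal-density calculation at all. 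Integrating the tail as you describe then yields the $O(\sigma(\sqrt k + \sqrt{\log n})) \leq 4\sigma\sqrt{k\log n}$ bound. Your approach would likely work once the outer-region integral is actually bounded, but as written it leaves a gap.
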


Most importantly, Laplace-Gaussian perturbations lead to nearly the same shadow size as Gaussian perturbations.
\begin{lemma}[Lemma 3.34 of \cite{DH18}]\label{lem:from-LG-to-Gaussian}
    Given any $n \geq d \geq 2, \sigma > 0$, any two-dimensional linear subspace $W \subset \R^d$, and any $\bar{a}_1,\ldots,\bar{a}_n \in \R^d$.
    For every $i \in [n]$, let $a_i \sim \mathcal{N}_d(\bar{a}_i, \sigma)$
    and $\widehat a_i \sim LG_d(\bar{a}_i, \sigma, 4\sigma \sqrt{d\log n})$ be independently sampled.
    Then the following holds
    \[
         \E\left[ \edges(\conv(a_1 , \ldots, a_n) \cap W) \right]
        \leq 1 + \E \left[ \edges(\conv(\widehat a_1 , \ldots, \widehat a_n) \cap W) \right].
    \]
\end{lemma}
Although \cite{DH18} state the above lemma only for $d \geq 3$, their proof applies without
change to the case $d = 2$.

\subsection{Change of Variables}\label{sec:change-of-variables}

We will make use of a specific change of variables, which is a standard tool in stochastic
and integral geometry.
It will allow us to investigate, for points $a_1,\dots,a_d \in \R^d$,
the distribution of these points when conditioning on a specific affine hyperplane
$\aff(a_1,\dots,a_d)$.
This will enable us to make conclusions about the shapes of the faces of
the convex hull $\conv(a_1,\dots,a_n)$ of $n$ vectors.

\begin{definition}[Change of variables]\label{def:change-of-variables}
    Let $\theta\in \mathbb{S}^{d-1}, t\in \R$
    be such that $\forall i\in [d], \theta^\top a_i = t$ and suppose $\theta^\T e_1 > 1$ without loss of generality where $e_1 = (1, 0, \ldots, 0)^\top \in \R^d$ is the unit vector that has nonzero element on the first coordinate.
    
    Fix $h$ as any isometric embedding from $\R^{d-1} \to e_1^\perp$.
    Let $\tilde{R}_{\theta}: \R^d \to \R^d$ denote the rotation
    that rotates $e_1$ to $\theta$ in the two-dimensional subspace
    $\mathrm{span}(e_1, \theta)$, and is the identity transformation on $\mathrm{span}(e_1, \theta)^\perp$.
    Define $R_{\theta} = \tilde{R}_{\theta} \circ h$ to be the resulting
    isometric embedding from $\R^{d-1}$, identified with $e_1^\perp$, to $\theta^\perp$.
    Now define the transformation $\phi$ from $\theta\in \mathbb{S}^{d-1}, t \in \R, b_1,\ldots,b_d\in \R^{d-1}$ to $a_1,\ldots, a_d \in \R^d$ as follows:
    \begin{align}\label{eq:change-of-variables}
        \phi(\theta, t, b_1,\ldots,b_d) = (R_{\theta}(b_1) + t\theta, \ldots, R_{\theta}(b_d) + t\theta).
    \end{align}
\end{definition}

The choice of the embeddings $\tilde R_\theta$ is largely arbitrary.
What matters is that the transformation $\phi$ and its inverse are continuous almost everywhere.

\begin{lemma}[Jacobian of the transformation, see Theorem 7.2.7 in \cite{Schneider2008}]\label{lem:smooth-jacobian}
    Let $\phi: \sfe \times \R \times \R^{(d-1) \times d} \to \R^ {d \times d}$ be the transformation defined in Definition \ref{def:change-of-variables}.
    The transformation $\phi$ is defined almost everywhere and has Jacobian determinant that equals to
    \begin{align*}
        \left| \det\left(\frac{\partial \phi}{\partial (\theta,t,b_1,\dots,b_d)}\right)\right| = C_d \cdot (d-1)! \cdot \vol_{d-1}(\conv(b_1,\ldots,b_d))
    \end{align*}
    for some constant $C_d$ depending only on the dimension.
    As a consequence, if $a_1,\dots,a_d \in \R^d$ are points with probability density $\mu(a_1,\dots,a_d)$
    and if $\theta \in \sfe, t \in \R, b_1,\dots,b_d \in \R^{d-1}$ have probability density proportional to
    \[
        \vol_{d-1}(\conv(b_1,\dots,b_d)) \cdot \mu(t\theta + R_\theta(b_1),\dots,t\theta + R_\theta(b_d))
    \]
    then $\E[f(a_1,\dots,a_d)] = \E[f(\phi(\theta,t,b_1,\dots,b_d))]$ for any measurable function $f$.
\end{lemma}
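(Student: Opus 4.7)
The plan is to establish the Jacobian identity by a direct block-decomposition of $d\phi$, and then read off the density statement as an immediate application of the multivariate change-of-variables theorem. Since essentially the same calculation appears in detail as Theorem~7.2.7 of \cite{Schneider2008}, I would ultimately cite that result for the dimension-only constant $C_d$ and the various surface-measure normalizations, focusing my own exposition on the geometric content.

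First I would verify that, under the orientation convention $\theta^\top e_1 > 0$ and the fixed isometry $h$, the map $\phi$ is a smooth bijection from the open set $\{(\theta, t, b_1,\dots,b_d) : b_1,\dots,b_d \text{ are affinely independent in }\R^{d-1}\}$ onto the open dense subset of $(\R^d)^d$ where $a_1,\dots,a_d$ are affinely independent and their affine hull admits a unit normal satisfying the orientation condition; the complement has Lebesgue measure zero, so $\phi^{-1}$ is defined a.e.\ and the change-of-variables formula applies. Next, I would decompose the $d^2$ domain coordinates into the $d$ parameters $(\theta, t)$, specifying the oriented affine hyperplane $H_{\theta,t} = \{x : \theta^\top x = t\}$ containing $a_1,\dots,a_d$, and the $d(d-1)$ parameters $(b_1,\dots,b_d)$, specifying positions within that hyperplane via $a_i = t\theta + R_\theta(b_i)$. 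With $(\theta, t)$ held fixed, the map $(b_i)_i \mapsto (a_i)_i$ is a block-diagonal isometry from $(\R^{d-1})^d$ onto $(H_{\theta,t})^d$ (each block equal to $R_\theta$), so this portion of $d\phi$ contributes Jacobian exactly $1$.

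The main obstacle is the remaining $d \times d$ block, coming from varying $(\theta, t)$ with $b$ held fixed. For a tangent variation $(d\theta, dt)$ with $\theta^\top d\theta = 0$, one has $da_i = \theta\,dt + t\,d\theta + (dR_\theta)(b_i)$. Projecting onto $\theta$ yields $\theta^\top da_i = dt$ for every $i$, which after a row reduction reduces the problem to a determinant that evaluates, with careful bookkeeping of the spherical surface element, to a dimension-only constant times $|\det(a_1 - a_d,\dots,a_{d-1}-a_d,\theta)|$. The latter equals $\vol_{d-1}(\para(a_1 - a_d,\dots,a_{d-1}-a_d)) = (d-1)!\cdot\vol_{d-1}(\conv(a_1,\dots,a_d))$, because the edge vectors $a_i - a_d$ lie in $\theta^\perp$ and $\theta$ is a unit vector normal to them. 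This yields the claimed identity $|\det d\phi| = C_d\, (d-1)!\, \vol_{d-1}(\conv(a_1,\dots,a_d))$.

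The probability density consequence is then purely formal. For any measurable $f$,
\begin{align*}
\E[f(a_1,\dots,a_d)] & = \int f(a)\,\mu(a)\,da \\
& = \int f(\phi(\theta,t,b))\,\mu(\phi(\theta,t,b))\,C_d(d-1)!\,\vol_{d-1}(\conv(a_1,\dots,a_d))\,d\theta\,dt\,db.
\end{align*}
Because $R_\theta$ is an isometry, $\vol_{d-1}(\conv(a_1,\dots,a_d)) = \vol_{d-1}(\conv(b_1,\dots,b_d))$, so $(\theta, t, b_1,\dots,b_d)$ with density proportional to $\vol_{d-1}(\conv(b_1,\dots,b_d))\cdot \mu(\phi(\theta,t,b))$ satisfies $\E[f(\phi(\theta,t,b_1,\dots,b_d))] = \E[f(a_1,\dots,a_d)]$ for all measurable $f$, as required.
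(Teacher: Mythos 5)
The paper itself gives no proof of this lemma: it is imported directly as Theorem 7.2.7 of \cite{Schneider2008}, so there is no internal argument to compare against. Your sketch is a correct outline of the standard computation underlying that citation (and you sensibly defer to the reference for the normalization constant $C_d$), so in effect you supply what the paper omits. Two remarks. First, one intermediate claim is inaccurate as stated: with $b_1,\dots,b_d$ fixed, $\theta^\top da_i \neq dt$ in general, since differentiating the identity $\theta^\top R_\theta(b_i)=0$ gives $\theta^\top (dR_\theta)(b_i) = -(d\theta)^\top R_\theta(b_i)$, hence $\theta^\top da_i = dt - (d\theta)^\top R_\theta(b_i)$. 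This slip is harmless for your argument, because the row reduction you invoke (subtract the row of $a_d$ from the others) cancels the common $dt$ and converts the correction terms into $(d\theta)^\top R_\theta(b_i - b_d) = (d\theta)^\top (a_i - a_d)$, which is precisely what yields the minor $\lvert\det(a_1-a_d,\dots,a_{d-1}-a_d,\theta)\rvert = (d-1)!\,\vol_{d-1}(\conv(a_1,\dots,a_d))$; the block-structure observation that the $b$-columns have zero $\theta$-component in every block, so the Jacobian factors into an isometry block of modulus one times this $d\times d$ minor, is sound. Second, the passage to the density statement is handled correctly: $\vol_{d-1}(\conv(a_1,\dots,a_d)) = \vol_{d-1}(\conv(b_1,\dots,b_d))$ because $R_\theta$ is an isometry and the shift by $t\theta$ preserves $(d-1)$-volume, and since both the pushforward density and the stated density are probability densities, ``proportional'' upgrades to ``equal,'' giving $\E[f(a_1,\dots,a_d)] = \E[f(\phi(\theta,t,b_1,\dots,b_d))]$ for all measurable $f$.
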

The interested reader might observe that if $(a_1,\dots,a_d)=\phi(\theta,t,b_1,\dots,b_d)$ then the Jacobian
\[(d-1)! \cdot \vol_{d-1}(\conv(a_1,\ldots,a_d)) = (d-1)! \cdot \vol_{d-1}(\conv(b_1,\ldots,b_d))\]
is equal to the determinant of the $d \times d$ matrix $(\theta^\top, (a_1-a_d)^\top,(a_2-a_d)^\top,\dots,(a_{d-1}-a_d)^\top)$,
which is equal to the determinant of the $(d-1) \times (d-1)$ matrix with rows or columns given by
$b_1-b_d,b_2-b_d,\dots,b_{d-1}-b_d$.

In particular, we will use this transformation to condition on the value of $\theta$
and consider events in the variables $t,b_1,\dots,b_d$. For this purpose, we have the following fact.
\begin{lemma}[Log-Lipschitzness of the Position of Affine Hull]\label{fact:joint-t-log-lipschitz}
    Let $a_1, \ldots, a_d \in \R^d$ be $d$ independent $L$-log-Lipschitz random variables,
    and let $(\theta, t, b_1,\dots,b_d) = \phi^{-1}(a_1,\dots,a_d)$, where $\phi: \sfe \times \R \times \R^{(d-1) \times d} \to \R^ {d \times d}$ is defined in Definition \ref{def:change-of-variables}.
    Then conditional on the values of $\theta, b_1,\dots,b_d$, the random variable $t$ is $(dL)$-log-Lipchitz.
\end{lemma}
\begin{proof}
    By \Cref{lem:smooth-jacobian}, the joint probability density of $(a_1, \ldots, a_d)$ is proportional to
    \begin{align*}
        \vol_{d-1}(\conv(b_1, \ldots, b_d)) \cdot \prod_{i=1}^d \mu_i(R_{\theta}(b_i) + t\theta)
    \end{align*}
    where $\mu_i$ is the probability density of $a_i$.
    Conditioning on $b_1,\dots,b_d \in \R^{d-1}$ and $\theta \in \mathbb{S}^{d-1}$, the volume $\vol_{d-1}(\conv(b_1, \ldots, b_d))$ is fixed. 
    The statement then follows from the fact that for each $i \in [d]$, $\mu_i(R_{\theta}(b_i) + t\theta)$
    is $L$-log-Lipschitz in $t$ for any $b_i$.
\end{proof}

Conditional on $\theta$ and $t$, the points $b_1,\dots,b_d$ are not $L$-log-Lipschitz due to the volume term.
When it becomes relevant, we will show that this factor does not affect the argument in any negative way.

\subsection{Non-Degenerate Conditions}

\begin{definition}[Non-degenerate polytope]\label{def:non-degeneracy}
    A polytope $Q = \conv(a_1,\dots,a_n) \subset \R^d$ is called non-degenerate, if it is simplicial (every facet is a simplex) and if, for $i \in [n]$,
    $a_i \in \partial Q$ implies that $a_i$ is a vertex of $Q$.
\end{definition}

\begin{definition}[Non-degenerate intersection with a 2D-plane]\label{def:non-degenerate-intersection}
    Let $Q \subset \R^d$ be a non-degenerate polytope and let $W \subset \R^d$ be a two-dimensional linear subspace.
    We say $Q$ has non-degenerate intersection with $W$, if 
    \begin{enumerate}
        \item the edges of the two-dimensional polygon $Q \cap W$ have one-to-one correspondence to the facets of $Q$ that have non-empty intersection with $W$; and
        \item the vertices of $Q \cap W$ have one-to-one correspondence to the $(d-2)$-dimensional faces (ridges) of $Q$ that have non-empty intersection with $W$
    \end{enumerate}
\end{definition}

\begin{fact}[Non-degenerate conditions of random polytope]\label{lem:non-degeneracy}
    Given any $n \geq d \geq 2$ and any fixed two-dimensional plane $W \subset \R^d$.
    For $a_1, \ldots, a_n \in \R^d$,
    the polytope $Q = \conv(a_1, \ldots, a_n)$ satisfies the following properties
    everywhere except for a set of measure $0$:
    \begin{enumerate}
        \item $Q$ is non-degenerate;
        \item $Q$ has non-degenerate intersection with $W$;
        \item For every normal vector $v$ to any facet of $Q$, $e_1^\T v \neq 0$.
    \end{enumerate}
\end{fact}

Assume the polytope $Q = \conv(a_1,\dots,a_n)$ and the two-dimensional linear subspace $W \subset \R^d$ satisfy the non-degenerate conditions in \Cref{lem:non-degeneracy}.
Each edge of the two-dimensional polygon formed by the intersection $W \cap Q$ can be described by a set of $d$ vertices, where the edge is equivalent to the intersection of $W$ with the convex hull of these $d$ vertices. Furthermore, each vertex of $Q \cap W$ corresponds to a set of $(d-1)$ vertices. The following lemma characterizes the relation of these sets for adjacent vertices and edges:

\begin{fact}[Properties of neighboring vertices on non-degenerate intersection polygon]\label{fact:diff-adjacent-vertex}
    Let $W \subset \R^d$ be a two-dimensional linear subspace, $Q = \conv(a_1,\dots,a_n) \subset \R^d$ is simplicial and has non-degeneracy intersection with $W$. 
    Given $J_1, J_2 \in \binom{[n]}{d-1}$, $I \in \binom{[n]}{d}$, suppose
    (1) $V_{J_1} = \conv(a_j : j\in J_1) \cap W$ and $V_{J_2} = \conv(a_j : j\in J_2) \cap W$ are two adjacent vertices of $Q \cap W$, and (2) $\conv(a_i : i\in I) \cap W$ is an edge of $Q \cap W$ that contains $V_{J_1}$ but not contains $V_{J_2}$.
    Then we have $|J_1 \backslash J_2| = |J_2 \backslash J_1| = 1$ and $|I \backslash J_2| = 2$.
\end{fact}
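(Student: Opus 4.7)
The plan is to use the correspondence between faces of $Q$ and the vertices/edges of $Q\cap W$ supplied by the non-degeneracy hypotheses, together with the simplicial structure of $Q$, to pin down the combinatorial structure. Specifically, by \Cref{def:non-degenerate-intersection}, edges of $Q\cap W$ biject with facets of $Q$ meeting $W$ and vertices of $Q\cap W$ biject with ridges of $Q$ meeting $W$. Since $Q$ is simplicial, facets have exactly $d$ vertices among $a_1,\dots,a_n$ and ridges have exactly $d-1$, so $I$ really is the vertex set of a facet and $J_1,J_2$ really are the vertex sets of ridges; moreover every ridge of $Q$ is a face of exactly two facets.

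First I would identify the facet $I'\in\binom{[n]}{d}$ whose intersection with $W$ is the edge of $Q\cap W$ from $V_{J_1}$ to $V_{J_2}$. Because this edge contains both endpoints $V_{J_1}$ and $V_{J_2}$, both ridges $\conv(a_j:j\in J_1)$ and $\conv(a_j:j\in J_2)$ are faces of the facet $\conv(a_i:i\in I')$, which in the simplicial setting means $J_1,J_2\subset I'$. Writing $J_1=I'\setminus\{v'\}$ and $J_2=I'\setminus\{u\}$, the hypothesis $J_1\neq J_2$ forces $v'\neq u$, and hence $|J_1\cap J_2|=d-2$, giving $|J_1\setminus J_2|=|J_2\setminus J_1|=1$.

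Next I would analyse the second edge at $V_{J_1}$. In a two-dimensional polygon, every vertex lies on exactly two edges, so the edge $\conv(a_i:i\in I)\cap W$ (which contains $V_{J_1}$ but not $V_{J_2}$) is the unique edge at $V_{J_1}$ different from the $I'$-edge. Since $V_{J_1}$ is a vertex of this edge, the ridge $J_1$ is a face of the facet $I$, so $I\supset J_1$, and thus $I=J_1\cup\{v\}$ for some $v\notin J_1$. Because $I\neq I'$, we have $v\neq v'$.

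The final step is to compute $|I\setminus J_2|$. Write $I\cap J_2=(J_1\cap J_2)\cup(\{v\}\cap J_2)$; the first piece already has size $d-2$. Now $v\notin J_1$, and $J_2\setminus J_1\subset\{v'\}$ with $v\neq v'$, so $v\notin J_2$. Therefore $|I\cap J_2|=d-2$ and $|I\setminus J_2|=2$, as claimed. I do not expect any serious obstacle: the argument is a purely combinatorial bookkeeping exercise, and the only subtle point is remembering to invoke simpliciality (which gives the bijection between subsets of the right cardinalities and actual faces) and the $2$-dimensionality of $Q\cap W$ (which gives that each vertex has exactly two incident edges).
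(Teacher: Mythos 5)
Your proof is correct and follows essentially the same route as the paper's. Both arguments identify the facet $I'$ (of size $d$, by non-degeneracy and simpliciality) whose intersection with $W$ is the edge joining $V_{J_1}$ and $V_{J_2}$, deduce $J_1,J_2\subset I'$ to get $|J_1\setminus J_2|=|J_2\setminus J_1|=1$, and then use $J_1\subset I$ together with $J_2\not\subset I$ to force $|I\setminus J_2|=2$; your version simply carries out the final counting step a bit more explicitly, whereas the paper argues by ruling out $|I\setminus J_2|=1$.
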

\begin{proof}
    Let $I' = J_1 \cup J_2$. Then $\conv(V_{J_1}, V_{J_2}) = \conv(a_i : i\in I) \cap W$ is an edge of the polygon $Q \cap W$. Since $Q$ has non-degenerate intersection with $W$, we have that $|I'| = d$. Combining with $|J_1| = |J_2| = d-1$ gives us that $|J_1 \backslash J_2| = |J_2 \backslash J_1| = 1$.
    
    Next we consider $|I \backslash J_2|$. Since $J_1 \subset I$ and $|J_1 \backslash J_2| = 1$, it could only be the case that $|I \backslash J_2| \in \{1, 2\}$. If  $|I \backslash J_2| = 1$, then since $|I| = |J_2|+1$ we must have $J_2 \subset I$, but this contradicts to the fact that $\conv(J_2) \not\subset \conv(I)$. Therefore we could only have $|I \backslash J_2| = 2$.
\end{proof}

\section{Smoothed Complexity Upper Bound}\label{sec:ub-general}

In this section, we establish our key theorem for upper bounding the number of edges
of a random polygon $\conv(a_1,\dots,a_n) \cap W$ for $W$ a fixed $2$-dimensional
linear subspace and $a_1,\dots,a_n \in \R^d$. 
We demonstrate that if 
for any edge on the shadow polygon $\conv(a_1, \ldots, a_n) \cap W$, 
the expected distance between the affine hull of the edge and the next vertex
on the shadow is sufficiently large in expectation, then the expected number
of edges of $\conv(a_1, \ldots, a_n) \cap W$ can be upper-bounded.

\begin{definition}[Facet and edge event]\label{def:edge-event}
    For $I \subset [n]$, we write $F_I = \conv(a_i : i \in I)$.
    Define $E_I$ to be the event that both
    $F_I$ is a facet of $\conv(a_1,\dots,a_n)$
    and $F_I \cap W \neq \emptyset$.
\end{definition}

\begin{remark}\label{fact:edgecounting}
    Any edge $e$ of $\conv(a_1,\dots,a_n) \cap W$ can be written as $e = F_I \cap W$
    for some $I \subset [n]$ for which $E_I$ holds.
    Assuming non-degeneracy, this relation between edges and index sets
    is a one-to-one correspondence, and moreover every $I \subset [n]$
    for which $E_I$ holds satisfies $|I|=d$.
\end{remark}

To state the key theorem's assumption, we need a concept of 'clockwise' to characterize the order of edges and vertices on the shadow polygon.

\begin{definition}[Clockwise order of edges and vertices]
    For any given two-dimensional linear subspace $W \subset \R^d$
    we denote an arbitrary but fixed rotation as ``clockwise''.
    For the polygon $\conv(a_1, \ldots, a_n) \cap W$ of our interest, let $p_1,\dots,p_k$
    denote its vertices in clockwise order and write $p_{k+1}=p_1, p_{k+2}=p_2$.
    Then for any edge $e = [p_{i-1}, p_i]$,
    we call $p_i$ its second vertex in clockwise order and
    we call $p_{i+1}$ the next vertex after $e$ in clockwise order.
    The edge $[p_i, p_{i+1}]$ is the next edge after $e$ in clockwise order.
\end{definition}
Note that the above terms are well-defined in the sense that they depend only on the polygon and
the orientation of the subspace, not on the vertex labels.
With this definition in place, we can now state the theorem itself:
\begin{theorem}[Smoothed complexity upper bound for continuous perturbations]\label{thm:ub-general}
    Fix any $n,d \geq 2$, $\sigma \geq 0$, and any two-dimensional linear subspace $W \subseteq \R^d$. 
    Let $a_1, \ldots, a_n \in \R^d$ be independently distributed each according to a
    continuous probability distribution.
    
    For any $I \in \binom{[n]}{d}$, conditional on $E_I$, 
    define $y_I \in W$ as the outer unit normal of the edge $F_I \cap W$.
    For $\gamma > 0$,
    suppose that for each $I \in \binom{[n]}{d}$ such that $\Pr[E_I] \geq 10 \binom{n}{d}^{-1}$ we have
    \begin{align*}
        \Pr[y_I^\top p_2 - y_I^\top p_3 \geq \gamma \mid E_I] \geq 0.1 ,
    \end{align*}
    where we write $[p_1, p_2] = F_I \cap W$ and $p_3 \in \conv(a_1, \ldots, a_n)\cap W$ as
    the next vertex after $F_I \cap W$ in clockwise order.
    Then we have
    \begin{align*}
        \E\left[ \edges\left(\conv(a_1, \ldots, a_n) \right) \right] &\leq 10 + 80\pi \sqrt{\frac{\E[\max_{i\in [n]} \|\pi_W(a_i)\|]}{\gamma}} \\
        &= O\left(\sqrt{\frac{\E[\max_{i \in [n]} \|\pi_W(a_i)\|]}{\gamma}} \right).
    \end{align*}
\end{theorem}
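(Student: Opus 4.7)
The plan is to partition the index sets $I \in \binom{[n]}{d}$ into three buckets and estimate the expected number of edges contributed by each. By \Cref{lem:non-degeneracy} the polygon $\conv(a_1,\dots,a_n)\cap W$ is almost surely non-degenerate, so by \Cref{fact:edgecounting} its edge count equals $\sum_{|I|=d} \mathbf{1}[E_I]$ and its expectation equals $\sum_I \Pr[E_I]$. For each $I$ on which $E_I$ holds, label the two endpoints of $F_I \cap W$ as $p_1, p_2$ with $p_2$ second in clockwise order, let $p_3$ be the next vertex in clockwise order, let $\ell_{I^+} = \|p_3 - p_2\|$ be the length of the following edge, and let $\theta_I \in (0,\pi)$ be the exterior angle of the polygon at $p_2$. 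A direct computation in the plane $W$ using the outer unit normal $y_I$ shows $y_I^\T p_2 - y_I^\T p_3 = \ell_{I^+}\sin\theta_I$, so the hypothesis of the theorem is exactly a lower bound on $\ell_{I^+}\sin\theta_I$.

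Fix a threshold $t>0$ to be optimized. Call $I$ \emph{rare} if $\Pr[E_I] < 10\binom{n}{d}^{-1}$; \emph{long} if it is not rare and $\Pr[\ell_{I^+}\geq t\mid E_I]\geq 0.05$; and \emph{short} otherwise. Rare sets contribute at most $\binom{n}{d}\cdot 10\binom{n}{d}^{-1}=10$ to $\E[\edges(\conv(a_1,\dots,a_n)\cap W)]$. For long sets, non-degeneracy makes the ``next edge'' map a bijection on edges, so
\[
    \sum_I \ell_{I^+}\,\mathbf{1}[E_I] \;=\; \mathrm{perimeter}\bigl(\conv(a_1,\dots,a_n)\cap W\bigr).
\]
Since the polygon is contained in the Euclidean disk in $W$ of radius $\max_i\|\pi_W(a_i)\|$, perimeter monotonicity for convex sets gives $\mathrm{perimeter} \leq 2\pi\max_i\|\pi_W(a_i)\|$. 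For each long $I$, Markov yields $\E[\ell_{I^+}\mathbf{1}[E_I]]\geq 0.05\,t\,\Pr[E_I]$, and summing over the long bucket produces
\[
    \sum_{I\text{ long}} \Pr[E_I] \;\leq\; \frac{2\pi\,\E[\max_i\|\pi_W(a_i)\|]}{0.05\,t} \;=\; \frac{40\pi\,\E[\max_i\|\pi_W(a_i)\|]}{t}.
\]

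For short sets, $\Pr[\ell_{I^+}<t\mid E_I]>0.95$, and the hypothesis gives $\Pr[y_I^\T p_2 - y_I^\T p_3\geq\gamma\mid E_I]\geq 0.1$, so a union bound shows the two events hold simultaneously with probability at least $0.05$ given $E_I$. On that intersection $\sin\theta_I\geq\gamma/t$, and since $\theta_I\geq\sin\theta_I$ on $(0,\pi)$ this gives $\E[\theta_I\mathbf{1}[E_I]]\geq (0.05\gamma/t)\Pr[E_I]$. As exterior angles of a convex polygon sum to $2\pi$, summing over short $I$ yields $\sum_{I\text{ short}}\Pr[E_I]\leq 40\pi\,t/\gamma$. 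Adding the three bounds and optimizing $t=\sqrt{\gamma\,\E[\max_i\|\pi_W(a_i)\|]}$ produces the stated bound $10 + 80\pi\sqrt{\E[\max_i\|\pi_W(a_i)\|]/\gamma}$.

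The substantive step is the short-bucket argument: cleanly converting the given lower bound on the edge-to-next-vertex distance $\ell_{I^+}\sin\theta_I$ into one on $\theta_I$ alone (which is what the angle-sum identity controls), by separating off the ``next edge is short'' event and invoking the trivial inequality $\theta_I\geq\sin\theta_I$. Everything else is bookkeeping that relies crucially on non-degeneracy so that ``the edge $F_I\cap W$'' and ``the next edge'' are well-defined for each surviving $I$, and the ``next edge'' map is indeed a bijection.
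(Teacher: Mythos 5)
Your proof is correct and mirrors the paper's argument essentially line-for-line: the same three-bucket decomposition (rare, long, short), the same charging of long edges against perimeter and short edges against the $2\pi$ exterior-angle sum via $\theta_I \geq \sin\theta_I = (y_I^\top p_2 - y_I^\top p_3)/\ell_{I^+}$, and the same optimization over $t$. No substantive difference.
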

Note that, assuming non-degeneracy, $y_I$ is well-defined if and only if $E_I$ happens.
In this case, we are guaranteed that $y_I^\top p_2 - y_I^\top p_3 > 0$.

To prove the above theorem, we show that any $I \in \binom{[n]}{d}$
with $\Pr[E_I] \geq \binom{n}{d}^{-1}$ can be charged to either
a portion of the perimeter of the polygon $\conv(a_1,\dots,a_n)\cap W$
or to a portion of its sum $2\pi$ of exterior angles at its vertices.
\begin{definition}[Exterior angle and length of the next edge]
    Given any $I \in \binom{[n]}{d}$, we define two random variables $\theta_I, \ell_{I^+} \geq 0$.
    If $E_I$ happens, write $p_1, p_2 \in F_I \cap W$ for the first and the second endpoint of $F_I \cap W$ in clockwise order.
    Let $\theta_I$ to be the (two-dimensional) exterior angle of $\conv(a_1, \ldots, a_n) \cap W$ at $p_2$; 
    If $E_I$ doesn't happen then let $\theta_I = 0$.
    
    Let $\ell_{I^+}$ denote the following random variable: 
    If $E_I$ happens, then $\ell_{I^+}$ equals to the length of the next edge
    after $F_I \cap W$ in clockwise order, i.e., the other edge
    of $\conv(a_1, \ldots, a_n) \cap W$ containing $p_2$.
    If $E_I$ doesn't happen then let $\ell_{I^+} = 0$.
\end{definition}

\begin{proof}[Proof of Theorem~\ref{thm:ub-general}]
    Since we have non-degeneracy with probability $1$, by \Cref{lem:non-degeneracy} and linearity of expectation we find
    \begin{align*}
        \E\left[ \edges\left(\conv(a_1, \ldots, a_n) \cap W \right) \right] = \sum_{I \in \binom{[n]}{d}} \Pr[E_I].
    \end{align*}
    We can give an upper bound on the expected number of edges of $\conv(a_1, \ldots, a_n) \cap W$
    by upper-bounding each $\Pr[E_I]$.
    Fix any $I \in \binom{[n]}{d}$ and let $t > 0$ be a parameter to be determined later. 
    We consider three different possible upper bounds on $\Pr[E_I]$, at least one of which will always hold:
    
    {\bf Case 1: $\Pr[E_I] \leq 10\binom{n}{d}^{-1}$.} 
    
    Since $\sum_{I \in \binom{[n]}{d}}10 \binom{n}{d}^{-1} = 10$, one can immediately see that the total contribution of edges counted in this case is at most $10$.
    
    {\bf Case 2: $\Pr[E_I] > 10\binom{n}{d}^{-1}$ and $\Pr[\ell_{I^+} \geq t \mid E_I] \geq \frac{1}{20}$.} 
    
    In this case, by Markov's inequality $\E[\ell_{I^+} \mid E_I] \geq \frac{t}{20}$, therefore we obtain from $\mathbb{E}[\ell_{I^+}] = \mathbb{E}[\ell_{I^+}\mid E_I] \Pr[E_I]$ and $E[\ell_{I^{+}}| E_{I}^{C}] = 0$ that
    \begin{align*}
        \Pr[E_I] = \frac{\E[\ell_{I^+}]}{\E[\ell_{I^+} \mid E_I]} \leq \frac{20}{t} \cdot \E[\ell_{I^+}].
    \end{align*}
    
    {\bf Case 3: $\Pr[E_I] > 10 \binom{n}{d}^{-1}$ and $\Pr[\ell_{I^+} \leq t \mid E_I] \geq \frac{19}{20}$.}
    Readers are referred to \Cref{fig:edgecounts} for more illustration of the proof.
    
    Conditional on $E_I$, without loss of generality we write $[p_1, p_2] = F_I \cap W$ and let $p_3$ denote the next vertex after $F_I \cap W$ in clockwise direction.
    From the theorem's assumption we have
    $\Pr[\dist(\aff(p_1, p_2), p_3) \geq \gamma \mid  E_I] \geq \frac{1}{10}$.
    Then from the union bound,
    \begin{align*}
    \Pr[(\ell_{I^+} \leq t) &\wedge (\dist(\aff(p_1, p_2), p_3) \geq \gamma) \mid E_I]\\
    \geq 1 - \Pr[\ell_{I^+} > t \mid E_I] &- \Pr[\dist(\aff(p_1, p_2), p_3) < \gamma \mid E_I]
    \geq \frac{1}{20}.
    \end{align*}
    Referring back to \Cref{fig:edgecounts}, we know that $\theta_I \geq 0$ and thus
    \begin{align*}
        \theta_I \geq \sin(\theta_I) =  \frac{\dist(\aff(p_1, p_2), p_3)}{\ell_{I^+}}
    \end{align*}
    we have $\E[\theta_I \mid E_I] \geq \frac{1}{20} \cdot \frac{\gamma}{t}$.
    Combining this with the fact that  $\theta_{I}=0$ if $E_{I}$ does not hold, we know that
    $\E[\theta_I] = \E[\theta_I \mid E_I] \Pr[E_I]$ and we can upper bound
    \begin{align*}
        \Pr[E_I] = \frac{\E[\theta_I ]}{\E[\theta_I \mid E_I]} 
        \leq \frac{20 t}{\gamma} \cdot \E[\theta_I].
    \end{align*}

    Combining the upper bounds for each $\Pr[E_I]$ for the above three cases, we get that
    \begin{align}\label{eq:multi-num-vertices}
    \E&\left[ \edges\left(\conv(a_1, \ldots, a_n) \cap W \right) \right] = \sum_{I \in \binom{[n]}{d}} \Pr[E_I] \nonumber \\
    &\leq  \sum_{I \in \binom{[n]}{d}} \left( 10 \binom n d ^{-1} + \frac{20}{t} \cdot \E[\ell_{I^+}]
    + \frac{20t}{\gamma} \cdot \E[\theta_I]  \right) \nonumber \\
    &=  10 + \frac{20}{t} \cdot \E[\sum_{I \in \binom{[n]}{d}} \ell_{I^+}]
        + \frac{20 t}{\gamma}
        \cdot \E[\sum_{I \in \binom{[n]}{d}}\theta_I]
    \end{align}
    
    To upper bound the second term of \eqref{eq:multi-num-vertices},
    we notice that $\sum_{I \in \binom{[n]}{d}} \ell_{I^+}$ exactly equals
        the perimeter of $\conv(a_1,\dots,a_n) \cap W$. 
    Since the shadow polygon
        $\conv(a_1, \ldots, a_n)\cap W$ is contained in the two-dimensional disk of radius $\max_{i\in [n]} \|\pi_{W}(a_i)\|$,
    by the monotonicity of surface area for convex sets we have
    \[
        \E[\sum_{I \in \binom{[n]}{d}} \ell_{I^+}]
        \leq 2\pi \cdot \E[\max_{i\in [n]} \|\pi_W(a_i)\|].
    \]
        
    To upper bound the third term of \eqref{eq:multi-num-vertices}, we notice that the sum of exterior angles for any polygon always equals $2\pi$. Thus
    \begin{align}\label{eq:ub-main-third}
        \E[\sum_{I \in \binom{[n]}{d}} \theta_I] = 2\pi
    \end{align}
    
    Finally, we combine (\ref{eq:multi-num-vertices} - \ref{eq:ub-main-third}) and
    minimize over all $t > 0$:
    \begin{align*}
        \E\left[ \edges\left(\conv(a_1, \ldots, a_n) \cap W \right)  \right]
        &\leq \min_{t > 0} \left( 10 + \frac{40\pi \E[\max_{i\in [n]} \|\pi_W(a_i)\|]}{t}
        + \frac{40\pi t}{\gamma}  \right)\\
        &= 10 + 80\pi \sqrt{\frac{\E[\max_{i\in [n]} \|\pi_W(a_i)\|]}{\gamma}}.
    \end{align*}
    where in the final step, we set $t = \sqrt{\gamma \E[\max_{i\in [n]} \|\pi_W(a_i)\|]}$.
\end{proof}


In the subsequent sections, we will show a lower bound for the edge-to-vertex distance $\gamma$ assuming the independently distributed vectors $a_1, \cdots, a_n$ follow Laplace-Gaussian distributions. 
This allows us to directly apply \Cref{thm:ub-general} to derive an upper bound on the expected number of edges of $\conv(a_1, \cdots, a_n) \cap W$. 
Furthermore, by using \Cref{lem:from-LG-to-Gaussian}, we can further reduce our upper bound to the case when $a_1, \dots, a_n$ are Gaussian distributed vectors.

\section{Upper Bound in Two Dimension}\label{sec:ub-2d}

In this section, we establish the smoothed complexity upper bound for $d = 2$.
For this scenario, the shadow plane $W$ encompasses the entire two-dimensional Euclidean space, and $P \cap W$ is identical to $P = \conv(a_1, \cdots, a_n)$.
From \Cref{thm:ub-general} and \Cref{lem:from-LG-to-Gaussian}, it remains to lower bound the distance from the
affine hull of an edge to its neighboring vertex in clockwise order (denoted by 
$\gamma$ in \Cref{thm:ub-general}), where the vertices of the polygon $a_1, \cdots, a_n$ is sampled from a Laplace-Gaussian distribution with the center of $\bar{a}_1, \cdots, \bar{a}_n$.
We will demonstrate a slightly stronger result: a lower bound for the distance between the
affine hull of an edge to all of the remaining $(n-2)$ vertices.

\begin{lemma}[Edge-to-vertex distance in Two Dimension]\label{lem:dist-2d}
    Let $a_1, \ldots, a_n \in \R^2$ be $n$ independent $L$-log-Lipschitz random variables.
    Then, for any $I \in \binom{[n]}{2}$, the outer unit normal
    $y \in W$ of the edge $\conv(a_i : i\in I)$ satisfies
    $$\Pr[y^\top a_i - \max_{j \notin I} y^\top a_j \geq \frac{1}{L} \mid E_I] \geq 0.1,$$
    for any $i \in I$.
\end{lemma}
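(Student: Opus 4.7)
The plan is to use the change of variables from \Cref{def:change-of-variables} to parameterize the two vertices $a_i, a_{i'}$ (where $I=\{i,i'\}$) by a unit normal direction $\theta \in \mathbb{S}^1$, an offset $t = \theta^\top a_i = \theta^\top a_{i'}$, and one-dimensional coordinates $b_i, b_{i'} \in \R$ along $\aff(a_i,a_{i'})$. Writing $\phi^{-1}(a_i,a_{i'}) = (\theta,t,b_i,b_{i'})$, the Jacobian identity \Cref{lem:smooth-jacobian} for $d=2$ contributes a volume factor $\vol_1(\conv(b_i,b_{i'})) = |b_i-b_{i'}|$ that depends only on the $b$'s. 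I will condition on $\theta$, $b_i$, $b_{i'}$, and on every $a_j$ with $j \notin I$, leaving $t$ as the sole remaining random variable. By \Cref{fact:joint-t-log-lipschitz} (with $d=2$), under this conditioning $t$ has a $(2L)$-log-Lipschitz density on $\R$.

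Under the conditioning, the event $E_I$ becomes purely a condition on $t$: the line $\{x : \theta^\top x = t\}$ must have every other $a_j$ strictly on one side, so either $t > M^+ := \max_{j\notin I} \theta^\top a_j$ (in which case the outer unit normal is $y=\theta$) or $t < M^- := \min_{j\notin I} \theta^\top a_j$ (in which case $y=-\theta$). In both cases the quantity of interest $y^\top a_i - \max_{j\notin I} y^\top a_j$ equals the gap $t - M^+$ or $M^- - t$ respectively.

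The technical heart is the following one-dimensional fact: for any $L'$-log-Lipschitz density $\mu$ on $\R$ and any $M\in\R$, $\eps>0$,
\[
\Pr[t \geq M+\eps \mid t \geq M] \;\geq\; \frac{1}{1+e^{L'\eps}}.
\]
This follows by splitting $\int_M^\infty \mu = \int_M^{M+\eps}\mu + \int_{M+\eps}^\infty\mu$ and using $\mu(s) \leq e^{L'\eps}\mu(s+\eps)$ on the first integral, after the shift $s \mapsto s+\eps$. Applying it with $L'=2L$ and $\eps=1/L$ gives the lower bound $1/(1+e^2) > 0.1$. The symmetric statement for the left tail $\{t \leq M^- - 1/L\}$ conditional on $\{t < M^-\}$ follows by reflection.

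Combining these two cases and integrating back over $\theta, b_i, b_{i'}$, and the remaining $a_j$ yields $\Pr[y^\top a_i - \max_{j\notin I} y^\top a_j \geq 1/L \mid E_I] \geq 0.1$. The main obstacle I anticipate is being careful that the conditioning is valid: the product density $\prod_k \mu_k$ times the $b$-dependent Jacobian must yield a density in $t$ alone (once everything else is fixed) whose log-Lipschitz constant is inherited faithfully from the original $L$-log-Lipschitz factors $\mu_i(R_\theta(b_i)+t\theta)$ and $\mu_{i'}(R_\theta(b_{i'})+t\theta)$, giving the bound $2L$. Once that bookkeeping is in place, the probabilistic estimate is a one-line calculation.
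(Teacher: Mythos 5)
Your proposal is correct and takes essentially the same route as the paper's proof: change of variables to $(\theta, t, b_i, b_{i'})$ via \Cref{def:change-of-variables}, conditioning on everything but $t$ so that \Cref{fact:joint-t-log-lipschitz} gives a $(2L)$-log-Lipschitz density for $t$, splitting $E_I$ into the two sign cases, and a one-dimensional log-Lipschitz tail estimate. The only cosmetic difference is that the paper bounds $\Pr[t \geq M + 1/L \wedge E_I^+] \geq e^{-2}\Pr[E_I^+]$ by directly shifting the integrand, yielding $e^{-2} \approx 0.135$, while your $\int_M^{M+\eps} + \int_{M+\eps}^\infty$ split gives $1/(1+e^2) \approx 0.119$; both clear $0.1$.
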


By applying \Cref{thm:ub-general}, \Cref{lem:from-LG-to-Gaussian}, and the Laplace-Gaussian tail bound of \Cref{lem:properties-laplace-gaussian}, 
we find the following upper bound for two-dimensional polygons under Gaussian perturbation:
\begin{theorem}[Two-Dimensional Upper Bound]\label{thm:ub-2d}
    Let $\Bar{a}_1, \ldots, \Bar{a}_n \in \R^2$ be $n > 2$ vectors with norm at
    most $1$. For each $i \in [n]$, let $a_i$ be independently distributed as
    $\mathcal{N}_2(\Bar{a}_i, \sigma^2 I)$. Then \begin{align*}
        \E\left[ \edges\left(\conv(a_1, \ldots, a_n) \right) \right] \leq O\left(\frac{\sqrt[4]{\log n}}{\sqrt{\sigma}} + \sqrt{\log n}\right).
    \end{align*}
\end{theorem}

\begin{proof}
    For each $i \in [n]$, let $\hat{a}_i$ be independently sampled form the 2-dimensional Laplace-Gaussian distribution $LG_2(\bar{a}_i, \sigma, 4\sigma \sqrt{2\log n})$. It follows from
    \Cref{lem:properties-laplace-gaussian} that
    $\hat{a}_i$ is $(4\sigma^{-1} \sqrt{2\log n})$-log-Lipschitz
    and $\E[\max_{i \in [n]}\|\hat a_i\|] \leq 1 + 4\sigma\sqrt{2\log n}$.
    We use \Cref{lem:dist-2d}, setting $L = 4\sigma^{-1} \sqrt{2\log n}$, and
    \Cref{thm:ub-general}, setting $\gamma = \frac{1}{L} = \frac{\sigma}{4\sqrt{2\log n}}$, to find
    \begin{align*}
        \E\left[ \edges\left(\conv(\hat{a}_1, \ldots, \hat{a}_n) \right) \right] 
         &\leq O\left(\sqrt{\frac{\sqrt{\log n}}{\sigma} + \log n}\right) \leq  O\left(\frac{\sqrt[4]{\log n}}{\sqrt{\sigma}} + \sqrt{\log n}\right).
    \end{align*}
    Finally, from \Cref{lem:from-LG-to-Gaussian}, we conclude that $
        \E\left[ \edges\left(\conv(a_1, \ldots, a_n) \right) \right] \leq 1 + O(\frac{\sqrt[4]{\log n}}{\sqrt{\sigma}} + \sqrt{\log n}).
    $
\end{proof}

\begin{proof}[Proof of Lemma~\ref{lem:dist-2d}]
    Fix any set $I = \{i, i'\} \subset [n]$.
    Define $z \in \mathbb S^1$ and $t$ to satisfy $z^\T a_i = z^\T a_{i'} = t$
    and $z^\T e_1 > 0$. Both are well-defined with probability $1$.

    Note that $E_I$ is now equivalent to either having $z^\T a_j < t$ for all $j \notin I$
    or having $z^\T a_j > t$ for all $j \notin I$. Write $E_I^+$ for the former case and $E_I^-$ for the latter.
    The vector $z$ is always defined, assuming non-degeneracy, and is equal to
    the outer normal unit vector $y$ conditional on $E_I^+$ and equal to $-y$ conditional on $E_I^-$.

    We want to apply the principle of deferred decisions to fix the values of $a_j$ for each $j \notin I$
    and $z$, and now only allow $a_{i}$ and $a_{i'}$ to vary while fixing the slope of the line between $a_{i}$ and $a_{i'}$.
    Let $\mu: \R \to \R_{\geq 0}$ denote the induced probability density function of $t = y^\top a_i =  y^\top a_{i'}$.
    \Cref{fact:joint-t-log-lipschitz} tells us that $\mu$ is $(2L)$-log-Lipschitz. 

    In the first case, for $E_I^+$, we have, still only considering the randomness over $t$,
    \begin{align*}
         & \Pr[ (t - \max_{j \notin I} z^\top a_j \geq \frac{1}{L}) \wedge E_I^+] \\
        = & \int_{\max_{j \notin I} z^\top a_j + 1/L}^\infty \mu(t) \mathrm dt \\
        = & \int_{\max_{j \notin I} z^\top a_j}^\infty \mu(s+1/L) \mathrm ds \\
         \geq & \int_{\max_{j \notin I} z^\top a_j}^\infty e^{-2} \mu(s) \mathrm ds \tag{By $(2L)$-log-Lipschitzness of $\mu$} \\
        = & e^{-2} \Pr[E_I^+].
    \end{align*}
    Similarly for the other case, $E_I^-$, we find
    \[
    \Pr[ (\min_{j \notin I} z^\top a_j - t \geq \frac{1}{L}) \wedge E_I^-]
    \geq e^{-2}\Pr[E_I^-].
    \]
    Now observe that, for any $i \in I$, we have
    \begin{align*}
        &\Pr[y^\top a_i - \max_{j \notin I} y^\top a_j \geq \frac{1}{L} \wedge E_I] \\
        =& \Pr[ (t - \max_{j \notin I} z^\top a_j \geq \frac{1}{L}) \wedge E_I^+]
        + \Pr[ (\min_{j \notin I} z^\top a_j  - t\geq \frac{1}{L}) \wedge E_I^-]\\
        \geq & e^{-2} \Pr[E_I^+] + e^{-2}\Pr[E_I^-] = e^{-2}\Pr[E_I].
    \end{align*}
    This finishes the proof since
    \[
    \Pr[y^\top a_i - \max_{j \notin I} y^\top a_j \geq \frac{1}{L} \mid E_I]
    =\Pr[y^\top a_i - \max_{j \notin I} y^\top a_j \geq \frac{1}{L} \wedge E_I]/\Pr[E_I]
    \geq e^{-2} \geq 0.1.
    \]
\end{proof}

\section{Multi-Dimensional Upper Bound}\label{sec:ub-multi-new}

In this section, we establish the upper bound for the higher-dimensional case (i.e., $d \geq 3$):
\begin{theorem}[Multi-dimensional Upper Bound]\label{thm:ub-multi}
    Let $d > 2, n \geq d$, and $\sigma \leq \frac{1}{16d\sqrt{\log n}}$.
    Let $\bar{a}_1, \ldots, \bar{a}_n$ be $n$ vectors with $\max_{i \in [n]} \|\bar{a}_i\| \leq 1$. For each $i \in [n]$, let $a_i$ be independently distributed as
    $\mathcal{N}_d(\Bar{a}_i, \sigma^2 I)$. Then
    \begin{equation}\label{eq:ub-multi}
        \E[\edges( \conv(a_1,\dots,a_n) \cap W)] = O\left( \sigma^{-3/2} d^{13/4} \log^{5/4} n \right).
    \end{equation}
\end{theorem}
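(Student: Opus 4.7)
\textbf{Proof proposal for Theorem \ref{thm:ub-multi}.}
The plan is to combine three ingredients: (i) the reduction from Gaussian to Laplace--Gaussian noise via \Cref{lem:from-LG-to-Gaussian}; (ii) the edge-counting framework of \Cref{thm:ub-general}; and (iii) a lower bound of the form $\Pr[\dist(p_{I^+},\aff(e_I))\geq \gamma \mid E_I]\geq 0.1$ for a suitable $\gamma$. Concretely, I would first replace the Gaussians $a_i\sim \mathcal{N}_d(\bar a_i,\sigma^2 I)$ by independent $\widehat a_i\sim LG_d(\bar a_i,\sigma,4\sigma\sqrt{d\log n})$; by \Cref{lem:from-LG-to-Gaussian} this costs only an additive $+1$ in the edge count. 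The Laplace--Gaussian has the crucial advantage that each $\widehat a_i$ is $L$-log-Lipschitz with $L=O(\sigma^{-1}\sqrt{d\log n})$ everywhere on $\R^d$, and \Cref{lem:properties-laplace-gaussian} bounds $\E[\max_i\|\pi_W(\widehat a_i)\|]=O(\sigma\sqrt{\log n})$ (with the shadow plane being $2$-dimensional). Given a lower bound of $\gamma$ on the edge-to-vertex distance, \Cref{thm:ub-general} will give $\E[\edges]=O(\sqrt{\sigma\sqrt{\log n}/\gamma})$, so aiming at $\gamma=\Omega(\sigma^4 d^{-13/2}\log^{-2}n)$ will yield the target exponent $\sigma^{-3/2}d^{13/4}\log^{5/4}n$.

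The heart of the argument is the lower bound on $\gamma$, to be proven in \Cref{lem:normalvector}. Following the overview, I fix an index set $I\in\binom{[n]}{d}$ with $\Pr[E_I]\geq 10\binom{n}{d}^{-1}$; without loss of generality $I=[d]$. Conditioning on $E_I$ the edge $e_I=\conv(a_1,\dots,a_d)\cap W$ has a second endpoint $p$ coming from a ridge $\conv(a_j:j\in J)$ with $J\in\binom{[d]}{d-1}$, and the next vertex $p'$ in clockwise order comes from a ridge $J'$ sharing $d-2$ indices with $J$; by \Cref{fact:diff-adjacent-vertex} we may assume $J=\{2,\dots,d\}$ and $J'=\{3,\dots,d,k\}$ for some $k>d$. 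The key geometric observation, to be packaged as a ``distance splitting lemma'' (\Cref{lem:det}), is that conditional on a bound $\diam(Q)=O(1)$ one has
\begin{equation*}
    \dist(p',\aff(e_I))\;\geq\;\Omega(\delta\cdot r),\qquad \delta=\dist(a_k,\aff(a_1,\dots,a_d)),\quad r=\dist(\aff(e_I),\partial\conv(a_2,\dots,a_d)).
\end{equation*}
This factoring is geometrically natural: $\delta$ measures how far the next vertex $a_k$ pokes out of the facet hyperplane, while $r$ controls how close the edge $e_I$ passes to the boundary of the shared ridge. I would prove it by writing $p'$ as an explicit affine combination using the facet equation and expanding the determinant that computes $\dist(p',\aff(e_I))$.

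Next I would apply the change of variables in \Cref{def:change-of-variables} to $(a_1,\dots,a_d)$, conditioning on the normal $\theta$ and the offset $t$ of the affine hyperplane $\aff(a_1,\dots,a_d)$. After this conditioning, $\delta$ depends only on $a_{d+1},\dots,a_n$ (which are still log-Lipschitz and independent of everything fixed) while $r$ depends only on the in-plane coordinates $b_1,\dots,b_d$, so the two factors become independent random variables. For $\delta$ the argument (\Cref{sub:rand-delta-lb}) pushes the hyperplane a distance $\Theta(1/L)$ outward using $L$-log-Lipschitzness and uses the assumption $\Pr[E_I]\geq 10\binom{n}{d}^{-1}$ to guarantee that a constant fraction of the mass of $a_k$ stays on the correct side, giving $\Pr[\delta\geq \Omega(1/L)]\geq \Omega(1)$. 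For $r$ (\Cref{sub:rand-r-lb}) I follow the Spielman--Teng distance-lemma template in two stages: (a) each vertex $a_i$ of the ridge is $\Omega(1/(d^2 L))$-far from the affine hull of the remaining ridge-vertices after projection onto $\aff(e_I)^\perp$, and (b) writing $p=\sum_{i=1}^d\lambda_i a_i$ as a convex combination of vertices of the facet, $\min_i\lambda_i\geq \Omega(1/(d^2 L))$ with constant probability; multiplying yields $r\geq\Omega(1/(d^4 L^2))$ with constant probability. Combining, $\gamma=\Omega(\delta\cdot r)=\Omega(1/(d^4 L^3))=\Omega(\sigma^3/(d^{11/2}\log^{3/2}n))$, which together with $\E[\max_i\|\pi_W(a_i)\|]=O(\sigma\sqrt{\log n})$ and a union bound over the $O(1)$-diameter event (valid since $\sigma\leq 1/(8d\sqrt{\log n})$) plugs into \Cref{thm:ub-general} to yield \eqref{eq:ub-multi}.

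The main obstacle I expect is step (b) in the bound on $r$: controlling the convex combination coefficients $\lambda_i$ requires a delicate argument because after conditioning on $\theta,t$ the in-plane random variables $b_1,\dots,b_d$ have a joint density that is distorted by a $\vol_{d-1}$-Jacobian factor (\Cref{lem:smooth-jacobian}), so their marginal log-Lipschitzness is not free. Handling this cleanly---and in particular ensuring that ``$p$ lies in the correct face of the ridge'' is compatible with the Spielman--Teng style coefficient lower bound---is the technical crux; everything else reduces to standard log-Lipschitz shift arguments and Gaussian tail bounds captured in the $\sigma\leq 1/(8d\sqrt{\log n})$ regime of the hypothesis.
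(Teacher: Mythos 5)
Your proposal follows the same architecture as the paper's proof almost exactly: Laplace--Gaussian reduction via \Cref{lem:from-LG-to-Gaussian}, the exterior-angle/perimeter edge-counting scheme of \Cref{thm:ub-general}, the distance-splitting $\dist(p',\aff(e_I))\geq \Omega(\delta\cdot r)$ (\Cref{lem:det}), separate randomized lower bounds on $\delta$ (\Cref{sub:rand-delta-lb}) and $r$ (\Cref{sub:rand-r-lb}) obtained after the change of variables, and indeed the bound $r=\Omega(1/(d^4L^2))$ and the Spielman--Teng-style two-stage argument for it are exactly what the paper does. However, two of your quantitative claims are wrong, and they do not cancel; they would make the final bound come out to $O(\sigma^{-1}d^{11/4}\log n)$ rather than the claimed $O(\sigma^{-3/2}d^{13/4}\log^{5/4}n)$.

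First, $\E[\max_i\|\pi_W(\widehat a_i)\|]$ is \emph{not} $O(\sigma\sqrt{\log n})$. The estimate in \Cref{lem:properties-laplace-gaussian} (which the statement admittedly types ambiguously) applies to the centered noise $\widehat a_i-\bar a_i$; since the means satisfy $\|\bar a_i\|\leq 1$, one only gets $\E[\max_i\|\pi_W(\widehat a_i)\|]\leq 1+O(\sigma\sqrt{\log n})=O(1)$, which is what the paper uses. Second, the lower bound $\Pr[\delta\geq\Omega(1/L)\mid E_I]\geq\Omega(1)$ is too optimistic. The quantity $\delta$ that the deterministic lemma needs is really a lower bound on the distance from $\aff(F_I)$ to the unknown vertex $a_k$ that defines the next ridge, and since the identity of $k$ is a complicated function of all the points, the clean route (and the one the paper takes) is to lower-bound $\min_{k\notin I}\dist(\aff(F_I),a_k)$ by a union bound over all $n-d$ candidate points. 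The single-point log-Lipschitz ``push the hyperplane out by $1/L$'' argument you sketch does not account for this, and conditioning on ``$a_k$ is the relevant vertex'' is not free. The paper's Claim~\ref{claim:highprobabilityimpliesfewnearbypoints} (a Chernoff bound exploiting $\Pr[E_I]\geq\binom{n}{d}^{-1}$) shows the expected number of points near the hyperplane is $O(d\log n)$, which is where the $1/(d\log n)$ loss enters, yielding $\delta\geq\Omega(1/(Ld\log n))$ rather than $\Omega(1/L)$. With both corrections, $\gamma=\Omega(\delta r)=\Omega(1/(d^5L^3\log n))$ and $\sqrt{O(1)/\gamma}=O(\sigma^{-3/2}d^{13/4}\log^{5/4}n)$, matching the statement. (A very minor slip: you write $p=\sum_{i=1}^{d}\lambda_ia_i$ over the facet vertices, but $p$ lies on a ridge, so the convex combination is over the $d-1$ ridge vertices and the paper's coefficient bound applies to those.)
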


Similar to the two-dimensional case (see \Cref{sec:ub-2d}), the main technical ingredient of \Cref{thm:ub-multi} is a lower-bound of the edge-to-vertex distance (the quantity $\gamma$ in \Cref{thm:ub-general}) on the shadow polygon:

\begin{lemma}[Edge-to-vertex distance of shadow polygon in multi-dimension]\label{lem:normalvector}
    For any $d \geq 3$,
    let $a_1, \ldots, a_n \in \R^d$ be independent $L$-log-Lipschitz random variables.
    For any $I \in \binom{[n]}{d}$ that satisfies
    \(
    \Pr[E_I] \geq 10\binom{n}{d}^{-1},
    \)
    (where $E_I$ is defined in \Cref{def:edge-event}),
    we have
    \begin{align*}
        \Pr[y^\top p - y^\top p' \geq  \Omega(\frac{1}{L^3d^5 \log n})\mid E_I ] \geq 0.1,
    \end{align*}
    where $p$ is any point in $F_I \cap W$, and $p' \in \conv(a_1,\dots,a_n)\cap W$ is the next vertex after $F_I \cap W$ in clockwise direction.
    Here $y \in W$ is the outer unit normal to the edge $F_I \cap W$ on $\conv(a_1, \ldots, a_n) \cap W$.
\end{lemma}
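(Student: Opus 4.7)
The plan follows the factorization sketched in the proof overview: lower-bound $y^\top p - y^\top p'$ by a product $\delta \cdot r$ of two geometric quantities that can be analyzed almost independently after conditioning on the facet hyperplane. By symmetry take $I = [d]$; condition on $E_I$ and let $q \in F_I \cap W$ be the second endpoint of $e_I = F_I \cap W$ in clockwise order (it will suffice to bound $y^\top q - y^\top p'$, corresponding to the choice $p = q$ in the lemma statement). Non-degeneracy together with \Cref{fact:diff-adjacent-vertex} gives a unique $J \in \binom{[d]}{d-1}$ with $\{q\} = R_J \cap W$ and an adjacent ridge $R_{J'}$ with $\{p'\} = R_{J'} \cap W$ satisfying $|J \setminus J'| = |J' \setminus J| = 1$; after relabeling, take $J = \{2,\ldots,d\}$ and $J' = \{3,\ldots,d,k\}$ for some $k \in \{d+1,\ldots,n\}$. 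Define
\[
\delta = \dist(a_k, \aff(a_1,\ldots,a_d)), \qquad r = \dist(q, \partial R_J),
\]
where $\partial R_J$ is the relative boundary of $R_J = \conv(a_2,\ldots,a_d)$ inside its affine hull.

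The first step is a distance-splitting claim: on the event that the global diameter of $\conv(a_1,\ldots,a_n)$ is bounded by a constant $C$ (which holds with probability at least $1-\binom{n}{d}^{-1}$ via an elementary log-Lipschitz tail bound),
\[
y^\top q - y^\top p' \;\geq\; \Omega(\delta \cdot r / C).
\]
Since $p'$ lies on $R_{J'}$, write $p' = \mu_k a_k + \sum_{i=3}^{d}\mu_i a_i$; as $a_3,\ldots,a_d \in H := \aff(a_1,\ldots,a_d)$, the $d$-dimensional distance from $p'$ to $H$ equals $\mu_k\delta$. A short calculation in the $2$-plane $W$ gives $y^\top q - y^\top p' = \dist(p',H)/|\langle n_H, y\rangle|$ with $n_H$ a unit normal to $H$, so this distance is at least $\mu_k \delta$. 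It remains to bound $\mu_k$ from below. Interpreting $\mu_k$ as a barycentric coordinate of $p'$ inside the simplex $F_{J\cup\{k\}} = \conv(a_2,\ldots,a_d,a_k)$, it equals $\dist(p',\aff(R_J))/\dist(a_k,\aff(R_J))$; the denominator is at most $C$, and the numerator is $\Omega(r)$ because $q \in \aff(R_J)$ is at distance $r$ from $\partial R_J$ so the segment $[q,p']$ inside $W \cap \aff(F_{J\cup\{k\}})$ must leave $\aff(R_J)$ by at least this much before it can exit the simplex on a face adjacent to $a_k$.

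Next I would bound $\delta$ and $r$ separately, using the change of variables of \Cref{def:change-of-variables} to disentangle the in-plane coordinates of $a_1,\ldots,a_d$ (which control $r$) from the signed distance of $a_k$ to $H$ (which controls $\delta$). For $\delta$: condition on $H$ and on all $a_j$, $j > d$, except the $n_H$-component of $a_k$. By \Cref{fact:joint-t-log-lipschitz} (in its single-point version) this component has $L$-log-Lipschitz conditional density. The hypothesis $\Pr[E_I] \geq 10\binom{n}{d}^{-1}$ ensures that $H$ is reasonably likely to be a supporting hyperplane, i.e. to have all of $a_{d+1},\ldots,a_n$ on one side; a shift-and-compare argument combined with a union bound over the choice of $k$ then yields $\Pr[\delta \geq \Omega(1/(dL\log n)) \mid E_I] \geq 1/2$. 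For $r$: following the distance-lemma template of \cite{ST04}, two further log-Lipschitz shifts show, with constant probability, that (i) each $a_i$, $i \in J$, is at distance $\Omega(1/(d^2 L))$ from the affine hull of the remaining $a_j \in J$ after projection onto $\aff(e_I)^\perp \cap H$, and (ii) writing $q = \sum_{i\in J}\lambda_i a_i$ as a convex combination, $\min_{i\in J}\lambda_i \geq \Omega(1/(d^2L))$. Multiplying gives $r \geq \Omega(1/(d^4 L^2))$.

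Combining the three estimates gives $y^\top p - y^\top p' \geq \Omega(\delta r / C) \geq \Omega(1/(L^3 d^5 \log n))$ with conditional probability at least $0.1$, which is the claimed bound. The main obstacle is the distance-splitting lemma and its conditional-independence bookkeeping: although $\delta$ and $r$ depend on disjoint coordinates once $H$ is fixed, the event $E_I$ couples both sides of $H$, so one must use the change of variables carefully to verify that conditioning on $E_I$ (equivalently, on $H$ being a supporting hyperplane) does not distort the in-plane law by more than a constant factor. Getting the ``likely supporting hyperplane'' comparison for $\delta$ to interact correctly with the conditional distribution used to lower-bound $r$ is the main technical step.
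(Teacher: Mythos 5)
Your high-level plan --- factor $y^\top q - y^\top p'$ into a product $\delta\cdot r$ of a facet--vertex distance and a ridge--line distance, then lower-bound $\delta$ and $r$ separately using log-Lipschitzness --- matches the paper's structure (\Cref{lem:det} combined with \Cref{lem:rand-delta-lb} and \Cref{lem:rand-r-lb}), and your sketches of the two probabilistic bounds are essentially the ones the paper carries out. But your deterministic splitting step has a genuine gap. You write $y^\top q - y^\top p' = \dist(p',H)/|\langle n_H, y\rangle|$, throw away the factor $1/|\langle n_H, y\rangle| \geq 1$, and then try to show $\mu_k = \dist(p', \aff(R_J))/\dist(a_k, \aff(R_J)) \geq \Omega(r/C)$. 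That claim is false: the direction of the shadow edge $[q,p'] = W \cap F_{J\cup\{k\}}$ can be nearly parallel to $\aff(R_J)$, in which case $p'$ lands arbitrarily close to $\aff(R_J)\cap R_{J'}$ and $\mu_k$ is arbitrarily small, even while $q$ stays deep inside $R_J$ so that $r$ is bounded below. A concrete instance with $d=3$: take $a_1=(0,2,0)$, $a_2=(0,0,0)$, $a_3=(1,0,0)$, $a_k=(1/2,-1,-\delta)$, $q=(1/2,0,0)$, and $p'=(1-\mu/2,\,-\mu,\,-\mu\delta)$ on $[a_3,a_k]$ for small $\mu>0$, with $W$ the $2$-plane through $q$ spanned by $(0,1,0)$ and $p'-q$. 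Here $r=1/2$ and the diameter is $O(1)$, yet $\mu_k=\mu$ can be made as small as you like; non-degeneracy still holds since $W\cap\aff(R_J)=\{q\}$.

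The target inequality $y^\top q - y^\top p'\geq\Omega(r\delta)$ is nevertheless true in such configurations, but only because the factor you discarded, $|\langle n_H, y\rangle| = \|\pi_W(n_H)\|$, is small exactly when $\mu_k$ is small and the two cancel --- by bounding $1/|\langle n_H, y\rangle| \geq 1$ you have thrown away the compensating factor. The paper's \Cref{lem:det} avoids this by never isolating $\mu_k$: it works in the quotient $s^\perp$, where $s$ is the direction of $e_I$, so that $y^\top q - y^\top p' = \|\pi_{s^\perp}(q-p')\| \geq \dist\bigl(\pi_{s^\perp}(q),\,\aff(\pi_{s^\perp}(R_{J'}))\bigr)$, and then lower-bounds this $q$-centred distance by $r\delta/3$ via the triangle-area computation of \Cref{lem:rand-r-lb-cone}, with no loss from the alignment of $W$ with the facets. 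To repair your proposal you would need to either retain the $|\langle n_H, y\rangle|$ factor and control the ratio $\mu_k/\|\pi_W(n_H)\|$ jointly, or replace your barycentric-coordinate step by the $q$-centred distance bound as in the paper.
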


\Cref{thm:ub-multi} then immediately follows from \Cref{lem:normalvector}, \Cref{thm:ub-general},  and \Cref{lem:from-LG-to-Gaussian}:

\begin{proof}[Proof of Theorem~\ref{thm:ub-multi}]
    For each $i \in [n]$, let $\hat{a}_i$ be independently sampled from the Laplace-Gaussian distribution $LG_d(\bar{a}_i, \sigma, 4\sigma \sqrt{d\log n})$.
    From \Cref{lem:properties-laplace-gaussian}, we know that
    \begin{enumerate}
        \item Each $\hat{a}_i$ is $L = (4\sigma^{-1} \sqrt{d\log n})$-log-Lipschitz;
        \item $\E[\max_{i\in [n]}\|\pi_W(\hat{a}_i)\|] \leq 1 + 8\sigma \sqrt{2 \log n} \leq 1.5$.
    \end{enumerate}

    From \Cref{lem:normalvector}, we get that for any $p \in F_I \cap W$, if $p'$ is the next vertex after the edge $F_I \cap W$ in clockwise order, then
    \begin{align*}
        & \Pr[y_I^\top p \geq y_I^\top p' + \Omega(\frac{1}{L^3d^5\log n}) \mid E_I] \geq  0.1.
    \end{align*}
    Here $y_I \in W$ is the outer unit normal vector of the polygon $\conv(\hat{a}_1, \ldots, \hat{a}_n) \cap W$ on the edge $F_I \cap W$.
    Then we can use
    \Cref{thm:ub-general} by setting $L = 4\sigma^{-1} \sqrt{d\log n}$, $\gamma = \Omega(\frac{1}{L^3d^5 \log n})$
    and $\E[\max_{i \in [n]}\|\pi_W(a_i)\|]  = 1.5$, to find
    \begin{align*}
        \E\left[ |\edges\left(\conv(\hat{a}_1, \ldots, \hat{a}_n) \right) |\right]
        &\leq 10  + O(\sqrt{\frac{E[\max_{i \in [n]}\|\pi_W(a_i)\|]}{\gamma}}) \\
        &\leq 10 + O(\sqrt{\frac{1.5}{\frac{1}{L^3d^5 \log n}}}) \\
        &\leq 10 + O(\sqrt{L^3d^5 \log n}) \\
         &\leq  10 + O(\sqrt{\sigma^{-3} d^{13/2}  \log^{5/2} n}).
    \end{align*}
    Finally, from \Cref{lem:from-LG-to-Gaussian}, we conclude that
    \begin{align*}
        \E\left[ |\edges\left(\conv(a_1, \ldots, a_n) \cap W \right) |\right] &\leq  11 + O(\sqrt{\sigma^{-3} d^{13/2}  \log^{5/2} n}) \\
        &= O\left(\sigma^{-3/2} d^{13/4} \log^{5/4} n  \right) .\qedhere
    \end{align*}
\end{proof}

The rest of this section is dedicated to the proof of \Cref{lem:normalvector} and will be structured as follows.
In \Cref{sub:multi-notation} we define some basic notation that will be used in the proof.
In \Cref{sub:det} we establish two sufficient criteria for the conclusion of \Cref{lem:normalvector} to hold.
In \Cref{sub:rand-delta-lb} and \Cref{sub:rand-r-lb}, we prove that these conditions hold with good probability conditional on $E_I$.
\Cref{sec:pf-rand-r-b} to \Cref{sec:pf-facet-dist} include the proof of the auxiliary lemmas.
Finally, we finish the proof of \Cref{lem:normalvector} in \Cref{sub:combine}.

\subsection{Notations}\label{sub:multi-notation}

Since we assume that the constraint matrix rows $a_1,\dots,a_n$ each have a continuous probability density function,
$\conv(a_1,\dots,a_n)$ and $W$ satisfy the non-degeneracy conditions (see \Cref{lem:non-degeneracy}) almost surely.
In this case, each edge of the polygon $\conv(a_1, \ldots, a_n) \cap W$ is
given by $F_I \cap W = \conv(a_i: i\in I)$ for which $I \in \binom{[n]}{d}$ and $E_I$ holds
(where $F_I$ and $E_I$ are defined in \Cref{def:edge-event}).  In addition, each vertex
of the polygon $\conv(a_1, \ldots, a_n) \cap W$ is given by the intersection
between $W$ and $(d-2)$-dimensional ridges of $\conv(a_1, \ldots, a_n)$, which
are convex hulls of $(d-1)$ vertices of $\conv(a_1, \ldots, a_n)$.
We define the following notations for a ridge and its corresponding vertex:

\begin{definition}[Ridge and vertex event]\label{def:vertex-event}
    For any $J \subset [n]$, write $R_J = \conv(a_j : j\in J)$.
    Define $A_J$ to be the event that $R_J$ is a ridge of $\conv(a_1, \ldots, a_n)$
    and $R_J \cap W \neq \emptyset$.
\end{definition}

\begin{remark}
    Any vertex $v$ of $\conv(a_1,\dots,a_n)$ can be written as $v = R_J \cap W$
    for some $J \subset [n]$ for which $A_J$ holds.
    Assuming non-degeneracy, each $J$ for which $A_J$ holds satisfies $|J| = d-1$
    and the relation between vertices and index sets $J\in\binom{[n]}{d-1}$ with $A_J$
    is a one-to-one correspondence.
\end{remark}

\subsection{Deterministic Conditions for a Good Edge-to-Vertex Separator} \label{sub:det}

In this subsection, we present a series of sufficient conditions such that an edge on the polygon $\conv(a_1, \ldots, a_n) \cap W$ maintains a significant separation from its next vertex in clockwise order. When the vertex set $\{a_i\}_{i=1}^n$ is fixed, this edge-to-vertex distance can be decomposed into two geometric components:
\begin{itemize}
    \item the distance from all other vertices of $Q$ to the supporting hyperplane of the facet containing the edge, and
    \item the depth at which the intersection point $p = R \cap W$ lies in the interior of the ridge $R$.
\end{itemize}
Lemma~\ref{lem:det} shows that if these two quantities are bounded below by $\delta$ and $r$ respectively, then their product, $r\delta/3$, guarantees a significant edge-to-vertex gap in the projected polygon $Q \cap W$.

\begin{lemma}\label{lem:det}
Let $W \subset \R^d$ be a two-dimensional linear subspace,
$Q = \conv(a_1, \ldots, a_n) \subset \R^d$ be a non-degenerate polytope
with a non-degenerate intersection with $W$
such that $\max_{i, j\in [n]} \|a_i - a_j\| \leq 3$ and $W \cap Q \neq \emptyset$.
Fix any facet $F$ of $Q$ such that $F \cap W \neq \emptyset$
and any ridge $R \subset F$ of $F$ such that $W \cap R$ is a singleton set $\{p\}$.
Let $\delta, r \geq 0$ be such that
\begin{enumerate}
    \item (distance between $F$ and other vertices) $\forall a_k \notin F, \dist(\aff(F), a_k) \geq \delta$;
    \item (Inner radius of $R$) $\dist(\aff(F \cap W), \partial R) \geq r$.
\end{enumerate}
Then for any $p \in F \cap W$, the outer unit normal vector $\bar{\theta} \in W$
to the edge $F \cap W$ satisfies
\[
\bar{\theta}^\T p - \bar{\theta}^\top p' \geq \delta r/3,
\]
where $p' \in Q \cap W$ is the next vertex after $F \cap W$ in clockwise order.
\end{lemma}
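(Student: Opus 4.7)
My strategy is to decompose the quantity $\bar\theta^\T(p-p')$ using the geometry of the adjacent facet and to combine three ingredients: the tilt of $W$ relative to $\aff(F)$, the convex combination structure of $p'$ on its ridge, and the diameter bound used to translate barycentric weights into Euclidean distances.

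First, I relate $\bar\theta^\T(p-p')$ to the Euclidean distance from $p'$ to $\aff(F)$. Since $\bar\theta \in W$ is the outer unit normal to $\aff(e) = \aff(F)\cap W$ within $W$, and $n_F$ (the outer unit normal to $F$ in $\R^d$) is perpendicular to $\aff(F) \supseteq \aff(e)$, the projection $\pi_W(n_F)$ is a nonnegative scalar multiple of $\bar\theta$. Writing $\cos\theta := \|\pi_W(n_F)\| \in (0,1]$, we obtain, for any $v \in W$, the identity $n_F^\T v = \cos\theta \cdot \bar\theta^\T v$. Specializing to $v = p-p'$ yields
\[
    \bar\theta^\T(p-p') \;=\; \frac{n_F^\T(p-p')}{\cos\theta} \;=\; \frac{\mathrm{dist}(p',\aff(F))}{\cos\theta}.
\]

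Second, I analyze the adjacent facet. By non-degeneracy, there is a unique facet $F_2 \neq F$ of $Q$ satisfying $[p,p'] \subseteq F_2 \cap W$, and necessarily $F \cap F_2 = R$. Let $R'$ denote the ridge of $Q$ with $R' \cap W = \{p'\}$; then $R' \subseteq F_2$ and $R' \neq R$. Expressing $p' = \sum_j \lambda_j v_j$ as a convex combination of the vertices of $R'$, I split the vertices according to whether $v_j \in R$ (in which case $v_j \in F$, so $n_F^\T(p-v_j)=0$) or $v_j \in R'\setminus R$ (in which case $v_j \in F_2 \setminus F$, so $n_F^\T(p-v_j) \geq \delta$). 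With $\lambda^* := \sum_{v_j \in R' \setminus R} \lambda_j$, this yields $\mathrm{dist}(p',\aff(F)) \geq \delta \lambda^*$.

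Third, I aim to show $\lambda^*/\cos\theta \geq r/3$, which combined with the above produces the desired $\bar\theta^\T(p-p') \geq \delta r/3$. On one hand, writing $p' = (1-\lambda^*)\bar u + \lambda^* \bar w$ with $\bar u \in \conv(R'\cap R)$ and $\bar w \in \conv(R' \setminus R)$, the diameter bound gives $\mathrm{dist}(p', R\cap R') \leq \|p'-\bar u\| \leq 3\lambda^*$. On the other hand, the inner radius condition $\mathrm{dist}(F\cap W,\partial R)\geq r$ implies $\mathrm{dist}(p, R\cap R') \geq r$, since $R\cap R' \subseteq \partial R$. Comparing these Euclidean distances through the geometric relation between $W$, the hyperplane $\aff(F)$, and the ridge $\aff(R)$ should yield $\mathrm{dist}(p', R\cap R') \geq r\cos\theta$, finishing the proof.

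The main obstacle is precisely this last inequality in the third step: while it is easy to transport the inner radius from $p$ along $W$ in the case $\cos\theta = 1$ (i.e., $W \perp \aff(R)$), the general case requires comparing distances in the 2-plane $W$ (where $p'$ and $p$ live) to distances in $\aff(F_2)$ (where the combinatorial structure resides), with the tilt factor $\cos\theta$ quantifying the distortion. I expect this step to require a careful geometric argument using convexity of $Q$, the ``tent'' $\conv(B_{\aff(R)}(p,r) \cup \{v\})$ for a vertex $v \notin F$, and the non-degenerate structure of $Q \cap W$ near the vertex $p$.
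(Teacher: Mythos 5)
Your steps 1 and 2 are sound: the identity $\bar\theta^\T(p-p') = \dist(p',\aff(F))/\cos\theta$ and the bound $\dist(p',\aff(F)) \geq \delta\lambda^*$ (with $\lambda^*$ the barycentric weight of the single vertex in $R'\setminus R$) are both correct, and together they reduce the lemma to showing $\lambda^*/\cos\theta \geq r/3$. This is a valid alternative decomposition; you have made the tilt factor $\cos\theta = \|\pi_W(n_F)\|$ explicit rather than projecting it away.

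The gap is precisely step 3, as you flag, and the specific route you propose does not look correct. You want to derive $\dist(p', R\cap R') \geq r\cos\theta$ and combine it with $\dist(p', R\cap R') \leq 3\lambda^*$. But $\dist(p', \conv(a_j : j \in J\cap J'))$ can be as small as $\lambda^* \cdot h$ with $h := \dist(a_1, \aff(a_j : j\in J\cap J'))$ measured inside $\aff(R')$, and the hypotheses give no lower bound on $h$ (it can be far smaller than $3$). So even granting the true bound $\lambda^* \geq (r/3)\cos\theta$, the quantity $\dist(p', R\cap R')$ can fall well below $r\cos\theta$, and the intermediate inequality you hope to establish is false in general. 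The sandwich through $\dist(p',R\cap R')$ is too lossy.

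What the paper does instead is to observe that $\bar\theta^\T(p - p') = \|\pi_{s^\perp}(p-p')\| \geq \dist(\pi_{s^\perp}(p), \aff(\pi_{s^\perp}(R')))$, where $s$ is the direction of the edge $F\cap W$. This drops to the $(d-1)$-dimensional space $s^\perp$, where the projections of $a_1,\dots,a_d$ live and where $F\cap W$ collapses to the single point $q = \pi_{s^\perp}(p)$ satisfying $\dist(q, \partial\pi_{s^\perp}(R)) \geq r$. The needed lower bound $\dist(q, \aff(\pi_{s^\perp}(R'))) \geq r\delta/3$ is then obtained by Lemma~\ref{lem:rand-r-lb-cone}: cut the configuration with the two-dimensional plane through $b_1$, $q$ and the foot of the perpendicular from $q$ to $\aff(B')$, and compute the area of one triangle in two ways (base $\leq 3$ with height $\|q-q'\|$, versus base $\geq r$ with height $\geq \delta$). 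This triangle-area step is the actual geometric content of the lemma; in your parametrization it is exactly the statement $\lambda^*/\cos\theta \geq r/3$ (since $\bar\theta^\T(p-p') = \lambda^*\,\dist(a_1,\aff F)/\cos\theta$), so you would need to supply an argument of comparable strength, not the unavailable bound $\dist(p',R\cap R')\geq r\cos\theta$.
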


The reader who desires a more intuitive illustration of the geometry involved
may take a look at \Cref{fig:dist-cone} from the next lemma, where relevant concepts are depicted
as they happen for $d=4$. The simplex is the ``next'' facet after $F$, and its bottom face $B$
is the ridge of $Q$ that is shared with $F$.
The distance from $b_1$ to $B$ is large due to the first assumption of \Cref{lem:det},
and the distance from $q$ to the boundary of $B$ is large due to the second assumption.
To map the depicted facet to a three-dimensional space for the purpose of the illustration,
it has been projected orthogonally to the subspace perpendicular to $F \cap W$.

We remark that \Cref{lem:det} gives a sufficient condition assuming that $\{a_{i}: i \in [n]\}$ is fixed. In later subsections, our goal is to show this sufficient condition actually occurs with high probability even when $\{a_{i}: i \in [n]\}$ are Laplace-Gaussian distributed random variables.

To start, we show a lemma about the distance from a point in the simplex to the affine hull of its neighboring facet.
\begin{lemma}\label{lem:rand-r-lb-cone}
    Given $b_1, \ldots, b_d \in \R^{d-1}$ such that $\conv(b_1, \ldots, b_d)$ is non-degenerate. For any $D > 0$, suppose
    \begin{enumerate}
        \item $\forall i, j \in [d]$, $\|b_i - b_j\| \leq D$;
        \item $\dist(b_1, \aff(b_2, \ldots, b_d)) \geq \delta$;
        \item There exists $q \in \conv(b_2, \ldots, b_d)$ such that $\dist(q, \partial(\conv(b_2, \ldots, b_d))) \geq r$.
    \end{enumerate}
    Then we have $\dist(q, \aff(b_1, \ldots, b_{d-1})) \geq r\delta / D$.
\end{lemma}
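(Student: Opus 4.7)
The plan is to reduce everything to a single coordinate, namely the barycentric coordinate $\lambda_d$ of $q$ relative to the simplex $\conv(b_2,\dots,b_d)$. I will show that the same factor $\lambda_d$ controls both (a) the distance from $q$ to the facet $\conv(b_2,\dots,b_{d-1})$ of that simplex, and (b) the distance from $q$ to the target affine hull $H' := \aff(b_1,\dots,b_{d-1})$. A volume comparison will then convert the lower bound on (a) into the desired lower bound on (b).

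First I write $q = \sum_{i=2}^d \lambda_i b_i$ with $\lambda_i \geq 0, \sum_i \lambda_i = 1$. Let $H := \aff(b_2,\dots,b_d)$ and let $h_d := \dist(b_d,\aff(b_2,\dots,b_{d-1}))$, the height of the simplex $\conv(b_2,\dots,b_d)$ from $b_d$ measured inside $H$. The signed distance from any point of $H$ to the hyperplane $\aff(b_2,\dots,b_{d-1})$ (inside $H$) is an affine function that vanishes on $b_2,\dots,b_{d-1}$ and takes value $h_d$ at $b_d$, so applying it to $q$ yields $\dist(q,\aff(b_2,\dots,b_{d-1})) = \lambda_d h_d$. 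Since this facet lies in $\partial \conv(b_2,\dots,b_d)$, hypothesis 3 gives $\lambda_d h_d \geq r$.

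Next, note that $H \cap H' \supseteq \aff(b_2,\dots,b_{d-1})$, so the signed distance of a point in $H$ to $H'$ is an affine function on $H$ vanishing on $b_2,\dots,b_{d-1}$ and equal to $\pm D$ at $b_d$, where $D := \dist(b_d,H')$. Evaluating at $q$ gives
\[
\dist(q, H') \;=\; \lambda_d\, D.
\]
It remains to lower bound $D$. Writing $V := \vol_{d-1}(\conv(b_1,\dots,b_d))$ twice, via the two bases opposite $b_1$ and $b_d$,
\[
V \;=\; \tfrac{1}{d-1}\,\vol_{d-2}(\conv(b_2,\dots,b_d))\cdot \dist(b_1,\aff(b_2,\dots,b_d)) \;=\; \tfrac{1}{d-1}\,\vol_{d-2}(\conv(b_1,\dots,b_{d-1}))\cdot D,
\]
and further splitting each $(d-2)$-volume along the common face $\conv(b_2,\dots,b_{d-1})$ cancels the factor $\vol_{d-3}(\conv(b_2,\dots,b_{d-1}))$ and leaves
\[
D \;=\; \frac{h_d}{h_1'}\cdot \dist(b_1,\aff(b_2,\dots,b_d)) \;\geq\; \frac{h_d}{h_1'}\cdot \delta,
\]
where $h_1' := \dist(b_1,\aff(b_2,\dots,b_{d-1}))$. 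The diameter bound gives $h_1' \leq \|b_1 - b_2\| \leq 3$, hence $D \geq h_d\,\delta/3$. Combining,
\[
\dist(q,H') \;=\; \lambda_d D \;\geq\; \frac{r}{h_d}\cdot \frac{h_d\,\delta}{3} \;=\; \frac{r\,\delta}{3}.
\]

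The potential obstacle is purely bookkeeping: making sure the affine-function identity $\dist(q,H') = \lambda_d D$ uses $b_2,\dots,b_{d-1} \in H \cap H'$ correctly, and that the volume splitting uses the shared $(d-3)$-face $\conv(b_2,\dots,b_{d-1})$ so that the common volume factor cancels cleanly. Non-degeneracy of $\conv(b_1,\dots,b_d)$ guarantees $h_d, h_1' > 0$ so all the divisions are legitimate.
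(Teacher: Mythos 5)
Your proposal is correct in substance but takes a genuinely different route from the paper. The paper works inside the two\-/dimensional plane $\aff(b_1,q,q')$, where $q'$ is the foot of $q$ on $\aff(b_1,\dots,b_{d-1})$: it computes the area of the triangle $\conv(b_1,q,x)$, with $x$ the point where this plane meets the common face $\conv(b_2,\dots,b_{d-1})$, in two ways, using the diameter bound for the base $\|b_1-x\|\leq 3$ and hypotheses 2--3 for the other base and height. You instead factor the target distance through the barycentric coordinate $\lambda_d$ of $q$ in the ridge, writing $\dist(q,\aff(b_1,\dots,b_{d-1}))=\lambda_d D$, and then compare the volume of the full simplex $\conv(b_1,\dots,b_d)$ over the two facets opposite $b_1$ and $b_d$ (and the facet volumes over their shared face) to get $D\geq h_d\delta/h_1'\geq h_d\delta/3$. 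Both arguments are at heart ``compute a volume two ways'' --- the paper for a planar triangle in a well-chosen slice, you for the full simplex --- and they consume the hypotheses in the same places; yours avoids introducing and controlling the auxiliary points $q'$ and $x$, at the cost of the volume bookkeeping, and it works verbatim in the boundary case $d=3$ with the convention $\vol_0=1$.

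One step needs a tighter justification: the claim $\lambda_d h_d\geq r$. You justify it by noting that the facet $\conv(b_2,\dots,b_{d-1})$ lies in $\partial\conv(b_2,\dots,b_d)$, but that only yields $\dist(q,\conv(b_2,\dots,b_{d-1}))\geq r$, whereas $\lambda_d h_d=\dist(q,\aff(b_2,\dots,b_{d-1}))\leq\dist(q,\conv(b_2,\dots,b_{d-1}))$: the inequality between the two distances points the wrong way, so containment of the facet in the boundary is not enough by itself (in a thin simplex a point can be far from a facet yet close to that facet's affine hull). The claim is nevertheless true using the full strength of hypothesis 3: since $q\in\conv(b_2,\dots,b_d)$ and its distance to the relative boundary is at least $r$, the ball of radius $r$ about $q$ inside $\aff(b_2,\dots,b_d)$ is contained in the ridge simplex, and since $\aff(b_2,\dots,b_{d-1})$ is a supporting hyperplane of that simplex within its affine hull, the ball cannot cross it; hence $\dist(q,\aff(b_2,\dots,b_{d-1}))\geq r$. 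With this one-line repair, the remaining steps ($\dist(q,\aff(b_1,\dots,b_{d-1}))=\lambda_d D$, the two volume identities, and $h_1'\leq\|b_1-b_2\|\leq 3$) are sound.
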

\begin{proof}
    For simplicity, write $B = \conv(b_2, \ldots, b_{d})$ and $B' = \conv(b_1, \ldots, b_{d-1})$.
    Let $q' = \pi_{B'}(q)$ be the point closest to $q$ on $\aff(B')$, i.e., $\|q-q'\|=\dist(q, \aff(b_1, \ldots, b_{d-1}))$.

    Let $x = (B \cap B') \cap \aff(b_1, q, q')$ be its intersection between the two-dimensional plane $\aff(b_1, q, q')$ and the $(d-3)$-dimensional ridge $B \cap B'$ (which gives a unique point). (See \Cref{fig:dist-cone} for an illustration).
    \begin{figure}[h]
    \centering
       \includegraphics[width=0.5\linewidth]{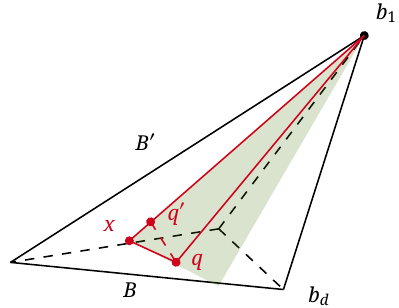}
    \caption{Illustration of \Cref{lem:rand-r-lb-cone} when $d-1 = 3$. In light green is the intersection between the two-dimensional plane $\aff(b_1, q, q')$ and $\conv(b_1, \ldots, b_d)$. The red triangle is $\conv(b_1, x, q)$. The bottom face is $B$ and the left-facing back face is $B'$.}
    \label{fig:dist-cone}
    \end{figure}
    Consider the triangle $\conv(b_1, q, x)$ and calculate its area
    in two different ways.
    On one hand, it has base $\conv(b_1, x)$ of length $\|b_1 - x\| \leq D$ with height
    $\dist(q, \aff(b_1, x)) = \|q - q'\|$, which gives that the area of the triangle is
    at most $D \|q - q'\|/2$.
    On the other hand, this triangle has base $\conv(x, q)$ of length $\|x - q\| \geq \dist(q, \partial(B)) \geq r$ with height $\dist(b_1, \aff(x, q)) \geq \dist(b_1, \aff(B)) \geq \delta$, which gives that the area of the triangle is at least $r\delta/2$.

    Combining the above two ways of determining the area of triangle $\conv(b_1, q, x)$, we have $D \|q - q'\|/2 \geq r\delta/2$.
    Therefore we have $\dist(q, \aff(B')) = \|q - q'\| \geq \frac{r\delta}{R}$ as desired.
\end{proof}

What \Cref{lem:rand-r-lb-cone} tells us is that, if we have sufficiently strong information about the geometry in base of a simplex (conditions 1 and 2)
then we can relate the height of that simplex to the length of a given chord. In our setting, we let a ridge of the polar polytope play the role of the base, and take the apex to be one of the vertices $a_i$ not contained in the chosen facet.

To prove our main deterministic geometric result (Lemma~\ref{lem:det}), we \emph{squash} the configuration by projecting it orthogonally onto the $(d-1)$-dimensional subspace $s^\perp$ orthogonal to the edge direction. Under this projection, the two neighboring ridges $R$ and $R'$ become adjacent facets of a simplex, with the apex corresponding to the first vertex outside the facet~$F$. The bounded diameter, height, and inner radius conditions in Lemma~\ref{lem:det} then align precisely with the assumptions of Lemma~\ref{lem:rand-r-lb-cone}, which allow us to apply it and lift the resulting bound back to the original two-dimensional slice~$W$.

\begin{proof}[Proof of Lemma~\ref{lem:det}]

    \begin{figure}[h]
      \centering
             \includegraphics[width=0.6\linewidth]{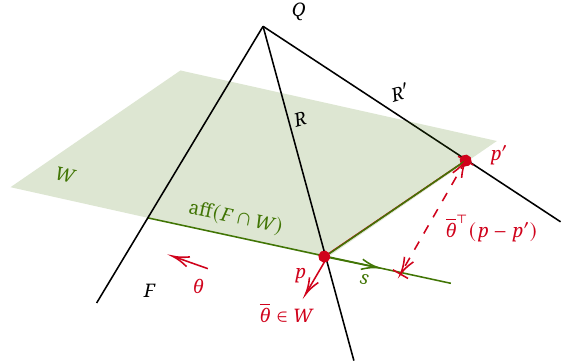}
    \caption{Illustration of the variables in \Cref{lem:det} for $d = 3$. The light-green parallelogram is the two-dimensional plane $W$.
    The red arrows are $\theta$ (outward unit normal of $F$) and $\bar{\theta} = \pi_W(\theta) / \|\pi_W(\theta)\|$.
    The red points $p = W \cap R$ and $p' = W \cap R'$ are two consecutive points on the polygon $Q \cap W$. The red dashed line demonstrates the edge-to-vertex distance $\bar{\theta}^\top (p - p')$.}
    \label{fig:det-dist}
    \end{figure}

Write $R'$ for the ridge of $Q$ such that
$\{p'\} = R' \cap W$.
Since $p' \in Q \cap W$ is
adjacent to vertex $p$ and the edge $F \cap W$,
by \Cref{fact:diff-adjacent-vertex} we may relabel the $a_i$ such that
$R' = \conv(a_1,\dots,a_{d-1})$,
$R = \conv(a_2,\dots,a_d),$ and $F = \conv(a_2,\dots,a_{d+1})$ without loss of generality.
Let $\theta \in \sfe$ denote the outward unit normal to $F$.
This normal vector satisfies
\[
    \label{eq:ddim-dist}
    \delta \leq \min_{\substack{i \in [n] \\ a_i \notin F}} \theta^T( p - a_i)
    \leq \theta^\T( p - a_1).
\]
Let $s \in \mathbb{S}^{d-1}$ be a unit vector indicating the direction of the (one-dimensional) line $F \cap W$. This vector is unique up to sign.
Also, let $\bar \theta = \pi_W(\theta)/\|\pi_W(\theta)\|$
be the outward unit normal to $F \cap W$ in the two-dimensional plane $W$.
See \Cref{fig:det-dist} for an illustration of the variables for $d = 3$.
Notice that $\bar{\theta}$ and $s$ form an orthonormal basis of $W$.
Therefore we get
    \begin{align}
        \label{eq:2d-dist}
        \bar \theta^\T (p - p')
        = \bar{\theta}^\T \pi_{s^\perp} (p - p')
        = \|\pi_{s^\perp} (p - p')\|
    \end{align}
Here the last equality comes from $(p - p') \in W = \mathrm{span}(\bar{\theta}, s)$, so $\pi_{s^\perp} (p - p') = \pi_{\bar \theta} (p - p')$.

Now we focus on the $(d-1)$-dimensional space $s^\perp$, and consider the orthogonal projections $\pi_{s^\perp}(a_1), \ldots, \pi_{s^\perp}(a_d)$.
Since the diameter of $\conv(a_1, \ldots, a_d)$ is at most $3$, we have $\max_{i, j \in [d]} \|\pi_{s^\perp}(a_i) - \pi_{s^\perp}(a_j)\| \leq 3$.
By definition, $\theta$ is a unit normal of $R$, so since $s \in R$, we have $\theta \in s^{\perp}$. It follows that $\theta$ is also a unit normal of $\pi_{s^\perp}(R) = \pi_{s^\perp}(\conv(a_2, \ldots, a_d))$. This gives
$$\dist(\pi_{s^\perp}(a_1), \aff(\pi_{s^\perp}(R))) = \theta^\top (p - a_1) \geq \delta.$$
Also, since $\dist(F \cap W, \partial R) \geq r$ and $F \cap W$ is one-dimensional, after the projection to $s^\perp$ we have
$$\dist(\pi_{s^\perp}(p), \partial \pi_{s^\perp}(R)) = \dist(\aff(F \cap W), \partial R) \geq r.$$
Therefore we can use \Cref{lem:rand-r-lb-cone} to get
\begin{align*}
    \|\pi_{s^\perp}(p) - \pi_{s^\perp}(p')\| \geq \dist(\pi_{s^\perp}(p), \aff(\pi_{s^\perp}(R'))) \geq r \delta /3,
\end{align*}
where the first step comes from $\pi_{s^\perp}(p') \in \aff(\pi_{s^\perp}(R'))$.
The lemma then follows from \eqref{eq:2d-dist}.
\end{proof}

\subsection{Randomized Lower-Bound for \texorpdfstring{$\delta$}{delta}: Distance between vertices and facets}\label{sub:rand-delta-lb}

In this section, we show that the affine hull of a given facet $F$ of the polytope
$\conv(a_1, \ldots, a_n)$ is $\Omega(\frac{1}{Ld\log n})$-far away to other
vertices with good probability, or in other words, the distance $\delta$ in
\Cref{lem:det} is at least $\Omega(\frac{1}{Ld\log n})$ with good probability.
Our main result of this section is as follows:

\begin{lemma}[Randomized lower-bound for $\delta$]\label{lem:rand-delta-lb}
    Let $a_1,\dots,a_n \in \R^d$ be independent $L$-log-Lipschitz random vectors.
    For any $I \in \binom{[n]}{d}$ such that $\Pr[E_I] \geq 10\binom{n}{d}^{-1}$, we have
    \[
        \Pr[\min_{k \in [n] \backslash I}\dist(\aff(F_I), a_k) \geq \frac{1}{10e^3 dL \log n}) \mid E_I] \geq 0.72.
    \]
\end{lemma}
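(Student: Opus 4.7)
The plan is to reduce the lemma, via a change of variables and conditioning, to a pointwise bound over the remaining randomness in $a_{d+1}, \dots, a_n$, and then close the argument using log-Lipschitzness. First, I would apply \Cref{lem:smooth-jacobian} to reparametrize $(a_1, \dots, a_d)$ as $(\theta, t, b_1, \dots, b_d)$ where $\theta \in \sfe$ is the outer unit normal to $\aff(F_I)$, $t = \theta^\T a_i$ the offset, and $b_1, \dots, b_d \in \R^{d-1}$ encode the positions within the facet's affine hull. Because the event $F_I \cap W \neq \emptyset$ depends only on $(\theta, t, b_1, \dots, b_d)$, conditioning on these parameters decouples the $W$-intersection condition from the facet condition; the problem reduces to showing that, for each such configuration with $F_I \cap W \neq \emptyset$, one has $\Pr[\text{good} \mid \text{facet}, \theta, t, b] \geq 0.72$ where ``good'' denotes $\min_{k \notin I} \dist(\aff(F_I), a_k) \geq \gamma$ and ``facet'' denotes that $F_I$ is a facet of $\conv(a_1, \dots, a_n)$.

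Second, I would split ``facet'' into the two orientation sub-events $E_I^\pm$ according to which side of $\aff(F_I)$ the points $\{a_k\}_{k \notin I}$ lie on; by symmetry it suffices to treat $E_I^+$, where all $\theta^\T a_k \leq t$. The one-dimensional marginal of $\theta^\T a_k$ inherits the $L$-log-Lipschitz property from $a_k$, so writing $F_k(s) = \Pr[\theta^\T a_k \leq s]$, the independence of the $a_{d+1}, \dots, a_n$ yields
\[
    \Pr[\text{good} \mid E_I^+, \theta, t, b] \;=\; \prod_{k \notin I} \frac{F_k(t - \gamma)}{F_k(t)}.
\]
The per-factor bound $F_k(t - \gamma) / F_k(t) \geq e^{-L\gamma}$ follows from the pointwise density inequality $\mu_k(s-\gamma) \geq e^{-L\gamma}\mu_k(s)$ and integration. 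Multiplying over $k \notin I$ gives $\Pr[\text{good} \mid E_I^+, \theta, t, b] \geq e^{-(n-d)L\gamma}$, which for $\gamma = 1/(10 e^3 d L \log n)$ equals $\exp\bigl(-(n-d)/(10 e^3 d \log n)\bigr)$.

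The main obstacle is the large-$n$ regime in which the exponent $(n-d)/(10 e^3 d \log n)$ is not automatically bounded by $\log(1/0.72) \approx 0.33$, making the pointwise multiplicative bound alone insufficient. This is precisely where the hypothesis $\Pr[E_I] \geq 10\binom{n}{d}^{-1}$ enters: I would separate the integration over $(\theta, t, b)$ into a ``typical'' region where $\prod_{k \notin I} F_k(t)$ is large enough that the multiplicative estimate already exceeds $0.72 \cdot \prod_{k \notin I} F_k(t)$, and an ``atypical'' region whose total contribution to $\Pr[E_I]$ is bounded above by $\binom{n}{d}^{-1} \leq 0.1 \Pr[E_I]$ through an averaging argument summing over the $\binom{n}{d}$ choices of index set $I$ (since any realization of $(a_1, \dots, a_n)$ contains at most $\binom{n}{d}$ facets). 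Combining the two regimes and accounting for both orientations $E_I^\pm$ then produces the claimed constant $0.72$.
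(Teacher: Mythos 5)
Your outline shares the paper's scaffolding (change of variables to $(\theta,t,b_1,\dots,b_d)$, splitting $E_I$ into the two orientations, and reducing to a conditional probability over $a_{d+1},\dots,a_n$ for fixed $\theta,t$), and your factorization $\Pr[\text{good}\mid E_I^+,\theta,t,b]=\prod_{k\notin I}F_k(t-\gamma)/F_k(t)$ is correct. However, the per-factor bound $F_k(t-\gamma)/F_k(t)\geq e^{-L\gamma}$ is far too lossy: multiplying over $n-d$ factors gives $e^{-(n-d)L\gamma}$, which stays bounded away from $0$ only when $\gamma\lesssim 1/(nL)$, whereas the lemma needs $\gamma\asymp 1/(dL\log n)$. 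Your proposed fix for the large-$n$ regime does not close this gap: the typical region you describe (where $\prod_k F_k(t)$ is large) does not improve the exponent $(n-d)L\gamma$, since the per-factor estimate $e^{-L\gamma}$ is a uniform bound independent of how close $F_k(t)$ is to $1$; and the averaging argument over the $\binom{n}{d}$ index sets is not relevant here, as the hypothesis $\Pr[E_I]\geq 10\binom{n}{d}^{-1}$ is about a single fixed $I$.

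The missing idea is a sharper per-point bound that scales with $\Pr[s_k\geq t]$, not a uniform $L\gamma$. Using log-Lipschitzness more carefully (paper's \eqref{eq:prob-compare-1}) one shows $1-F_k(t-\gamma)/F_k(t)=\Pr[t-\gamma\leq s_k<t\mid s_k<t]\leq e^3\gamma L\,\Pr[s_k\geq t]$, so
\[
1-\prod_{k\notin I}\frac{F_k(t-\gamma)}{F_k(t)}\;\leq\;\sum_{k\notin I}\left(1-\frac{F_k(t-\gamma)}{F_k(t)}\right)\;\leq\;e^3\gamma L\;\E\bigl[\#\{k\notin I:s_k\geq t\}\bigr].
\]
The second ingredient (\Cref{claim:highprobabilityimpliesfewnearbypoints}) is a Chernoff argument showing that, for fixed $\theta,t$, if $\Pr[\text{all }s_k<t]\geq n^{-d}$ then $\E[\#\{k:s_k\geq t\}]\leq 2d\log n$; the atypical case $\Pr[\text{all }s_k<t]<\tfrac12\binom{n}{d}^{-1}$ is absorbed directly into the additive $\binom{n}{d}^{-1}$ slack. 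With $\gamma=1/(10e^3dL\log n)$ the product is then at least $4/5$ in the typical case, and the hypothesis $\Pr[E_I]\geq 10\binom{n}{d}^{-1}$ converts the additive $\binom{n}{d}^{-1}$ error into the final constant $0.72$. Without the $\Pr[s_k\geq t]$-weighted per-point estimate and the Chernoff control on the expected count, the argument cannot reach the claimed $\gamma$.
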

\paragraph{Intuition}
The proof of this lemma will span this entire subsection.
Let us start with some words on the intuition behind it.
Assume $\aff(F_I)$ is fixed arbitrarily. Then the quantities $\dist(\aff(F_I),a_k)$
are determined solely by the points $a_k$, $k \in [n]\setminus I$.
The points are $L$-log-Lipschitz, which makes each signed distance $\abs{\theta^\top a_k} = \dist(\aff(F_I), a_k) \in \R$
into an $L$-log-Lipschitz random variable as well.
Any $L$-log-Lipschitz random variable has its probability density function pointwise upper bounded by $L$,
and hence the probability that for a given $k \in [n]\setminus I$ we have
$\Pr[\theta^\top a_k \in [-\eps, \eps]] \leq 2\eps L$.
A union bound would then give
\[\Pr[\min_{k \in [n] \backslash I}\dist(\aff(F_I), a_k)\leq\eps] \leq 2(n-d)\eps L.\]
Taking $\eps = \frac{0.28}{2(n-d)L}$ would give us a bound on the probability.

It is clear that this weaker version of \Cref{lem:rand-delta-lb} would be relatively easy to prove.
However, it has a linear dependence on $n$ and thus it would add a factor of $\sqrt{n}$ to our shadow bound.
This is undesirable.
To obtain the stronger conclusion, we consider the expected number of points
at close distance to $\aff(F_I)$.
\begin{equation}
\Pr\Big[\min_{k \in [n] \backslash I}\dist(\aff(F_I), a_k) \leq \eps ~\Big|~ E_I\Big]
\leq
\E\Big[\abs{\{i \in [k]\setminus I : \dist(\aff(F_I),a_k) \leq \eps \}} ~\Big|~ E_I\Big].\label{eq:expectedcloseby}
\end{equation}
If $\eps \leq 1/L$ then we can use $L$-log-Lipschitzness to derive a lower bound on the number of points $a_k$
lying above (or below) the affine subspace $\aff(F_I)$.
What we prove is that if \eqref{eq:expectedcloseby} is large then (without conditioning on $E_I$)
both the expected number of points above $\aff(F_I)$ and the expected number of points below $\aff(F_I)$
are at least $\Omega(\frac{1}{\eps L})$ times as large as \eqref{eq:expectedcloseby}.

However, recall that the event $E_I$ requires all points to lie on the same side of $\aff(F_I)$.
If there is simultaneously a point above $\aff(F_I)$ and a point below $\aff(F_I)$ then $E_I$ does not hold.
Using the Chernoff bound we can show, conditional on some $\aff(F_I) = H$, that if we have
$\Pr[E_I \mid \aff(F_I) = H] \geq n^{-d}$
then at least one of the expected number of points above $\aff(F_I)$ or the expected number of points below
$\aff(F_I)$ must be at most $2d \log n$.
If one of these is bounded from above, then \eqref{eq:expectedcloseby} must be bounded from above.
Taking proper care to observe that those affine subspaces $H$ for which $\Pr[E_I \mid \aff(F_I) = H] \geq n^{-d}$
together account for most of the probability mass, this will yield the desired result.

\paragraph{Proving the lemma}
To show \Cref{lem:rand-delta-lb}, we fix any $I \in \binom{[n]}{d}$ of consideration.
Without loss of generality, assume $I = [d]$ and write $E = E_{[d]}$.
We define the following event $B_{\epsilon}$ indicating that the distance from
$\aff(F_{[d]})$ to other vertices is at least $\epsilon$.

\begin{definition}[Separation by the margin of a facet]\label{def:Beps}
    Let $\theta \in \mathbb S^{d-1}, t \in \R$ be as in \Cref{def:change-of-variables}.
For any $\eps > 0$, let $B_\eps^+$ denote the event that
$\theta^\T a_i < t - \eps$ for all $i \in [n]\setminus[d]$
and $B_\eps^-$ denote the event that
$\theta^\T a_i > t + \eps$ for all $i \in [n]\setminus[d]$.
We write $B_\eps = B_\eps^+ \vee B_\eps^-$.
\end{definition}

In the following lemma, we show that for sufficiently small $\epsilon$, $\Pr[E \wedge B_{\epsilon}]$ is still a constant fraction of $\Pr[E]$.

\begin{lemma}\label{lem:delta-prob-compare}
For any $0 < \eps \leq \frac{1}{10e^3 L d\log n}$ it holds that
\[
\Pr[E] \leq \binom{n}{d}^{-1} + \frac{5}{4} \cdot \Pr[E \wedge B_\eps].
\]
\end{lemma}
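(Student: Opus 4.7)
The plan is to condition on the change-of-variables quantities and reduce the statement to a one-dimensional log-Lipschitz comparison in $t$. Writing $(\theta, t, b_1, \ldots, b_d) = \phi^{-1}(a_1, \ldots, a_d)$ from \Cref{def:change-of-variables}, I first condition on $\theta$, $b_1, \ldots, b_d$, and $a_{d+1}, \ldots, a_n$, leaving $t$ as the only residual variable. By \Cref{fact:joint-t-log-lipschitz}, the conditional density $\mu$ of $t$ is $dL$-log-Lipschitz. Set $F_{[d]}(s) := \conv(R_\theta(b_i) + s\theta : i \in [d])$ and $T^* := \{s \in \R : F_{[d]}(s) \cap W \neq \emptyset\}$. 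The set $T^*$ is an interval, since any convex combination of witnesses $x_1 \in F_{[d]}(s_1) \cap W$ and $x_2 \in F_{[d]}(s_2) \cap W$ lies in $F_{[d]}(\lambda s_1 + (1-\lambda) s_2) \cap W$. Setting $t_0^+ := \max_{j \notin [d]} \theta^\top a_j$ and $t_0^- := \min_{j \notin [d]} \theta^\top a_j$, the events reduce to $E^+ = \{t \in T^* \cap [t_0^+, \infty)\}$ and $E^+ \wedge B_\eps^+ = \{t \in T^* \cap [t_0^+ + \eps, \infty)\}$, with analogous reductions on the $-$ side. Since $E = E^+ \cup E^-$ and $E^\pm \cap B_\eps^\mp = \emptyset$, it suffices by symmetry to analyze the $+$ case.

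The main tool is a chunk-shifting argument. Partition $T^* \cap [t_0^+, \infty)$ from its left endpoint into consecutive half-open intervals $J_0, J_1, J_2, \ldots$ of length exactly $\eps$ each. By $dL$-log-Lipschitzness of $\mu$, whenever $J_k$ is a full chunk contained in $T^* \cap [t_0^+, \infty)$ we have
\[
\int_{J_k} \mu(t) \, dt \geq e^{-dLk\eps} \int_{J_0} \mu(t) \, dt.
\]
Using $\eps \leq \tfrac{1}{10 e^3 dL \log n}$, each factor $e^{-dLk\eps}$ for $k \leq 5$ is arbitrarily close to $1$. Hence whenever $T^* \cap [t_0^+, \infty)$ contains at least $K \geq 5$ full chunks,
\[
\Pr[E^+ \wedge B_\eps^+ \mid \mathrm{cond}] \geq \sum_{k=1}^{5} \int_{J_k} \mu \; \geq \; 4 \int_{J_0} \mu \; \geq \; 4 \, \Pr[E^+ \wedge \neg B_\eps^+ \mid \mathrm{cond}],
\]
which integrates to $\Pr[E^+ \wedge \neg B_\eps^+ \wedge \text{long}] \leq \tfrac{1}{4} \Pr[E^+ \wedge B_\eps^+]$ for the conditionings where this case applies.

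The remaining case is the \emph{short} case in which $T^* \cap [t_0^+, \infty)$ has length less than $5\eps$. Here the chunk argument yields no useful multiplicative bound, so the plan is to bound the total contribution of this case to $\Pr[E^+ \wedge \neg B_\eps^+]$ by $\tfrac{1}{2}\binom{n}{d}^{-1}$ directly. A short $T^* \cap [t_0^+, \infty)$ corresponds to one of two non-generic geometric conditions on $(\theta, b_1, \ldots, b_d, a_{d+1}, \ldots, a_n)$: either $T^*$ itself is short (a tangency-like constraint describing how the sweep $\{F_{[d]}(s)\}_s$ passes through $W$), or the right endpoint of $T^*$ lies within $5\eps$ of $t_0^+$ (forcing the $\theta^\top$-extremal $a_j$ into an $\eps$-thick slab). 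Using the crude pointwise bound $\Pr[E^+ \mid \mathrm{cond}, \text{short}] \leq O(dL\eps)$---valid because any $dL$-log-Lipschitz density on $\R$ is pointwise $O(dL)$---together with the Jacobian $(d-1)! \vol_{d-1}(\conv(b_1, \ldots, b_d))$ from \Cref{lem:smooth-jacobian}, the short-case contribution should fit within the $\binom{n}{d}^{-1}$ additive error after integrating over the conditioning.

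Combining the long-case multiplicative bound with the short-case additive bound, and summing the $+$ and $-$ contributions, yields $\Pr[E \wedge \neg B_\eps] \leq \binom{n}{d}^{-1} + \tfrac{1}{4} \Pr[E \wedge B_\eps]$, equivalent to the stated inequality $\Pr[E] \leq \binom{n}{d}^{-1} + \tfrac{5}{4} \Pr[E \wedge B_\eps]$. The main technical obstacle is the short case: turning the informal ``codimension-one, $O(\eps)$-thick'' picture into a rigorous quantitative estimate requires carefully tracking how the endpoints of the interval $T^*$ depend on $\theta$ and $b_1, \ldots, b_d$, and how this dependence interacts with the product log-Lipschitz density over $a_{d+1}, \ldots, a_n$. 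I expect this step to occupy the bulk of the proof.
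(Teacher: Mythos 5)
Your proposal conditions on $(\theta, b_1,\ldots,b_d, a_{d+1},\ldots,a_n)$ and leaves only the one-dimensional slack variable $t$ random, whereas the paper's proof conditions on $(a_1,\ldots,a_d)$ (equivalently on $(\theta,t,b_1,\ldots,b_d)$) and leaves the $n-d$ independent projections $s_i = \theta^\top a_i$ random. Your ``long case'' chunk-shift argument is sound, and the observation that $T^*$ is an interval is correct. The genuine gap is the short case, and the issue is more than a matter of laborious bookkeeping: the crude bound cannot possibly produce a $\binom n d^{-1}$ additive error, even after integrating the Jacobian.

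Concretely, the pointwise density bound for a $dL$-log-Lipschitz variable gives $\Pr[E^+ \mid \mathrm{cond}, \text{short}] \leq O(dL\eps)$. Integrating over the conditioning yields $\Pr[E^+ \wedge \neg B_\eps^+ \wedge \text{short}] \leq O(dL\eps)\cdot\Pr[\text{short}]$, which with $\eps = \Theta\bigl(\tfrac{1}{dL\log n}\bigr)$ is $O(1/\log n)$ at best. But the downstream application (\Cref{lem:rand-delta-lb}) relies on the additive error being $\binom n d^{-1}$, so that the hypothesis $\Pr[E_I] \geq 10\binom n d^{-1}$ makes the error term negligible relative to $\Pr[E_I]$. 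An $O(1/\log n)$ additive error would demand $\Pr[E_I] \gtrsim 1/\log n$, which is far too restrictive and would destroy the shadow bound. In other words, your short-case estimate is off by a factor on the order of $\binom n d / \log n$.

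What your decomposition has discarded is exactly the ingredient the paper uses to beat this: the product structure of the randomness in $a_{d+1},\ldots,a_n$. By conditioning on $(a_1,\ldots,a_d)$ only, the paper leaves the $s_i$ as \emph{many independent} $L$-log-Lipschitz random variables, and the key step (\Cref{claim:highprobabilityimpliesfewnearbypoints}) is a Chernoff bound: if $\Pr[B_0^+] \geq n^{-d}$, i.e.\ all the $s_i$ fall below $t$ with non-negligible probability, then $\E[\#\{i : s_i \geq t\}] \leq 2d\log n$, so on average few of them can be close to $t$. The $\binom n d^{-1}$ error in the paper then arises not from a tangency-type geometric bound but from the complementary case $\Pr[B_0^+] < \tfrac{1}{2}\binom n d^{-1}$, which is handled trivially. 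Once you condition on all of $a_{d+1},\ldots,a_n$, that independence is gone and the Chernoff estimate is unavailable, so to rescue your route you would essentially have to re-introduce the paper's conditioning inside your short case, at which point you are reproducing the paper's argument rather than avoiding it.
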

\begin{proof}
    Writing random variables as subscripts to denote which expectation is over which variables,
    we start by using Fubini's theorem to write
    $$\Pr_{a_1,\dots,a_n}[E] = \E_{a_1,\dots,a_d}[ \Pr_{a_{d+1},\dots,a_n}[E]].$$
    Fix any $a_1,\dots,a_d \in \R^n$ subject to \mbox{$\conv(a_1,\dots,a_d) \cap W \neq \emptyset$} and the non-degeneracy assumptions in \Cref{lem:non-degeneracy}.
    Define $\theta \in \sfe, t > 0$ as described in \Cref{def:change-of-variables}, i.e., $\theta^\top a_i = t$ for each $i \in [d]$.
    Write $s_i = \theta^\T a_i$ for each $i \in [n]\setminus [d]$.
    We note that $s_i$ is an $L$-log-Lipschitz random variable for all $i \in [n] \backslash [d]$.
    Moreover, over the remaining randomness in $a_{d+1},\dots,a_n$, we have
    $\Pr[E] = \Pr[B_0^+] + \Pr[B_0^-]$ and $\Pr[B_\eps] = \Pr[B_\eps^+] + \Pr[B_\eps^-]$.
    We will show that
    \begin{equation}
        \Pr[B_0^+] \leq \frac{1}{2\binom{n}{d}} + \frac 5 4 \Pr[B_\eps^+] \label{eq:delta-prob-compare-eq}
    \end{equation}
    and the appropriate statement will follow for $B_\eps^-$ analogously.
    Putting together this will prove the lemma.
    If $\Pr[B_0^+] \leq \frac{1}{2}\binom n d^{-1}$ then the desired inequality holds directly.
    
    In order to prove \eqref{eq:delta-prob-compare-eq} we require the following claim:
    \begin{claim}\label{claim:highprobabilityimpliesfewnearbypoints}
        Conditional on $\theta, t$, if $\Pr[B_0^+ \mid \theta, t] \geq n^{-d}$
        then $\E[\#\{i \in [n] \setminus [d] : s_i \geq t\}] \leq 2 d\log n$.
        If $\Pr[B_0^- \mid \theta, t] \geq n^{-d}$
        then $\E[\#\{i \in [n] \setminus [d] : s_i \leq t\}] \leq 2 d\log n$.
    \end{claim}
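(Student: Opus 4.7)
The plan is to exploit the independence structure directly. Conditional on $\theta$ and $t$, which are functions only of $a_1,\dots,a_d$, the vectors $a_{d+1},\dots,a_n$ remain mutually independent (and each keeps its original $L$-log-Lipschitz distribution). In particular, the scalars $s_i = \theta^\T a_i$ for $i \in [n]\setminus[d]$ are mutually independent conditional on $(\theta,t)$.

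Set $p_i = \Pr[s_i \geq t \mid \theta,t]$. Then by independence
\[
\Pr[B_0^+ \mid \theta,t] \;=\; \prod_{i \in [n]\setminus[d]} (1-p_i).
\]
Taking logarithms and using the elementary inequality $\log(1-x) \leq -x$ valid for $x \in [0,1)$,
\[
\log \Pr[B_0^+ \mid \theta,t] \;\leq\; -\sum_{i\in[n]\setminus[d]} p_i.
\]
The hypothesis $\Pr[B_0^+ \mid \theta,t] \geq n^{-d}$ then yields $\sum_{i\in[n]\setminus[d]} p_i \leq d\log n$. Since by linearity of expectation
\[
\E[\#\{i \in [n]\setminus[d] : s_i \geq t\} \mid \theta,t] \;=\; \sum_{i\in[n]\setminus[d]} p_i,
\]
this gives the stated bound $\leq 2d\log n$ (with slack to spare).

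The case of $B_0^-$ is identical after replacing $p_i$ by $q_i = \Pr[s_i \leq t \mid \theta,t]$, since $\Pr[B_0^- \mid \theta,t] = \prod_i (1-q_i)$ by the same independence. I do not see any genuine obstacle here: the only subtle point is verifying that conditioning on $(\theta,t)$ does not destroy the independence of the $s_i$ for $i>d$, which holds because $(\theta,t)$ is measurable with respect to $\sigma(a_1,\dots,a_d)$, a sigma-algebra independent of $a_{d+1},\dots,a_n$. Everything else reduces to the one-line inequality $\log(1-x) \leq -x$.
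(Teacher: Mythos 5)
Your proof is correct and follows essentially the same approach as the paper: both exploit the conditional independence of the indicators $\mathbf{1}[s_i \geq t]$ given $(\theta,t)$ and deduce the bound on $\E[X]$ from the lower bound on $\Pr[X=0]$. The only difference is cosmetic: the paper invokes a Chernoff bound $\Pr[X=0]\leq \exp(-\E[X]/2)$ and gets exactly $2d\log n$, whereas you use the sharper elementary inequality $1-x\leq e^{-x}$ directly, giving $\Pr[X=0]\leq\exp(-\E[X])$ and hence $\E[X]\leq d\log n$, comfortably within the stated bound.
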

    \begin{proof}
        We prove the first implication, and the second follows analogously.
        For each $i \in [n]\setminus[d]$, let $X_i \in \{0,1\}$ have value $1$
        if and only if $s_i \geq t$. Since $\theta, t$ are fixed
        and depend only on $a_1,\dots,a_d$, the random variables
        $X_{d+1},\dots,X_n$ are independent.
        Write $X = \sum_{i=d+1}^n X_i$. The Chernoff bound gives
        \[\Pr[X = 0]
        \leq \exp\left(-\frac{\E[X]}{2}\right).\]
        As such, \(\E[X] > 2d\log n\) would imply \(\Pr[X = 0] < n^{-d}\),
        contradicting the original assumption that \(\Pr[X = 0] \geq n^{-d}.\)
        It follows that \(\E[X] \leq 2d\log n\).
    \end{proof}

    Thus, in what remains, we may suppose that $Pr[B_{0}^{+}] > 1/2 \binom{n}{d}^{-1}$.
    Fix any $i \in [n] \backslash [d]$ and let $\mu_i$ denote the induced probability density function of $s_i$. We then have a sequence of inequalities as found below.    
    The first two inequalities above follow from $L$-log-Lipschitzness of $\mu_i$. 
    For the first inequality in particular, note that for any $s \in [-\frac{1}{L}, 0]$,
    we have $\abs{s-\eps L s} \leq 1/L$ from $\eps \leq 1/L$.
    This then gives $\frac{\mu_i (t + \epsilon L s)}{\mu_i (t + s)} \leq \exp(L \cdot (1 - \epsilon L) s) \leq 1$.
    \begin{align}
        \Pr[ s_i \geq t - \eps \mid s_i \leq t]
        &= \frac{\int_{-\eps}^0 \mu_i(t + s) \mathrm d s}{\int_{-\infty}^0 \mu_i(t + s) \mathrm d s} \nonumber\\
        &= \frac{\eps L \int_{-1/L}^0 \mu_i(t + \eps L s) \mathrm d s}{\int_{-\infty}^0 \mu_i(t + s) \mathrm d s} \nonumber\\
        &\leq e \frac{\eps L \int_{-1/L}^0 \mu_i(t + s) \mathrm d s}{\int_{-\infty}^0 \mu_i(t + s) \mathrm d s}\nonumber \\
        &= e \frac{\eps L \int_{0}^{1/L} \mu_i(t + s - 1/L) \mathrm d s}{\int_{-\infty}^{1/L} \mu_i(t + s - 1/L) \mathrm d s}\nonumber \\
        &\leq e^3 \frac{\eps L \int_{0}^{1/L} \mu_i(t + s) \mathrm d s}{\int_{-\infty}^{1/L} \mu_i(t + s) \mathrm d s} \nonumber\\
        &= e^3 \eps L \Pr[s_i \geq t \mid s_i \leq t + 1/L] \nonumber\\
        &\leq e^3 \eps L \Pr[s_i \geq t]. \label{eq:prob-compare-1}
    \end{align}
    The third inequality, on the final line, follows from the fact that $s_i \geq t + 1/L$ implies $s_i \geq t$,
    and hence $\Pr[s_i \geq t \mid s_i > t+1/L] = 1$.
    As such we can, for fixed $t,\theta,$ upper-bound the probability over $s_1,\dots,s_d$ that,
    conditional on $B_0^+$, there exists a vertex being $\epsilon$-close to $\aff(F_I)$:
    \begin{align}
        \Pr[\neg B_\eps^+\mid B_0^+]
    &= \Pr[\exists i \in [n]\setminus [d] : s_i \geq t - \eps \mid B_0^+] \nonumber \tag{By union bound} \\
    &\leq \sum_{i \in [n] \setminus [d]} \Pr[s_i \geq t - \eps \mid B_0^+] \nonumber \\
    &\leq \sum_{i \in [n] \setminus [d]} e^3 \eps L \Pr[s_i \geq t ] \tag{By \eqref{eq:prob-compare-1}} \nonumber \\
      &= e^3 \eps L \; \E[\#\{i \in [n] \setminus [d] : s_i \geq t \}]. \label{eq:prob-compare-2}
    \end{align}
    To interpret the last equality above, we observe that
    $\#\{i \in [n] \setminus [d] : s_i \geq t \} = 0$ if and only if $B_0^+$ happens.
    Then by applying \eqref{eq:prob-compare-2} to \Cref{claim:highprobabilityimpliesfewnearbypoints} (note that we are using the assumption $Pr[B_{0}^{+}] > 1/2 \binom{n}{d}^{-1}$) with our choice of $\eps$
    we conclude that 
    \[
    \Pr[B_{0}^{+}] \leq \frac{5}{4} \Pr[B_{\varepsilon}^{+}] \leq \frac{1}{2} \binom{n}{d}^{-1} + \frac{5}{4} \Pr[B_{\varepsilon}^{+}].\qedhere
    \]
\end{proof}

Now we can prove \Cref{lem:rand-delta-lb} using \Cref{lem:delta-prob-compare}.

\begin{proof}[Proof of Lemma~\ref{lem:rand-delta-lb}]
    Fix any $I \in \binom{[n]}{d}$.
    By \Cref{lem:delta-prob-compare}, we have that $\Pr[E_I ] \leq \binom{n}{d}^{-1} + \frac{5}{4} \cdot \Pr[E_I  \wedge (\delta \geq \eps)]$ for $\eps = \frac{1}{10e^3 Ld\log n}$.
    This gives that
    \begin{align*}
         \frac{\Pr[E_I  \wedge (\delta \geq \eps)]}{\Pr[E_I ]}  \geq \frac{4}{5} - \binom{n}{d}^{-1} \cdot \frac{4}{5\Pr[E_I ]}
    \end{align*}
    Moreover, since $\Pr[E_I ] \geq 10 \binom{n}{d}^{-1}$, we have
    \begin{align*}
        \Pr[(\delta \geq \eps) \mid E_I  ] &= \frac{\Pr[E_I  \wedge (\delta \geq \eps)]}{\Pr[E_I  ]} \\
        &\geq \frac{4}{5} - \binom{n}{d}^{-1} \cdot \frac{4}{5\Pr[E_I ]} \geq 0.72,
    \end{align*}
    as desired.
\end{proof}

\subsection{Randomized Lower-Bound for \texorpdfstring{$r$}{r}: Inner Radius of a Ridge Projected onto \texorpdfstring{$(d-1)$}{d-1}-Dimensional Subspace}\label{sub:rand-r-lb}

In the next sections, we demonstrate that for any ridge $R$ of the polytope $P$, wherein $R \cap W$ is a vertex of $P \cap W$, its inner radius—after projection onto the subspace orthogonal to the adjacent edge of $P \cap W$—is at least $\Omega(d^{-4}L^{-2})$ with high probability.
Essentially, this establishes that the parameter $r$, as referred to in \text{\Cref{lem:det}}, is at least $\Omega(d^{-4}L^{-2})$ with good probability.
We remark that \Cref{lem:rand-r-lb} does not have an analogue when $d=2$.
Moreover, it will require substantially more technical effort.
Its proof is similar to Lemma 4.1.1 (Distance bound) in \cite{ST04}, their main technical result.
In an effort to help the ease of understanding the larger structure, some lemmas will be stated
while their proofs will be given in later sections.

\begin{lemma}[Randomized Lower-bound for $r$]\label{lem:rand-r-lb}
    Let $a_1, \ldots, a_n \in \R^d$ be independent $L$-log-Lipschitz random vectors.
    Let $D$ denote the event that $\forall i, j \in [n], \|a_i - a_j\| \leq 3$.
    Fix any $I \in \binom{[n]}{d}$
    and any $J \in \binom{I}{d-1}$, we have
    \[
        \Pr[\dist(W \cap \aff(a_i : i\in I),
        \partial \conv(a_j : j\in J )) \leq \frac{1}{19200 d^4L^2} \mid E_I \wedge A_J ] \leq 0.1 + \Pr[\neg D \mid E_I \wedge A_J].
    \]
\end{lemma}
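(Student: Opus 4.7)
The plan is to factor $r := \dist(W \cap \aff(F_I), \partial R_J)$ deterministically into a product of two geometric quantities---a minimum barycentric coordinate and a minimum projected height---and then to lower-bound each factor by $\Omega(1/(d^2 L))$ with high probability via log-Lipschitzness. Assume $I = [d]$, $J = \{2,\ldots,d\}$ without loss of generality. Let $p := R_J \cap W$ be the polygon vertex corresponding to $J$, write $p = \sum_{j \in J} \lambda_j a_j$ for the (a.s.\ unique) barycentric expression in $R_J$, let $s \in \sfe$ be the unit direction of the edge $e_I = F_I \cap W$, and set
\[
\Delta_k := \dist\bigl(\pi_{s^\perp}(a_k),\; \aff(\pi_{s^\perp}(a_j) : j \in J \setminus \{k\})\bigr).
\]

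For the deterministic step, the boundary $\partial R_J$ is the union over $k \in J$ of facets $F'_k := \conv(a_j : j \in J \setminus \{k\})$. Because $\aff(e_I)$ is a line through $p$ in direction $s$, one has $\dist(y, \aff(e_I)) = \|\pi_{s^\perp}(y - p)\|$ for every $y \in \R^d$; applied to $F'_k$ this yields
\[
\dist(\aff(e_I), F'_k) = \dist\bigl(\pi_{s^\perp}(p),\; \pi_{s^\perp}(F'_k)\bigr) \geq \dist\bigl(\pi_{s^\perp}(p),\; \aff(\pi_{s^\perp}(a_j) : j \ne k)\bigr) = \lambda_k \Delta_k,
\]
where the last equality is the standard barycentric-height identity, applied in $s^\perp$ to $\pi_{s^\perp}(p) = \sum_j \lambda_j \pi_{s^\perp}(a_j)$. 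Minimising over $k$ and then splitting the product yields $r \geq (\min_k \lambda_k)(\min_k \Delta_k)$, so it suffices to show that each of $\min_k \lambda_k$ and $\min_k \Delta_k$ is at least a constant times $1/(d^2 L)$ with conditional probability at least $0.95$ given $E_I \wedge A_J \wedge D$.

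For $\min_k \lambda_k$, I would fix $k \in J$ and apply the change of variables of \Cref{def:change-of-variables} to condition on $(a_i)_{i \ne k}$ together with the facet normal; this fixes the hyperplane $\aff(F'_k) \subset \aff(R_J)$, and the signed distance from $a_k$ to $\aff(F'_k)$ has residual density that is $L$-log-Lipschitz up to a volume Jacobian which is bounded above under $D$ and bounded away from degeneracy under $E_I$. A translation-and-compare slab bound in the style of the proof of \Cref{lem:delta-prob-compare} then gives $\Pr[\lambda_k \leq \eps \mid E_I \wedge A_J \wedge D] \leq O(\eps d L)$; choosing $\eps = \Theta(1/(d^2 L))$ and union-bounding over the $d-1$ indices $k$ yields the desired concentration.

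For $\min_k \Delta_k$, the essential difficulty is that $s$ itself depends on $a_k$ through the facet normal $\theta = \theta_I$. I would handle this by passing via the change of variables of \Cref{def:change-of-variables} on $(a_1,\ldots,a_d)$ to coordinates $(\theta, t, b_1, \ldots, b_d)$, and then conditioning on $\theta$, $t$, and the in-plane positions $b_j$ for $j \ne k$: under this conditioning $s \in W \cap \theta^\perp$ and the affine subspace $H_k := \aff(\pi_{s^\perp}(a_j) : j \ne k)$ are both deterministic, and the residual density of $b_k$ is the product of an $L$-log-Lipschitz factor $\mu_k(t\theta + R_\theta(\cdot))$ with the piecewise-linear volume Jacobian $\vol_{d-1}(\conv(b_1,\ldots,b_d))$. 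Under $D$ the positions lie in a ball of bounded radius, and under $E_I$ the simplex is non-degenerate, so the Jacobian contributes only a bounded multiplicative factor to the density. A codimension-two log-Lipschitz slab bound on the perpendicular component of $b_k$ relative to $H_k$ then yields $\Pr[\Delta_k \leq \eps \mid \cdot] \leq O((\eps d L)^2)$; choosing $\eps = \Theta(1/(d^2 L))$ and union-bounding over $k \in J$ completes this step. The main technical obstacle throughout is the careful control of the volume Jacobian, which is not globally log-Lipschitz, and this is precisely where the diameter bound $D$ becomes essential. Combining the two probabilistic bounds gives $\Pr[r < 1/(19200 d^4 L^2) \mid E_I \wedge A_J \wedge D] \leq 0.1$, and the lemma follows after adding $\Pr[\neg D \mid E_I \wedge A_J]$.
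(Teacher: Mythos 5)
Your factorization $r \geq (\min_k \lambda_k)\cdot(\min_k \Delta_k)$ is essentially the same decomposition the paper uses (after its change of variables, see \Cref{lem:rand-r-b}), and your target bounds $\Omega(1/(d^2L))$ for each factor match the paper's \Cref{lem:ridge-lambda-lb} and \Cref{lem:facet-distance-lb}. The difficulty is in your claimed treatment of the volume Jacobian, which is not a technicality to be waved away but the crux of the lemma.

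The most serious gap is the claim that the Jacobian $\vol_{d-1}(\conv(b_1,\dots,b_d))$ ``contributes only a bounded multiplicative factor to the density'' because $D$ bounds it above and $E_I$ gives non-degeneracy. Conditioning on $D$ does give an upper bound, but nothing in $E_I$, $A_J$ or $D$ bounds the volume \emph{below} away from zero. In fact the volume is, up to a factor controlled by $D$, proportional to the very distance you are trying to lower-bound: when $b_k$ approaches the affine hull of the other $b_j$, the volume vanishes. So after conditioning on the remaining coordinates, the residual density of the height is \emph{not} $O(L)$-log-Lipschitz; it is the product of a log-Lipschitz factor and a convex factor that is zero at the bad event. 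The paper isolates precisely this structure: \Cref{lem:facet-distance-lb} performs a further coordinate change so that the residual density of the signed height $h$ is a product of a convex function (from the determinant/volume) and a $d(1+\alpha)L$-log-Lipschitz function, and then \Cref{lem:integrals} proves the key concentration estimate $\Pr[|X|\leq\eps]\leq 8\eps K$ for such products. Your sketch has no substitute for this step. A secondary error is the codimension count: after the change of variables all $a_j$, $j\in I$, satisfy $\theta^\top a_j = t$, and since $\theta\perp s$, the projections $\pi_{s^\perp}(a_j)$ remain in that hyperplane. The distance $\Delta_k$ is therefore a codimension-one quantity (a $(d-2)$-point affine hull inside a $(d-2)$-dimensional slice), so the correct bound is $O(\eps dL)$, not the $O((\eps dL)^2)$ you claim; the linear bound is what the paper proves and what the union bound over $k$ actually requires. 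Finally, for the barycentric coordinates, the paper's \Cref{lem:ridge-lambda-lb} conditions on the simplex \emph{shape} (the differences $b_1-b_j$) together with one scalar position, which fixes the volume Jacobian exactly and makes the residual density of the position genuinely $dL$-log-Lipschitz, before applying a scaling argument. Your proposed conditioning on $(a_i)_{i\neq k}$ plus the facet normal does not fix the Jacobian and is awkward because $\theta$ depends on $a_k$; that part would need to be reworked along the paper's lines.
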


Then \Cref{lem:rand-r-lb} will  quickly from its lower-dimensional equivalent:
\begin{lemma}[Randomized lower bound for $r$ after change of variables]\label{lem:rand-r-b}
    Let $b_1,\dots,b_d \in \R^{d-1}$ be random vectors with joint probability density proportional to
    \begin{align*}
        \vol_{d-1}(\conv(b_1, \ldots, b_d)) \cdot \prod_{i=1}^d \Bar{\mu}_i(b_i)
    \end{align*}
    where $\bar{\mu}_i$ is $L$-log-Lipschitz for each $i \in [d]$.
    Let $D'$ denote the event that the set $\{b_1,\dots,b_d\}$ has Euclidean diameter of at most $3$.
    Fox any fixed one-dimensional line $\ell \subset \R^{d-1}$, we have that
    \begin{align*}
        \Pr&\left[ \left(\dist \big(\ell, \partial \conv(b_1,\dots,b_{d-1}) \big)
        < \frac{1}{19200 d^4 L^2} \right) \mid \ell \cap \conv(b_1,\dots,b_{d-1}) \neq \emptyset \right] \\
           &\leq 0.1 + \Pr[\neg D'\mid \ell \cap \conv(b_1,\dots,b_{d-1}) \neq \emptyset].
    \end{align*}
\end{lemma}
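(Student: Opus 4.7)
The plan is to decompose the target distance via a barycentric factorization at the point $q = \ell \cap \conv(b_1,\dots,b_{d-1})$ and bound each factor using a log-Lipschitz small-ball argument analogous to the ``distance lemma'' of \cite{ST04}. Let $q = \sum_{j=1}^{d-1} \lambda_j b_j$ be the barycentric expansion of $q$ (well-defined almost surely under the conditioning). A direct volume computation gives the identity $\dist(q, \aff(b_i : i \in [d-1] \setminus \{j\})) = \lambda_j h_j$, where $h_j := \dist(b_j, \aff(b_i : i \in [d-1] \setminus \{j\}))$ is the altitude opposite $b_j$, so that the in-affine-hull distance equals $\dist(q, \partial\conv(b_1,\dots,b_{d-1})) = \min_{j \in [d-1]} \lambda_j h_j$. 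The ambient $\R^{d-1}$-distance $\dist(\ell, \partial\conv)$ equals this in-hull distance multiplied by $\sin\alpha$, where $\alpha$ is the angle between $\ell$ and $\aff(b_1,\dots,b_{d-1})$; under our conditioning on $\ell \cap \conv \neq \emptyset$, this angular factor is bounded below with high probability, so it suffices to show that $\min_j h_j \geq \Omega(1/(d^2 L))$ and $\min_j \lambda_j \geq \Omega(1/(d^2 L))$ each with probability at least $0.95 - O(\Pr[\neg D'])$.

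For the altitudes, fix $j \in [d-1]$ and condition on all $b_i$ for $i \neq j$. Writing $A_j = \aff(b_i : i \in [d-1] \setminus \{j\})$ and decomposing $b_j = \pi_{A_j}(b_j) + u$ with $u$ in the $2$-dimensional complement $A_j^\perp \subset \R^{d-1}$, we have $h_j = \|u\|$. The density of $b_j$ --- which contains the volume factor $\vol_{d-1}(b_1,\dots,b_d)$ --- translates under this parametrization into a density on $u$ proportional to the distance from $u$ to a fixed line in $A_j^\perp$ (the line through $\pi_{A_j^\perp}(b_d)$) times an $L$-log-Lipschitz factor. A small-ball estimate in $A_j^\perp$ --- comparing the mass on $\{\|u\| \leq \alpha\}$ to the mass on an appropriate $1/L$-scale window and using the diameter event $D'$ to bound the support --- yields $\Pr[h_j \leq \alpha \mid \mathrm{cond.}] \lesssim \alpha L d$. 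Setting $\alpha = \Theta(1/(d^2L))$ and union-bounding over $j \in [d-1]$ gives the altitude bound.

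For the barycentric coordinates, the identity $\lambda_j h_j = \dist(q, \aff(b_i : i \in [d-1] \setminus \{j\}))$ lets us control $\lambda_j$ by bounding the distance from the fixed point $q$ to each face-hyperplane. For each $j$, pick any $k \in [d-1] \setminus \{j\}$ and condition on $\{b_i : i \in [d] \setminus \{k\}\}$. As $b_k$ varies, the hyperplane $\aff(b_i : i \in [d-1] \setminus \{j\})$ pivots around the fixed $(d-4)$-flat $\aff(b_i : i \in [d-1] \setminus \{j,k\})$, and the signed distance from $q$ to this rotating hyperplane is linear in a suitable $1$-dimensional component of $b_k$ on which the volume factor is also linear. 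An analogous one-dimensional small-ball bound then gives $\Pr[\lambda_j h_j \leq \eps \mid \mathrm{cond.}] \lesssim \eps L d$; combined with $h_j \leq 3$ from $D'$ and a union bound, setting $\eps = \Theta(1/(d^2 L))$ yields the barycentric bound. Multiplying the altitude and barycentric bounds gives $\min_j \lambda_j h_j \geq \Omega(1/(d^4 L^2))$ with probability at least $0.9 - O(\Pr[\neg D'])$, matching the target $1/(19200\, d^4 L^2)$ after tracking constants. The main technical obstacle is the indicator $\mathbf{1}[\ell \cap \conv(b_1,\dots,b_{d-1}) \neq \emptyset]$, which couples the $b_i$'s and must be shown to be preserved under the $O(1/L)$-scale perturbations used in each small-ball estimate, so that log-Lipschitzness can be applied cleanly without the indicator obstructing the comparison.
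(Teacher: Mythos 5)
Your high-level plan --- decompose the line-to-boundary distance into a product of a barycentric coordinate and an altitude, then run log-Lipschitz small-ball bounds on each factor --- matches the structure of the paper's proof, which splits $\dist(\ell,\partial\conv(b_1,\dots,b_{d-1}))$ as $\min_i \lambda_i r_i$ and handles the two factors via \Cref{lem:ridge-lambda-lb} and \Cref{lem:facet-distance-lb}. However, there are genuine gaps in how you carry out each step.

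First, the paper's decomposition is stated \emph{after projecting to $w^\perp$}: the altitudes $r_i = \dist(\pi_{w^\perp}(b_i),\aff(\pi_{w^\perp}(b_j):j\neq i))$ are altitudes of the projected $(d-2)$-dimensional simplex, and with these the identity $\dist(\ell,\partial\conv) = \min_i \lambda_i r_i$ holds without any extra angular factor. You instead use the \emph{unprojected} altitudes $h_j$ and then try to account for the projection with a $\sin\alpha$ factor, which creates two problems: (a) the claim that the ambient distance ``equals'' the in-hull distance times $\sin\alpha$ is false --- only the lower bound $\dist(\ell,\partial\conv) \geq \sin\alpha\cdot\dist_H(q,\partial\conv)$ holds, since $\|\pi_{w^\perp}(v)\|/\|v\|$ for $v$ in the simplex's direction space can range from $\sin\alpha$ up to $1$; and more seriously (b) you assert that $\sin\alpha$ is bounded below with high probability given $\ell\cap\conv\neq\emptyset$ without any argument. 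This is not at all obvious: the conditioning says the line passes through the simplex, but it does not prevent the affine hull $\aff(b_1,\dots,b_{d-1})$ from being nearly parallel to $\ell$. The paper never needs such a bound precisely because \Cref{lem:facet-distance-lb} lower-bounds the \emph{projected} altitudes directly, absorbing the angular degeneracy into a single small-ball estimate (via the convex-times-log-Lipschitz Lemma \ref{lem:integrals}).

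Second, your barycentric argument treats $q = \ell\cap\conv(b_1,\dots,b_{d-1})$ as a ``fixed point'' after conditioning on $\{b_i : i\neq k\}$, but $q$ is not fixed: it is the intersection of $\ell$ with $\aff(b_1,\dots,b_{d-1})$, which moves as $b_k$ varies (since $k\in[d-1]$). Consequently, the claim that ``the signed distance from $q$ to this rotating hyperplane is linear in a suitable $1$-dimensional component of $b_k$'' does not hold as stated. The paper sidesteps this by working in $w^\perp$, conditioning on the shape differences $b_1-b_j$, and using the randomness in the single position variable $\pi_{w^\perp}(b_1)$; there the constraint $\sum_j \lambda_j \pi_{w^\perp}(b_j) = \pi_{w^\perp}(\ell)$ is linear and the scaling argument in \Cref{lem:ridge-lambda-lb} is clean. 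Finally, you correctly flag the coupling introduced by the indicator $\mathbf 1[\ell\cap\conv\neq\emptyset]$ as ``the main technical obstacle,'' but you do not actually resolve it, whereas the paper builds the resolution into its choice of conditioning variables in both sub-lemmas. These are substantive omissions rather than routine bookkeeping.
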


The proof of \Cref{lem:rand-r-b} will be presented in \Cref{sec:pf-rand-r-b}.

\begin{proof}[Proof of Lemma~\ref{lem:rand-r-lb}]
We may assume without loss of generality that $I = [d]$ and $J = [d-1]$.
Apply the change of variables $\phi$ as in \Cref{def:change-of-variables} to $\{a_i : i\in [d]\}$ and obtain
\begin{align*}
    \phi(\theta, t, b_1, \ldots, b_d) = (a_1, \ldots, a_d).
\end{align*}
where $\theta \in \mathbb{S}^{d-1}, t \in \R, b_1, \ldots, b_d \in \R^{d-1}$.
For any $i\in [n]$, let $\mu_i$ denote the probability density function of
$a_i$.
Writing the conditioning to $(E_{[d]} \wedge A_{[d-1]})$ as part of the pdf, we find that the joint probability density of $t,
\theta, b_1, \ldots, b_d, a_{d+1},\dots,a_n$ is proportional to
\begin{align}
    \vol_{d-1}(\conv(b_1, \ldots, b_d)) \cdot \prod_{i=1}^d \Bar{\mu}_i(t,\theta,b_i) \cdot \prod_{i=d+1}^n \mu_i(a_i) \cdot 1[E_{[d]} \wedge A_{[d-1]}], \label{eq:prob-bi}
\end{align}
where $\vol_{d-1}(\cdot )$ is the volume function of $(d-1)$-dimensional simplex in its spanning hyperplane,
$\Bar{\mu}_i(t,\theta,b_i) = \mu_i(t\theta + R_\theta(b_i))$ is the induced
probability density of $b_i$, which is $L$-log-Lipschitz, and $1[\cdot]$ denotes the indicator
function.
Write $S$ for the event that
$$\dist(W \cap \aff(a_i : i\in I), \partial \conv(a_j : j\in J )) \leq \frac{1}{19200 d^4L^2}.$$
In this language, our goal is to prove that
$\Pr[S] \leq 0.1 + \Pr[\neg D]$.

Let $D'$ denote the event that $\|b_i - b_j\| \leq 3$ for all $i,j \in [d]$.
Each of the events $E_I, A_J, S, D', D$ are functions of
the random variables $\theta, t, b_1,\dots,b_d,a_{d+1},\dots,a_n$.
We then use Fubini's theorem to write
\begin{align*}
    \Pr_{\theta,t,b_1,\dots,b_d,a_{d+1},\dots,a_n}[S]
    &= \E_{\theta,t,a_{d+1},\dots,a_n}[\Pr_{b_1,\dots,b_d}[S]]
\end{align*}
With probability $1$ over the choice of $\theta,t,a_{d+1},\dots,a_n$,
the inner term satisfies all the conditions of \Cref{lem:rand-r-b}.
Specifically, since
the value of $1[E_{[d]}]$ is already fixed, the intersection
$(t\theta + \theta^\perp) \cap W$ is a line.
Let $\ell \subset \R^{d-1}$ be the image of such line under the inverse change of variables $\phi^{-1}$, i.e., $(t\theta + \theta^\perp) \cap W = t\theta + R_{\theta}(\ell)$.
Then the event $A_{[d-1]}$ is equivalent to $\ell \cap \conv(b_1,\dots,b_{d-1}) \neq \emptyset$.
From \Cref{lem:smooth-jacobian},
the joint probability distribution of $b_1,\dots,b_d$ is thus proportional to
    \begin{align*}
        \vol_{d-1}(\conv(b_1, \ldots, b_d)) \cdot \prod_{i=1}^d \Bar{\mu}_i(b_i) \cdot 1[\ell \cap \conv(b_1,\dots,b_{d-1} \neq \emptyset]
    \end{align*}
Applying \Cref{lem:rand-r-b} to the term $\Pr_{b_1,\dots,b_d}[S]$ we find
\begin{align*}
    \E_{\theta,t,a_{d+1},\dots,a_n}\big[\Pr_{b_1,\dots,b_d}[S] \big]
    &\leq \E_{\theta,t,a_{d+1},\dots,a_n}\big[0.1 + \Pr_{b_1,\dots,b_d}[\neg D'] \big]\\
    &= 0.1 + \Pr[\neg D'] \leq 0.1 + \Pr[\neg D],
\end{align*}
using Fubini's theorem for the equality and the fact that $\neg D'$ implies $\neg D$
for the final inequality.
\end{proof}

\subsection{Proof of Lemma~\ref{lem:rand-r-b}: Randomized lower bound for $r$ after change of variables}\label{sec:pf-rand-r-b} 

In this section, we deliver the proof of \Cref{lem:rand-r-b}. We use the following two technical lemmas.
As in from the proof of \Cref{lem:rand-r-b} we define $\lambda \in \R^{d-1}_{\geq 0}$ to be the unique solution to $\sum_{i=1}^{d-1} \lambda_i b_i = \ell \cap \conv(b_1, \ldots, b_{d-1})$ and $\sum_{i=1}^{d-1} \lambda_i = 1$.
First, we describe \Cref{lem:ridge-lambda-lb} which we will use to show that every convex parameter $\lambda_i$ is at least $\Omega(1/d^2L)$ with constant probability.

\begin{lemma}[Lower-bound for Convex Parameters of Vertices on the Ridge]\label{lem:ridge-lambda-lb}
    Let $b_1,\ldots,b_d \in \R^{d-1}$ be random vectors with joint probability density proportional
    to
    \[
    \vol_{d-1}(\conv(b_1, \ldots, b_d)) \cdot \prod_{i=1}^d \Bar{\mu}_i(b_i)
    \]
    where each $\Bar\mu_i : \R^{d-1} \to \R_+$ is $L$-log-Lipschitz.
    Let one-dimensional line $\ell \subset \R^{d-1}$ and conditional on $\ell \cap \conv(b_i : i\in [d-1]) \neq \emptyset$.
    Let $\lambda \in \R^{d-1}_+$ be the unique solution to $\sum_{i=1}^{d-1} \lambda_i b_i \in \ell \cap \conv(b_i : i\in [d-1])$.
    Let $D'$ denote the event that $\forall i, j \in [d]$, $\|b_i - b_j\| \leq 3$.
    Then we have
    \[
    \Pr\left[ \forall i \in [d-1] : \lambda_i \geq \frac{1}{120d^2 L} \middle| D' \wedge \ell \cap \conv(b_i : i\in [d-1]) \neq \emptyset \right] \geq 0.95.
    \]
\end{lemma}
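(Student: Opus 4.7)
My plan is to prove, for each fixed $i \in [d-1]$, the per-index bound $\Pr[\lambda_i < \alpha \mid D' \wedge T] \leq \frac{1}{20(d-1)}$ (where $T$ denotes the event $\ell \cap \conv(b_i : i\in [d-1]) \neq \emptyset$) and then finish by a union bound. The argument follows the Spielman--Teng paradigm of exploiting log-Lipschitzness to show that small simplex barycentric coordinates are unlikely.

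Fix $i \in [d-1]$ and condition on $b_j$ for $j \in [d] \setminus \{i\}$. Using the simplex volume factorization
\[
\vol_{d-1}(\conv(b_1,\ldots,b_d)) = \frac{1}{d-1}\,\vol_{d-2}(\conv(b_j : j \neq i)) \cdot \dist(b_i, H_i)
\]
where $H_i = \aff(b_j : j \neq i)$, the conditional density of $b_i$ is proportional to $\dist(b_i, H_i)\,\bar\mu_i(b_i)$. To express $\lambda_i$ tractably, define $H = \aff(b_j : j \in [d-1]\setminus\{i\}) \subseteq H_i$, a codimension-$2$ subspace. After translating $H$ to the origin, $\pi_{H^\perp}(b_j) = 0$ for $j \in [d-1] \setminus \{i\}$, so the intersection point $q = \sum_{j \in [d-1]} \lambda_j b_j$ satisfies $\pi_{H^\perp}(q) = \lambda_i\,c$, where $c := \pi_{H^\perp}(b_i)$. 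Combined with $q \in \ell$, this gives $\lambda_i c \in \ell_H := \pi_{H^\perp}(\ell)$.

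Next, choose an orthonormal basis $(e_1, e_2)$ of $H^\perp \cong \R^2$ with $e_1$ aligned to $\pi_{H^\perp}(b_d)$; then $\dist(b_i, H_i) = |c_2|$, and conditional on $b_i^H := \pi_H(b_i)$ the density of $c$ is proportional to $|c_2|\,\bar\mu_i(b_i^H + c)$. When $\ell_H$ is a line (the generic case), let $h = \dist(0, \ell_H)$ and $n$ be its unit normal. Then the constraint $\lambda_i c \in \ell_H$ gives the explicit formula $\lambda_i = h / \langle c, n \rangle$, so the event $\{\lambda_i < \alpha\}$ becomes $\{\langle c, n\rangle > h/\alpha\}$, which by the $D'$ bound $\|c\| \leq 3$ is empty unless $h < 3\alpha$. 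In the delicate remaining case $h < 3\alpha$, the key step is a shift argument: construct an injection from the bad region $\{\langle c, n\rangle > h/\alpha\}$ into a subset of the good region $\{\langle c, n \rangle \in [h, h/\alpha]\}$ by translating $c \mapsto c - \delta n$ with $\delta = \Theta(1/(dL))$, and bound the resulting density ratio by $O(1)$ using $L$-log-Lipschitzness of $\bar\mu_i$.

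The main obstacle I anticipate is controlling the interaction between the $|c_2|$ weight and the shift direction $n$, particularly when $n$ has a substantial $e_2$-component, where translation can change $|c_2|$ significantly near the $e_1$-axis. Handling this via careful case analysis on the angle between $n$ and $e_2$, and using the $|c_2|$ weight to suppress contributions from $c$ close to this axis, is what should recover the target rate and produce the extra factor of $d$ that yields $\alpha = 1/(120 d^2 L)$ rather than the naive $1/(dL)$ one would get from a pure $L$-log-Lipschitz shift. A separate but easier argument handles the degenerate case $\ell \parallel H$, in which $\ell_H$ reduces to a point.
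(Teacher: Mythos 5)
Your approach diverges substantially from the paper's, and it contains gaps that you partially acknowledge but do not resolve.

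The paper's proof conditions on the \emph{shape} of the simplex (the differences $b_1 - b_j$, $j \in [d]$) together with $w^\T b_1$, and treats the remaining randomness $\pi_{w^\perp}(b_1) \in \R^{d-2}$ as an $L d$-log-Lipschitz random variable (the volume factor is constant after this conditioning, so it drops out). It then identifies the constraint $\lambda \geq 0$ with $\pi_{w^\perp}(b_1) \in M$ for a fixed simplex $M$, and exploits that $l_1 \geq \alpha$ on the dilate $(1-\alpha)M$; shrinking by a factor $(1-\alpha)$ is a measure-preserving-up-to-$(1-\alpha)^{d-2}$ change of variables that \emph{preserves the full constraint} $\pi_{w^\perp}(b_1) \in M$ (equivalently $T$) while pushing $\lambda_1$ above $\alpha$. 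This is what makes the argument close cleanly: the dilation commutes with the event $T$.

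Your approach instead conditions on $b_j$ for $j \neq i$ and uses the randomness in $b_i$, reducing to a two-dimensional problem in $H^\perp$ with density $|c_2|\,\bar\mu_i(b_i^H + c)$. This is a fundamentally different decomposition. Two concrete gaps:
(1) The event you are conditioning on is $T$, which requires \emph{all} $\lambda_j \geq 0$, not just $\lambda_i \geq 0$. Your formula $\lambda_i = h/\langle c, n\rangle$ captures only the $i$-th coordinate; the other $\lambda_j$'s also depend on $c$ (via $q \in \ell_H$ and the position of $q$ in $\conv(b_j : j \neq i)$). Your proposed additive shift $c \mapsto c - \delta n$ changes $q$ and therefore may violate $\lambda_j \geq 0$ for some $j \neq i$, taking the bad region outside $T$ rather than into the good region within $T$. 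You do not address this. The paper's multiplicative dilation avoids the issue entirely because $(1-\alpha)M \subset M$.
(2) Even setting aside (1), your shift argument is explicitly unfinished: you flag the obstacle that the $|c_2|$ weight is not log-Lipschitz and blows up near $c_2 = 0$ if $n$ has a substantial $e_2$-component, and you gesture at a ``careful case analysis'' without providing it. This is exactly the kind of multiplicative-vs-additive mismatch (a convex weight times a log-Lipschitz density) that the paper handles differently in Lemma~\ref{lem:facet-distance-lb} via Lemma~\ref{lem:integrals}, and it is not clear your case analysis will produce the additional factor of $d$ you need; as written, the proof is incomplete.

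The high-level skeleton (union bound over $i$, aim for a per-index bound of roughly $1/(20d)$) matches the paper, but the core argument is different, and the gaps above mean your proposal does not currently constitute a proof.
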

We will prove \Cref{lem:ridge-lambda-lb} in \Cref{sec:proof-ridge-lambda-lb}.

Secondly, we present a lemma to lower bound the distance between each vertex $b_j$ (where $j \in [d-1]$) and the $(d-3)$-dimensional hyperplane spanned by the other vertices $\aff(b_j : j \in [d-1], j\neq i)$. Specifically, we show the following lemma:

\begin{lemma}\label{lem:facet-distance-lb}
   Let $b_1,\ldots,b_d \in \R^{d-1}$ be random vectors with joint probability density proportional to
    \[
    \vol_{d-1}(\conv(b_1, \ldots, b_d)) \cdot \prod_{i=1}^d \Bar{\mu}_i(b_i)
    \]
    where each $\Bar\mu_i : \R^{d-1} \to [0, 1]$ is $L$-log-Lipschitz.
    Given a one-dimensional line $\ell \subset \R^{d-1}$, let $w \in \mathbb{S}^{d-2}$ be any unit direction of $\ell$.
    For any $i \in [d-1]$ we have
    \begin{align*}
        \Pr\Big[ \dist \left( \pi_{w^\perp}(b_i), \aff(\pi_{w^\perp}(b_j) : j \in [d-1], j \neq i) \right) \geq \frac{1}{160 d^2L} ~\Big|~
        & \ell \cap \conv(b_i : i\in [d-1]) \neq \emptyset \Big] \\
        &\geq 1-\frac{1}{20d}.
    \end{align*}
\end{lemma}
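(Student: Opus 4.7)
The plan is to fix $b_2, \ldots, b_{d-1}$ and analyze the conditional density of $b_1$ in the direction $v \in w^\perp$ orthogonal to $\pi_{w^\perp}(\aff(b_2, \ldots, b_{d-1}))$. Taking $i = 1$ without loss of generality, the quantity to bound is $r = |v^\top (b_1 - b_2)|$. First I would set up coordinates: let $V_0 = \mathrm{span}(b_j - b_2 : 3 \leq j \leq d-1)$, a $(d-3)$-dimensional subspace of $\R^{d-1}$, and choose an orthonormal basis $\{v, v^*\}$ of its $2$-dimensional orthogonal complement $V_0^\perp$ with $v \in w^\perp$. Parameterize $b_1 = b_2 + b_1^\parallel + \eta v + \eta^* v^*$ with $b_1^\parallel \in V_0$, so that $r = |\eta|$. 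Writing the unit normal $n$ to $\aff(b_2, \ldots, b_d)$ in $\R^{d-1}$ as $n = \cos\phi \cdot v + \sin\phi \cdot v^*$, the weight $\vol_{d-1}(b_1, \ldots, b_d)$ is proportional to $|\cos\phi \cdot \eta + \sin\phi \cdot \eta^*|$.

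Next I would identify the structure of the conditioning event. After projection to $w^\perp$, the event $\ell \cap \conv(b_1, \ldots, b_{d-1}) \neq \emptyset$ becomes $\pi_{w^\perp}(b_1) \in S_q$, where $q$ is a point on $\pi_{w^\perp}(\ell)$ and $S_q = \{q + t(q - z) : t \geq 0,\ z \in \pi_{w^\perp}(\conv(b_2, \ldots, b_{d-1}))\}$ is a convex cone with apex $q$. Since $v \in w^\perp$, varying $\eta$ alone traces a line in $w^\perp$ in the direction $v$, and its intersection with the convex set $S_q$ is a single interval $I \subseteq \R$. Consequently the conditional density of $\eta$ (given $b_2, \ldots, b_d$, $b_1^\parallel$, and $\eta^*$) is proportional to $|\cos\phi \cdot \eta + \sin\phi \cdot \eta^*| \cdot \bar\mu_1(b_1) \cdot \mathbf{1}[\eta \in I]$.

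The main step is a shift-and-compare argument showing $\Pr[|\eta| < \beta_0 \mid \eta \in I] \leq 1/(20 d)$ for $\beta_0 = 1/(160 d^2 L)$. For each $\eta$ in the bad region $(-\beta_0, \beta_0) \cap I$, I would pair it with a shift target $\eta'$ lying in a larger comparison subinterval of length $\Theta(1/L)$ inside $I$; the log-Lipschitz ratio $\bar\mu_1(b_1)/\bar\mu_1(b_1 + (\eta' - \eta) v)$ is then bounded by $e^{L|\eta' - \eta|} = O(1)$. The remaining ratio of volume weights is handled via a case split on $|\sin\phi|$: when $|\sin\phi|$ is not too small, the weight varies only mildly across the shift; when $|\sin\phi|$ is small, the weight $|\cos\phi \cdot \eta + \sin\phi \cdot \eta^*|$ vanishes near $\eta = -\tan\phi \cdot \eta^*$, which lies close to the bad region and itself suppresses the bad-region mass. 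Combining these estimates yields the requested bound.

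The hard part will be handling configurations where the interval $I$ is shorter than the $1/L$-scale comparison interval, so that the shift cannot be carried out within $I$. To address this edge case I would marginalize the $\vol_{d-1}$ factor against $\bar\mu_d$ first: this produces an effective two-dimensional density on $y_1 := \pi_{V_0^\perp}(b_1 - b_2) \in \R^2$ with a radial prefactor $\|y_1\|$ arising from the wedge $|y_1 \wedge y_d|$, which contributes additional suppression near the origin, precisely where the bad event can interact unfavorably with a short interval $I$. Combining the two regimes via a union bound and applying Fubini's theorem over $b_2, \ldots, b_{d-1}$ yields the claimed probability bound.
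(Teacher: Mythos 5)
The central step you need is to show that the signed distance is unlikely to be small after appropriate conditioning, and the obstacle is exactly the one you identify: when you fix $b_2,\dots,b_{d-1}$ and move only $b_1$ in the direction $v$, the conditioning event $\ell \cap \conv(b_1,\dots,b_{d-1}) \neq \emptyset$ restricts $\eta$ to an interval $I$ that can be arbitrarily short, and you cannot perform the shift-and-compare when $|I| \lesssim \beta_0$. Your proposed workaround does not close this gap. Marginalizing $\vol_{d-1}(\conv(b_1,\dots,b_d)) \propto |y_1 \wedge y_d|\cdot\vol_{d-3}(\conv(b_2,\dots,b_{d-1}))$ against $\bar\mu_d$ gives a function of $y_1$ that is \emph{not} proportional to $\|y_1\|$: the integral $\int |y_1 \wedge y_d|\,\bar\mu_d\,\mathrm{d}b_d$ is the $\ell_1$-type norm of a linear functional depending on the direction of $y_1$, not a radial factor, since $\bar\mu_d$ is an arbitrary log-Lipschitz density. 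More fundamentally, a factor of $\|y_1\| = \sqrt{\eta^2 + (\eta^*)^2}$ would not suppress the bad event: you need $|\eta| < \beta_0$ to be rare, but $\|y_1\|$ stays bounded away from zero whenever $|\eta^*|$ does, regardless of $\eta$. Short intervals $I$ do not force $\eta^*$ to be small, so the claimed suppression does not materialize in the regime you invoke it.

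The paper avoids this difficulty with a structurally different change of variables. Rather than fixing the opposite facet and moving the single vertex, it (after WLOG taking $i = d-1$ and $\ell$ through the origin) introduces $\phi \in w^\perp$ normal to $\aff(b_1,\dots,b_{d-2})$ with $\phi^\top b_j = h$ for $j \le d-2$, $\phi^\top b_{d-1} = -\alpha h$, and writes $b_j = h\phi + c_j$ ($j \le d-2$), $b_{d-1} = -\alpha h \phi + c_{d-1}$. The key observation is that, with $(\alpha,\phi,c_1,\dots,c_{d-1},b_d)$ held fixed, the event $0 \in \conv(\pi_{w^\perp}(b_j) : j \in [d-1])$ is \emph{independent of $h$}: the barycentric coordinate of $0$ along $\phi$ forces $\lambda_{d-1} = \tfrac{1}{1+\alpha}$ regardless of $h$, and the transverse conditions involve only the $c_j$'s. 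Thus the conditional density of $h$ has no interval restriction; it is proportional to a convex function of $h$ (the $\vol_{d-1}$ factor, an absolute value of an affine function of $h$ by the matrix determinant lemma) times a $d(1+\alpha)L$-log-Lipschitz function (the product of the $\tilde\mu_j$'s), and \Cref{lem:integrals} applies directly to the signed distance $(1+\alpha)h$. This joint move of the whole facet — not a single vertex — is the missing idea in your argument, and without it your Fubini/union-bound plan cannot be completed.
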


We defer the proof of \Cref{lem:facet-distance-lb} to \Cref{sec:proof-facet-distance-lb}.
In addition to the above two lemmas, we need the following two basic linear algebraic statements.

\begin{fact}\label{lem:proj-dist}
    Let $\ell \subset \R^k$ be an affine line
    and $x \in \R^k$ be a point.
    Suppose $w \in \mathbb{S}^{k-1} \setminus \{0\}$
    points in the direction of $\ell$, i.e.,
    that $\ell + w = \ell$.
    Then $\dist(x, \ell) = \dist(\pi_{w^\perp}(x), \pi_{w^\perp}(\ell))$.
\end{fact}
\begin{proof}
    Let $\{z\} = \ell \cap (x + w^{\perp})$. This intersection is non-empty because
    $w$ points in the direction along $\ell$ and so must intersect any affine subspace orthogonal to $w$.
    The intersection is a singleton because
    if $z, z'$ were two distinct points in this set
    then $z-z' \in w\R \cap w^\perp = \{0\}$.

    Since $x-z \in w^\perp$, we have $\dist(x, z) = \dist(\pi_{w^\perp}(x), \pi_{w^\perp}(\ell))$. Also notice that for any $z' \in \ell, z' \neq z$,
    $$\|x - z'\|^2 = \|\pi_{w}(x - z')\|^2 + \|\pi_{w^\perp}(x - z')\|^2 \geq \|\pi_{w^\perp}(x - z')\|^2 = \|x - z\|^2.$$
    Therefore, $\dist(x, \ell) = \|x - z\| = \dist(\pi_{w^\perp}(x), \pi_{w^\perp}(\ell))$.
%
\end{proof}

\begin{fact}\label{lem:conv-comb-dist}
    Let $H \subset \R^k$ be a hyperplane, and let $p_1,\dots,p_k \in H$
    and $p_{k+1} \in \R^k$,
    and assume $\lambda_{k+1} \geq 0$.

    Then
    \[
    \dist(\sum_{i=1}^{k+1} \lambda_i p_i, H)
    =
    \lambda_{k+1} \dist(p_{k+1}, H).
    \]
\end{fact}
\begin{proof}
    Let $y \in \R^k, t \in \R$ be such that
    $H = \{x \in \R^k : y^\T x = t\}$
    and $\norm{y} = 1$.

    Now we have
    \begin{align*}
        \dist(\sum_{i=1}^{k+1} \lambda_i p_i, H)
        &= |t - y^\T (\sum_{i=1}^{k+1} \lambda_i p_i)| \\
        &= |\sum_{i=1}^{k+1} \lambda_i ( t -  y^\T p_i)| \\
        &= |\lambda_{k+1} ( t -  y^\T p_{k+1})| \\
        &= \lambda_{k+1} \dist( p_{k+1}, H),
    \end{align*}
    using that $y^\T p_i = t$ for all $i = 1,\dots,k$.
\end{proof}

Assuming \Cref{lem:ridge-lambda-lb} and \Cref{lem:facet-distance-lb}, we can prove \Cref{lem:rand-r-b}.

\begin{proof}[Proof of Lemma~\ref{lem:rand-r-b}]
    We can bound the distance from $\ell$ to $\partial \conv(b_1,\dots,b_{d-1})$.
    Note that generically $w$ is not parallel to any direction in $\aff(b_1,\dots,b_{d-1})$
    which makes it so that $\pi_{w^\perp}(\conv(b_1,\dots,b_{d-1}))$
    is a simplex of the same dimension as $\conv(b_1,\dots,b_{d-1})$.
    \begin{align*}
        & \dist \big(\ell, \partial \conv(b_1,\dots,b_{d-1}) \big) \\
        = & \dist \big(\pi_{w^\perp}(\ell), \pi_{w^\perp}(\partial \conv(b_1,\dots,b_{d-1})) \big) \tag{By \Cref{lem:proj-dist}} \\
        = & \dist \big(\pi_{w^\perp}(\ell), \partial \conv(\pi_{w^\perp}(b_1),\dots,\pi_{w^\perp}(b_{d-1})) \big) \\
        = & \min_{i \in [d-1]} \dist \big(\pi_{w^\perp}(\ell), \conv(\pi_{w^\perp}(b_1),\dots,\pi_{w^\perp}(b_{i-1}),\pi_{w^\perp}(b_{i+1}),\dots,\pi_{w^\perp}(b_{d-1})) \big) \\
        \geq & \min_{i \in [d-1]} \dist \big(\pi_{w^\perp}(\ell), \aff(\pi_{w^\perp}(b_1),\dots,\pi_{w^\perp}(b_{i-1}),\pi_{w^\perp}(b_{i+1}),\dots,\pi_{w^\perp}(b_{d-1})) \big) \\
        = & \min_{i \in [d-1]} \lambda_i \cdot \dist(\pi_{w^\perp}(b_i), \aff(\pi_{w^\perp}(b_j) : j \in [d-1], j \neq i)) \tag{By \Cref{lem:conv-comb-dist}}\\
        \geq & \min_{i \in [d-1]} \lambda_i \cdot  \min_{k \in [d-1]} \dist(\pi_{w^\perp}(b_k), \aff(\pi_{w^\perp}(b_j) : j \in [d-1], j \neq k))
    \end{align*}
    Where in the second step, we use the fact that $\pi_{w^{\perp}}$ is an affine isomorphism restricting to $b_{1}, \dots, b_{d-1}$, thus taking the boundary of $\conv(\pi_{w^\perp}(b_1),\dots,\pi_{w^\perp}(b_{d-1}))$ commutes with the projection $\pi_w^{\perp}$.
    In the fifth step,  $\lambda \in \R^{d-1}_{\geq 0}$ is the unique solution to $\sum_{i=1}^{d-1} \lambda_i b_i = \ell \cap \conv(b_1, \ldots, b_{d-1})$ and $\sum_{i=1}^{d-1} \lambda_i = 1$.
    Additionally, assume $\ell = w\R$ for a non-zero $w \in \mathbb{S}^{d-2}$.
    Abbreviate, for $k \in [d-1]$,
    $$r_k = \dist(\pi_{w^\perp}(b_k), \aff(\pi_{w^\perp}(b_j) : j \in [d-1], j \neq k)).$$
    Let $T$ denote the event that $\ell \cap \conv(b_1,\dots,b_{d-1}) \neq \emptyset$.
    We now find using a union bound, for any $\alpha,\beta > 0$,
    \begin{align*}
        & \Pr[\dist(\ell,\partial\conv(b_1,\dots,b_{d-1})) < \alpha\beta \mid T] \\
        \leq & \Pr[\min_{i \in [d-1]} \lambda_i < \alpha \mid T] + \Pr[\min_{k \in [d-1]} r_k < \beta \mid T] \\
        \leq & \Pr[\min_{i \in [d-1]} \lambda_i < \alpha \mid D' \wedge T] + \Pr[\min_{k \in [d-1]} r_k < \beta \mid T] + \Pr[\neg D' \mid T]
    \end{align*}
    By \Cref{lem:ridge-lambda-lb} we know that
    $\Pr[\min_{i \in [d-1]} \lambda_i < \alpha \mid D' \wedge T] \leq 0.05$ for $\alpha = \frac{1}{120d^2L}$.
    By \Cref{lem:facet-distance-lb}, we know that
        $\Pr[\min_{k \in [d-1]} r_k < \beta \mid T]\leq
        \sum_{k\in[d-1]} \Pr[ r_k < \beta \mid T]
        \leq 0.05$ for $\beta = \frac{1}{160d^2L}$.
    This proves the lemma.
\end{proof}

\subsection{Proof of Lemma~\ref{lem:ridge-lambda-lb}}\label{sec:proof-ridge-lambda-lb}
Now we show that, with good probability, the convex multipliers $\lambda$ are not too small.

\begin{proof}[Proof of Lemma~\ref{lem:ridge-lambda-lb}]
    By using a union bound, it suffices to prove for each $i \in [d-1]$ that
    \[
    \Pr[\lambda_i < \frac{1}{120d^2 L} \mid D' \wedge \ell \cap \conv(b_j : j\in [d-1]) \neq \emptyset ] \leq \frac{1}{20(d-1)}.
    \]
    Fix any $i \in [d - 1]$, without loss of generality $i=1$.
    Recall that $w \in \mathbb{S}^{d-1}$ is such that $\ell = \ell+w$, and hence that $\pi_{w^\perp}(\ell)$ is a singleton point.
    For ease of exposition we assume that the plane $w^\perp$ is coordinatized such that $\pi_{w^\perp}(\ell) = 0$
    and hence $\ell = w\R$.
    Thus $\lambda$ is defined to satisfy $\sum_{j=1}^{d-1} \lambda_j \pi_{w^\perp}(b_j) = 0$.

    Using the principle of deferred decision, we fix the values of $b_1 - b_j$, $j \in [d]$.
    This determines the shape of the simplex $\conv(b_j : j \in [d])$, including its volume.
    The remainder of this proof will use the randomness in the position of the simplex
    in the subspace orthogonal to the line, which we represent using $\pi_{w^\perp}(b_1)$.
    For the remainder of this proof, we can consider all $b_j, j \in [d]$
    to be functions of $b_1$.
    The position $\pi_{w^\perp}(b_1)$ has
    probability density $\mu'(\pi_{w^\perp}(b_1)) \propto \prod_{j=1}^d \mu_j(b_j)$, which is
    $dL$-log-Lipschitz in $\pi_{w^\perp}(b_1)$ with respect to the $(d-2)$-dimensional Lebesgue
    measure on $w^\perp$.

    Define $M = \conv(\pi_{w^\perp}(b_1 - b_j) : j \in [d-1]) \subset w^\perp$ and note that,
    due to our fixing the values of $b_1-b_j$ in the previous paragraph,
    the shape $M$ is fixed and we can see that
    that $\pi_{w^\perp}(b_1) \in M$ if and only if $\lambda \geq 0$.
    It remains to show that
    \begin{align}
        \Pr[\lambda_1 < \frac{1}{ 120d^2 L} \mid D' \wedge \pi_{w^\perp}(b_1) \in M] < \frac{1}{20d}. \label{eq:lambda-lb-goal}
    \end{align}

    For any $j \in [d-1]$, let $l_j : M \to [0, 1]$ be the function sending any point
    to its $j$'th convex coefficient, i.e., the functions satisfy $\sum_{j=1}^{d-1} l_j(x) = 1$
    and $\sum_{j=1}^{d-1} l_j(x) \cdot \pi_{w^\perp}(b_1 - b_j)  = x$ for every $x \in M$.
    For any $1 \geq \alpha \geq 0$, observe that $l_1$
    takes values in the interval $[\alpha, 1]$
    on the set $(1-\alpha) M$. Hence we get
    \begin{align*}
        \Pr[\lambda_1 \geq \alpha \mid \pi_{w^\perp}(b_1) \in M]
        &= \frac{
            \int_{M} \mu'(x) \mathbf{1}[l_1(x) \geq \alpha] \mathrm d x
            }{
                \int_{M} \mu'(x) \mathrm d x
            } \\
        &\geq \frac{
            \int_{(1-\alpha) M} \mu'(x) \mathrm d x
            }{
                \int_{M} \mu'(x) \mathrm d x
            }
            \tag{$\forall x \in (1-\alpha)M$, $l_1(x) \geq \alpha$}\\
        &= \frac{
            (1-\alpha)^{d-2} \int_{M} \mu'((1-\alpha)x) \mathrm d x
            }{
                \int_{M} \mu'(x) \mathrm d x
            }
            \\
        &\geq
            (1-\alpha)^{d-2} \max_{x \in M} e^{-dL\|\alpha x\|},
    \end{align*}
    where in the last inequality, we use $dL$-log-Lipschitzness of $\mu'$ to see that for any $s \in M$
    we have $\frac{\mu'((1-\alpha)s)}{\mu'(s)} \leq \max_{x \in M}e^{-dL \|\alpha x\|}$.
    
    By definition of $D'$, we know that $M$ has Euclidean diameter at most $3$.
    Thus we can bound $\|\alpha x\| \leq 3\alpha$ for any $x\in M$.
    Now take $\alpha = \frac{1}{120d^2L}$, we find
    \begin{align*}
        \Pr[(\lambda_i < \frac{1}{120d^2 L}) \mid D' \wedge \pi_{w^\perp}(b_i) \in M] &\leq 1 -
        \Pr[\lambda_i \geq \frac{1}{120d^2 L} \mid D' \wedge \pi_{w^\perp}(b_i) \in M] \\
        &\leq 1 - (1-\frac{1}{120 d^2 L})^{d-2} e^{-1/40d}
        \leq \frac{1}{20(d-1)},
    \end{align*}
    where the last inequality comes from $d \geq 3$ and $L \geq 1$. Thus \eqref{eq:lambda-lb-goal} holds as desired.
\end{proof}

\subsection{Proof of Lemma~\ref{lem:facet-distance-lb}}\label{sec:pf-facet-dist}

To show \Cref{lem:facet-distance-lb} on the width of the facet, we need the following upper bound about the mass around zero for a random variable whose density is formed by a log-Lipschitz function multiplied by a convex function.
Its function is to deal with the volume term that we receive from the Jacobian in \Cref{lem:smooth-jacobian}.

\begin{lemma}\label{lem:integrals}
    Assume that $h : \R \to \R_{\geq 0}$ is a $K$-log-Lipschitz function
    and $g : \R \to \R_{\geq 0}$ is a convex function such that
    $\int_{-\infty}^\infty g(x) \cdot h(x) \mathrm d x = 1$.
    Suppose that $X \in \R$ is distributed with probability density $g(X)\cdot h(X)$.
    For any $\eps > 0$ we have
    $\Pr[X \in [-\eps, \eps]] \leq 8 \eps K$.
\end{lemma}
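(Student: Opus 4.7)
The plan is to exploit the asymmetry between convexity of $g$ (which forces $g$ to lie below its chord on $[-\eps,\eps]$ and to grow monotonically on at least one side of that interval) and $K$-log-Lipschitzness of $h$ (which prevents $h$ from decaying faster than $e^{-K|x-y|}$), so that the mass in $[-\eps,\eps]$ can be mapped into a geometric series of disjoint translates in a tail whose total mass is bounded by $\int gh = 1$. To begin I would dispose of the trivial regime $\eps K\geq 1/8$, in which $M:=\Pr[X\in[-\eps,\eps]]\leq 1\leq 8\eps K$ holds automatically, so it remains to prove $M\leq 4\eps K$ when $\eps K<1/8$.

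Next I would reduce to the case $g(\eps)\geq g(-\eps)$ by symmetry (otherwise the identical argument applies on the left tail). Under this assumption, convexity of $g$ yields two facts that together drive the rest of the proof: first, every $x\in[-\eps,\eps]$ lies on the chord joining $-\eps$ and $\eps$, so $g(x)\leq g(\eps)$; second, the right secant slope of $g$ at $\eps$ is at least the nonnegative secant slope from $-\eps$ to $\eps$, so by the standard monotone-slope characterization of convexity $g$ is non-decreasing on $[\eps,\infty)$. Combining these gives $g(y+2k\eps)\geq g(\eps)\geq g(y)$ for every $y\in[-\eps,\eps]$ and every integer $k\geq 1$.

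The heart of the argument is then a shift-and-sum step. For each $k\geq 1$ I would consider the disjoint interval $I_k=[(2k-1)\eps,(2k+1)\eps]$, viewed as the translate of $[-\eps,\eps]$ by $2k\eps$. Using $h(y+2k\eps)\geq e^{-2kK\eps}h(y)$ and the $g$-comparison above, the substitution $x=y+2k\eps$ yields $\int_{I_k} g(x)h(x)\,dx\geq e^{-2kK\eps} M$. Summing the resulting geometric series over $k\geq 1$ gives $\int_\eps^\infty gh\geq M/(e^{2K\eps}-1)$, and using $e^t-1\leq 2t$ for $t\in(0,1]$ together with $\int gh=1$ produces $1\geq M\bigl(1+1/(4\eps K)\bigr)$, hence $M\leq 4\eps K$.

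The main subtlety is that $g$ is merely convex and not assumed differentiable, so the ``monotone right of $\eps$'' fact must be extracted from the secant-slope inequality rather than read off from a derivative; once that is done, everything else is a routine shift-and-sum computation that relies only on the two structural properties already central to the earlier distance arguments in this paper and in \cite{ST04}.
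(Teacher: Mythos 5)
Your proof is correct and takes a genuinely different route from the paper's. The paper first conditions on $X \in [-1/K,1/K]$, uses log-Lipschitzness to reduce to a ratio $\int_{-\eps}^{\eps} g / \int_{-1/K}^{1/K} g$, and then applies convexity in one shot: the max of $g$ on $[-\eps,\eps]$ is dominated by the min of $g$ on at least one of $[-1/K,-\eps]$ or $[\eps,1/K]$, which bounds that ratio by $1/(1/K - \eps)$. You instead run a shift-and-sum argument that never introduces the auxiliary interval $[-1/K,1/K]$: you tile the right half-line by translates $I_k = [(2k-1)\eps,(2k+1)\eps]$ of $[-\eps,\eps]$, show via convexity that $g(y+2k\eps) \geq g(\eps) \geq g(y)$ on $[-\eps,\eps]$ and via log-Lipschitzness that $h(y+2k\eps) \geq e^{-2kK\eps}h(y)$, so each translate captures at least $e^{-2kK\eps}M$ mass, and then sum the geometric series against the normalization $\int gh = 1$. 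Both proofs are valid and of comparable length; your version has the minor advantage of giving a sharper constant (your chain actually yields $M \leq 1 - e^{-2K\eps} \leq 2\eps K$, though you only claim $4\eps K$, still stronger than the stated $8\eps K$), while the paper's is perhaps closer in spirit to the log-Lipschitz tail arguments used elsewhere in the paper (e.g., \Cref{lem:dist-2d} and \Cref{lem:delta-prob-compare}), where one similarly restricts to a window of width $1/K$ before exploiting the density structure. One small point worth making explicit in a polished write-up: the fact that $g$ is nonnegative and convex with $g(\eps) \geq g(-\eps)$ gives $g(y) \geq g(\eps)$ for all $y \geq \eps$ directly from the three-slope inequality (comparing the secant on $[-\eps,\eps]$ to the secant on $[\eps,y]$); you do not even need the stronger ``non-decreasing on $[\eps,\infty)$'' statement, though that is also true and your extraction of it from the secant-slope characterization is correct.
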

\begin{proof}
    We can assume that $\eps < 1/(8K)$, for otherwise the bound is trivial.
    First, we use the rudimentary upper bound
    \[
        \Pr\big[X \in [-\eps,\eps] \big]  \leq \Pr\big[X \in [-\eps, \eps] \mid X \in [-1/K,1/K] \big]
            = \frac
                {\int_{-\eps}^{\eps} g(x) \cdot h(x) \mathrm d x}
                {\int_{-1/K}^{1/K} g(x) \cdot h(x) \mathrm d x}.
    \]
    Log-Lipschitzness implies that for any $\gamma > 0$ we have
    \[
        e^{-\gamma K} h(0) \int_{-\gamma}^\gamma g(x) \mathrm d x
        \leq
        \int_{-\gamma}^\gamma g(x) \cdot h(x) \mathrm d x
        \leq
        e^{\gamma K} h(0) \int_{-\gamma}^\gamma g(x) \mathrm d x,
    \]
    and hence we get
    \[
        \Pr[X \in [-\eps, \eps]]
        \leq e^{(1/K + \eps) K} \frac
        {\int_{-\eps}^{\eps} g(x) \mathrm d x}
        {\int_{-1/K}^{1/K} g(x) \mathrm d x}
        \leq e^{1 + \eps K} \frac
        {2\eps \cdot \max_{x \in [-\eps, \eps]} g(x)}
        {\int_{-1/K}^{1/K} g(x) \mathrm d x}
    \]
    Since $g(x)$ is convex, at least one of
    \begin{align*}
        \max_{x \in [-\eps, \eps]} g(x) \leq \min_{x \in [-1/K, -\eps]} g(x)
        \quad \text{or} \quad
        \max_{x \in [-\eps, \eps]} g(x) \leq \min_{x \in [\eps, 1/K]} g(x)
    \end{align*}
    holds. Without loss of generality, assume the second case holds. Then we bound
    \[
        \frac
            {\max_{x \in [-\eps, \eps]} g(x)}
            {\int_{-1/K}^{1/K} g(x) \mathrm d x}
        \leq
        \frac
            {\max_{x \in [-\eps, \eps]} g(x)}
            {\int_{\eps}^{1/K} g(x) \mathrm d x}
        \leq \frac{1}{1/K - \eps}.
    \]
    To summarize, we find
        $\Pr[ X \in [-\eps, \eps]]
        \leq
        e^{1+\eps K} \cdot \frac{2\eps}{1/K - \eps}$.
    Since $\eps < 1/(8K)$ this implies
    \[
        \Pr[X \in [-\eps,\eps]] \leq 2 e^{9/8} \cdot \frac 8 7 \cdot \eps K \leq 8\eps K .\qedhere
    \]
\end{proof}

Now we show \Cref{lem:facet-distance-lb}.
\begin{proof}[Proof of Lemma~\ref{lem:facet-distance-lb}]\label{sec:proof-facet-distance-lb}
In the following arguments, we condition on $\ell \cap \conv(b_i : i\in [d-1]) \neq \emptyset$.
Without loss of generality, set $i = d-1$ and assume that $\ell$
is a linear subspace, i.e., $\pi_{w^\perp}(\ell)=0$.

We start with a coordinate transformation.
Let $\phi \in w^\perp \cap \mathbb S ^{d-2}$ denote the unit vector
satisfying $\phi^\T b_1 = \phi^\T b_j > 0$ for all $j = 1,\dots,d-1$.
Note that $\phi$ is uniquely defined almost surely: $w^\perp$ is a $(d-2)$-dimensional linear space
and we impose $(d-3)$ linear constraints $\{\phi^\top b_1 = \phi^\top b_j, \forall j \in [d-2] \}$.
Almost surely, these give a one-dimensional linear subspace which,
after adding the unit norm and $b_1^\T \phi > 0$ constraint, leaves a unique choice of $\phi$.

Now define $h \in \R$ by $h = \phi^\T b_1$ and define $\alpha \in \R$
by $\alpha h = - \phi^\T b_{d-1}$.
Since $0 \in \conv(\pi_{w^\perp}(b_i) : i \in [d-1])$ but $\phi^\T b_i > 0$ for all $i \in [d-2]$,
we must have $\alpha \geq 0$ for otherwise $\phi$ would separate $\conv(\pi_{w^\perp}(b_i) : i \in [d-1])$ from $0$.
Again from almost-sure non-degeneracy we get $\alpha > 0$ and $h \neq 0$.
We define the following coordinate transformation:
\begin{align*}
    b_j &= h\phi + c_j, \quad \forall j \in [d-2] \\
    b_{d-1} &= - \alpha h \phi + c_{d-1}
\end{align*}
where for each $j \in [d-1]$, $c_j \in \phi^\perp \cap \aff(b_1, \ldots, b_{d-1})$ has $(d-3)$ degrees of freedom.
From here on out, we consider the vertices $(b_1, \ldots,  b_{d-1})$
to be a function of $(h,\alpha, \phi, c_1,\dots,c_{d-1})$.
Again by \Cref{lem:smooth-jacobian},
the induced joint probability density on $(h,\alpha, \phi, c_1,\dots,c_{d-1}, b_d)$,
is proportional to
\begin{align*}
\vol_{d-1}(\conv(b_1,\dots,b_d)) \cdot
\vol_{d-3}(\conv(c_1, \ldots, c_{d-2})) \cdot
\prod_{j=1}^d \bar{\mu}_j(b_j)
\end{align*}
Using the principle of deferred decision,
fix the exact values of $(\alpha, \phi, c_1,\dots,c_{d-1}, b_d)$.
When this is the case we find that
\begin{align*}
& \vol_{d-1}(\conv(b_1,\dots,b_d)) \cdot
\vol_{d-3}(\conv(c_1, \ldots, c_{d-2})) \cdot
\prod_{j=1}^d \bar{\mu}_j(b_j) \\
\propto ~ & \vol_{d-1}(\conv(b_1,\dots,b_d)) \cdot
\vol_{d-3}(\conv(\pi_{w^\perp}(c_1),\dots,\pi_{w^\perp}(c_{d-2}))) \cdot
\prod_{j=1}^d \bar{\mu}_j(b_j).
\end{align*}
To see why $\vol_{d-3}(\conv(c_{1}, \dots, c_{d-2}))$ and $\vol_{d-3}(\conv(\pi_{w^\perp}(c_{1}),\dots, \pi_{w^\perp}(c_{d-2}))$ are proportional, consider the following.
Note that $\pi_{w^\perp}$ is a fixed projection.
Generically we know that $w \notin \operatorname{span}(b_2-b_1,\dots,b_{d-1}-b_1)$,
this is a non-degeneracy condition that holds true with probability $1$.
This implies that $\pi_{w^\perp}$ is a bijection between the affine hyperplane
$\aff(b_1,\dots,b_{d-1})$ and its image $\pi_{w^\perp}(\aff(b_1,\dots,b_{d-1}))$,
and also a bijection between the subsets
\[
\phi^\perp \cap \aff(b_1,\dots,b_{d-1}) \text{~and~} \pi_{w^\perp}(\phi^\perp \cap \aff(b_1,\dots,b_{d-1})).
\]
This is the space where $c_1,\dots,c_{d-2}$ live. It follows that
$\conv(c_1, \ldots, c_{d-2})$ has the same dimension as
its projection 
$\conv(\pi_{w^\perp}(c_1),\dots,\pi_{w^\perp}(c_{d-2}))$.
The ratio between their volumes depends only on $w$ and
$\phi^\perp \cap \operatorname{span}(b_2-b_1,\dots,b_{d-1}-b_1)$.
Note that $h = \phi^\perp b_1$ and so, after fixing $(\alpha, \phi, c_1,\dots,c_{d-1}, b_d)$
the ratio between the volumes is constant.

Note that the event $0 \in \conv(\pi_{w^\perp}(b_i) : i \in [d-1])$ depends only on these variables
and not on $h$, and the same is true for
$\vol_{d-3}(\conv(\pi_{w^\perp}(c_1),\dots,\pi_{w^\perp}(c_{d-2})))$.
We are looking only at the randomness in $h$,
and so we can ignore any constant factors in the probability density function.
The induced probability density on $h$ is now proportional to
\[
    \vol_{d-1}(\conv(b_1,\dots,b_d)) \cdot
    \prod_{j=1}^{d-1} \tilde{\mu}_j(h),
\]
where $\tilde{\mu}_j(h) := \bar{\mu}_j(h\phi + c_j), j \in [d-2]$ and
$\tilde{\mu}_{d-1}(h) := \bar{\mu}_{d-1}(\alpha h \phi + c_{d-1})$.
Since each $\bar{\mu}_j$ is $L$-log-Lipschitz,
it follows that the product $\prod_{j=1}^{d-1} \tilde{\mu}_j(h)$
is $(d-2 + \alpha)L \leq d(1 + \alpha)L$-log-Lipschitz
in $h$.

Next, consider the volume term.
We can write
    $\vol_{d-1}(\conv(b_1,\dots,b_d))$ as a constant depending on $d$ times
    the absolute value of the determinant of the following $(d-1)\times(d-1)$ matrix
    \[
        \begin{bmatrix}
            (b_1 - b_d)^\top \\
            \vdots \\
            (b_{d-1} - b_d)^\top
        \end{bmatrix}
        =
        \begin{bmatrix}
            (h\phi + c_1 - b_d)^\top \\
            \vdots \\
            (h\phi + c_{d-2} - b_d)^\top \\
            (- \alpha h\phi +  c_{d-1} - b_d)^\top 
        \end{bmatrix}
        =
        \begin{bmatrix}
            (c_1 - b_d)^\top \\
            \vdots \\
            (c_{d-2} - b_d)^\top \\
            (c_{d-1} - b_d)^\top
        \end{bmatrix} +
        h\cdot \begin{bmatrix} 1 \\ \vdots \\ 1 \\ -\alpha \end{bmatrix} \phi^\top,
    \]
    Define
    \begin{align*}
        B := \begin{bmatrix}
            (c_1 - b_d)^\top \\
            \vdots \\
            (c_{d-2} - b_d)^\top \\
            (c_{d-1} - b_d)^\top
        \end{bmatrix} ,
        v := \begin{bmatrix} 1 \\ \vdots \\ 1 \\ -\alpha
        \end{bmatrix}.
    \end{align*}
    Recall that both $B$ and $v$ are fixed and we are only interested in the distribution of $h$.
    Then by the matrix determinant lemma, we can write the volume as the absolute value of an affine function of $h$ (which is a convex function):
    \begin{align*}
        k(h):=\vol_{d-1}(\conv(b_1, \ldots, b_d)) &\propto \left|\det(B + h v\phi^\top) \right| \\
        &= \left| \det(B)(1 + j \phi^\top B^{-1} v) \right|
    \end{align*}
    Hence, we have found a convex function $k : \R \to \R_{\geq 0}$
    and a $d(1+\alpha)L$-log-Lipschitz function $\nu : \R \to \R_{\geq 0}$
    such that $h$ has probability density proportional to
    $k(h) \cdot \nu(h)$.

    To finalize the argument, we write
    $\dist(\pi_{w^\perp}(b_i), \aff(\pi_{w^\perp}(b_j) : j \in [d-1], j \neq i)) = |(1+\alpha)h|$.
    It follows that the signed distance $(1+\alpha)h$ has a probability density function
    proportional to the product of a $dL$-log-Lipschitz function
    and a convex function. The result follows from \Cref{lem:integrals}
    by plugging in the signed distance $(1+\alpha)h$, $K = dL$, and $\eps = \frac{1}{160 d^2 L}$.
\end{proof}

\subsection{Combining Together and Proof of Lemma~\ref{lem:normalvector}}\label{sub:combine}

In this section, we combine the deterministic argument in \Cref{lem:det} with the probabilistic arguments in \Cref{lem:rand-delta-lb} and \Cref{lem:rand-r-lb}. We can finally show the main technical lemma (\Cref{lem:normalvector}).

\begin{proof}[Proof of Lemma~\ref{lem:normalvector}]
    Without loss of generality, let $I = [d]$ and write $E = E_I$.
    Suppose $p' = A_J = \conv(a_j : j \in J) \cap W$ is the next vertex after the edge $F_I \cap W$.
    Here $J \in \binom{[n]}{d-1}$ and $R_J$ is the $(d-2)$-dimensional ridge.
    With probability $1$, the polytope $\conv(a_1,\dots,a_n)$
    is non-degenerate and $W \cap R'$ is a single point for any ridge $R'$ of $\conv(a_1,\dots,a_n)$ that intersects with $W$.
    We will show that conditional on $E$, each of the following conditions in the deterministic argument (\Cref{lem:det}) is satisfied with good probability:
    \begin{enumerate}
        \item (Bounded diameter) $\forall i,j\in [n]$, $\|a_i - a_j\| \leq 3$;
        \item (Lower bound of $\delta$) $\min_{k \in [n] \backslash I}\dist(\aff(F_I), a_k) \geq \Omega(\frac{1}{Ld\log n})$;
        \item (Lower bound of $r$) $\forall J \in \binom{I}{d-1}$ for which the ridge
        $R_J = \conv(a_j : j\in J)$ has nonempty intersection with $W$,
        we have $\dist(F_I \cap W, \partial R_J) \geq  \Omega(\frac{1}{d^4L^2})$.
    \end{enumerate}
    Note for the last point that \Cref{lem:det} only requires this for the set $J$
    which indexes the second vertex of $F_I \cap W$ in clockwise direction,
    but we prove it for both of the sets $J$ for which $R_J \cap W \neq \emptyset$.

    First, we write $D$ as the event that $\forall i, j \in [n]$ for which $\|a_i - a_j\| \leq 3$.
    From \Cref{lem:properties-laplace-gaussian}, for any $\sigma \leq \frac{1}{8\sqrt{d \log n}}$, with probability at least $1 - \binom{n}{d}^{-1}$, we have $\max_{i \in [n]}\|\hat{a}_i\| \leq 1 + 4\sigma \sqrt{d \log n} \leq \frac{3}{2}$, i.e., $\Pr[D] \geq 1 - \binom{n}{d}^{-1}$.
    Using the assumption that $\Pr[E_I] \geq 10\binom{n}{d}^{-1}$, we have
    $$\Pr[\neg D \mid E] = \Pr[\neg D \wedge E]/\Pr[E] \leq \frac{\Pr[\neg D]}{\Pr[E]} \leq 0.1, $$
    This immediately implies $\Pr[D \mid E] \geq 0.9$.

    Next, we consider $\delta := \dist(\aff(a_1,\dots,a_d), \{a_{d+1},\dots,a_n\})$.
    Using \Cref{lem:rand-delta-lb}, we have $\Pr[\delta \geq \frac{1}{10e^3 Ld\log n} \mid E] \geq 0.72$.

    Finally, we consider $r := \max_{J} \dist(\aff(a_1, \ldots, a_d) \cap W, \partial R_J)$ subject to all $J \in \binom{I}{d-1}$ such that $A_J$ happens (in other words, $R_J = \conv(a_j : j \in J)$ is a ridge of $F_I$ such that $R_J \cap W \neq \emptyset$). By a union bound,
    \begin{align}\label{eq:rand-1}
        & \Pr\left[ \exists J \in \binom{I}{d-1}, A_J \wedge \dist(\aff(F \cap W), \partial R_J \geq \frac{1}{19200d^4L^2} \mid E \right] \nonumber \\
        \geq & 1 - \sum_{J \in \binom{I}{d-1}} \Pr[ A_J \wedge \dist(\aff(F \cap W), \partial R_{J} < \frac{1}{19200d^4L^2} \mid E ] \nonumber \\
        = & 1 - \sum_{J \in \binom{I}{d-1}} \Pr[ \dist(\aff(F \cap W), \partial R_{J} < \frac{1}{19200d^4L^2} \mid E \wedge A_J ] \Pr[A_J \mid E ] .
    \end{align}

    From \Cref{lem:rand-r-lb}, for each $J \in \binom{I}{d-1}$, we know that
    \[
        \Pr[\dist(\aff(F \cap W), \partial R_J < \frac{1}{19200d^4L^2} \mid E \wedge A_J ]
        \leq 0.1 + \Pr[\neg D \mid E \wedge A_J],
    \]
    Notice that when $E$ happens, there are exactly two distinct ridges $R_{J}, R_{J'}$ that has nonempty intersection with $W$ (or $A_J$ happens), thus $\sum_{J \in \binom{I}{d-1}} \Pr[A_J \mid E] = 2$.
    Therefore
    \begin{align*}
        &  \sum_{J \in \binom{I}{d-1}} \Pr[ \dist(\aff(F \cap W), \partial R_{J} < \frac{1}{19200d^4L^2} \mid E \wedge A_J ] \Pr[A_J \mid E ] \\
        \leq &  \sum_{J \in \binom{I}{d-1}} (0.1 +  \Pr[\neg D \mid E \wedge A_J])\Pr[A_J \mid E ] \\
        \leq &  0.1 \cdot 2 - 2\cdot\Pr[\neg D \mid E] ,
    \end{align*}
    and \eqref{eq:rand-1} becomes
    \begin{align*}
        & \Pr\left[ \exists J \in \binom{I}{d-1}, A_J \wedge \dist(\aff(F \cap W), \partial R_J \geq \frac{1}{19200d^4L^2} \mid E  \right]  \\
        \geq & 1 - 0.1 \cdot 2 - 2\cdot\Pr[\neg D \mid E] \geq 0.6 .
    \end{align*}
    Therefore, by a union bound, the three conditions hold with probability at least $1 - (1 - 0.9) - (1 - 0.72) - (1 - 0.6) \geq 0.1$, and the lemma directly follows from \Cref{lem:det}.
\end{proof}

\section{Smoothed Complexity Lower Bound}\label{sec:lb}

In this section, we present the lower bound of the smoothed complexity of the shadow vertex simplex method by studying the intersection of the smoothed polar polytope \mbox{$\conv(a_1, \ldots, a_n) \subset \R^d$} (where each $a_i$ is a Gaussian perturbation from a $\bar a_i$), and the two-dimensional shadow plane $W \subset \R^d$. Our main result is as follows:

\begin{theorem}\label{thm:lb-main}
    For any $d > 5, n = 4d - 13$, there exists a two-dimensional linear subspace $W \subset \R^{d}$ and vectors $\bar a_1,\ldots,\bar a_n \in \R^d$, $\max_{i \in n} \|\bar{a}_i\|_1 \leq 1$ such that the following holds. 
    Let $a_1, \ldots, a_n$ be independent Gaussian random variables where each $a_i \sim \mathcal{N}_d(\bar{a}_i, \sigma^2 I), \sigma \leq \frac{1}{7200d\sqrt{\log n}}$.
    Then with probability at least $1 - \binom{n}{d}^{-1}$ we have
    \begin{align*} 
        \edges(\conv(a_1,\ldots,a_n) \cap W) \geq \Omega \left(\min \Big(\frac{1}{\sqrt{d\sigma\sqrt{\log n}}},2^d \Big) \right).
    \end{align*}
\end{theorem}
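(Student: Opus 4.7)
The plan is to follow the two-part strategy laid out in the proof overview. The first and harder part is to construct a deterministic polytope $Q = \conv(\bar a_1, \ldots, \bar a_n) \subset \R^d$ with $n = 4d - 15$ vertices and a two-dimensional subspace $W \subset \R^d$ such that: (a) $\|\bar a_i\|_1 \leq 1$ for every $i$, (b) $\Omega(1)\, \mathbb{B}_1^d \subseteq Q \subseteq \mathbb{B}_1^d$, so that $Q$ has inner $\ell_1$-radius bounded below by some constant $t > 0$, and (c) $Q \cap W$ has inner $\ell_2$-radius $r/(1 + 4^{-d})$ and outer $\ell_2$-radius $r$, both with respect to the origin, for some $r > 0$. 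This polytope is a variant of the Ben-Tal--Nemirovski and Glineur extended formulations of the regular $2^k$-gon. Their classical construction uses $O(k)$ auxiliary variables implementing successive half-angle reflections to realize a regular $2^k$-gon as a projection, but places some vertices exponentially close to the origin, violating (b). To fix this, additional \emph{balancing} vertices at $\ell_1$-distance $\Omega(1)$ must be added while preserving the shape of the slice $Q \cap W$.

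Once the construction is in place, the probabilistic part is straightforward. Applying the Gaussian tail bound coordinatewise together with a union bound over $nd$ scalar entries, with probability at least $1 - \binom{n}{d}^{-1}$ every coordinate of $a_i - \bar a_i$ is bounded by $O(\sigma \sqrt{\log n})$, so $\max_{i \in [n]} \|a_i - \bar a_i\|_1 \leq \eps := O(\sigma d \sqrt{\log n})$. This is an upper bound on the $\ell_1$-Hausdorff distance between $Q$ and $\tilde Q := \conv(a_1, \ldots, a_n)$. Under the assumption $\sigma \leq 1/(360 d \sqrt{\log n})$, we have $\eps \leq t/2$, so \Cref{lem:hdist} (applied with the $\ell_1$ ball) yields
\[
(1 - 2\eps/t)\, Q \subseteq \tilde Q \subseteq (1 + \eps/t)\, Q.
\]
Intersecting both inclusions with $W$ preserves them, so $\tilde Q \cap W$ has inner $\ell_2$-radius at least $(1 - O(\eps)) \cdot r/(1 + 4^{-d})$ and outer $\ell_2$-radius at most $(1 + O(\eps))\, r$ relative to the origin. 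The outer-to-inner $\ell_2$-radius ratio is therefore at most $1 + O(\max(\eps,\, 4^{-d}))$, and \Cref{lem:edgecount} immediately gives
\[
\edges(\tilde Q \cap W) \geq \Omega\bigl( (\max(\eps,\, 4^{-d}))^{-1/2} \bigr) = \Omega\!\left( \min\!\left( \frac{1}{\sqrt{\sigma d \sqrt{\log n}}},\; 2^d \right) \right).
\]

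The principal obstacle is the construction step: producing an extended formulation in $\R^d$ with only $O(d)$ vertices whose two-dimensional shadow is $(1 + 4^{-d})$-close to a Euclidean disk, while simultaneously having all vertices of $\ell_1$-norm at most $1$ and containing an $\ell_1$-ball of constant radius. The standard reflection-based formulations give the exponentially sharp shadow approximation needed for the $2^d$ branch of the lower bound, but their vertex placements make the inner $\ell_1$-radius of the full polytope exponentially small in $d$, so the Hausdorff perturbation lemma becomes vacuous. The technical novelty is to modify these classical constructions so that all four properties coexist; once they do, the smoothed lower bound falls out of the two generic geometric lemmas above combined with a routine Gaussian concentration estimate.
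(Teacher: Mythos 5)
Your proposal takes essentially the same route as the paper: you reduce to the deterministic statement via Gaussian concentration (an $\ell_1$-Hausdorff bound of $\eps = O(\sigma d \sqrt{\log n})$), and you invoke exactly the two geometric lemmas (\Cref{lem:hdist} for the containment sandwich and \Cref{lem:edgecount} for the edge count) that the paper uses in Theorems~\ref{thm:lb-main} and \ref{thm:lb-generic}; the paper's route to $\eps$ is an $\ell_2$ tail bound on each $a_i - \bar a_i$ followed by $\|\cdot\|_1 \leq \sqrt d\|\cdot\|_2$, which matches your coordinatewise-union-bound estimate up to constants.

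The one thing you have not actually supplied is the construction itself, which is where the paper's real work lies. You correctly identify the obstacle — the classical Ben-Tal--Nemirovski/Glineur reflection formulation has exponentially small inner radius so \Cref{lem:hdist} becomes vacuous — but your sketch of the fix (``additional balancing vertices while preserving the slice'') is not quite what the paper does and is too vague to know it would work. The paper instead builds the primal polytope $P$ with extra slack variables $t \in \R^k$ and $s \in \R$ inserted into the reflection inequalities \eqref{cons:v} together with box constraints \eqref{cons:t}--\eqref{cons:s}, proves $\frac{1}{30}\mathbb B_\infty \subset P - \text{center} \subset \frac 3 2 \mathbb B_\infty$ directly (\Cref{lem:lb-radius-P}, which requires the delicate calculation in \Cref{lem:innerball}), and then obtains $Q$ by polarity — so the ``balancing'' is achieved on the facet/inequality side of $P$, not by directly adjoining vertices to $Q$. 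Without an explicit construction verifying simultaneously that the shadow is $(1+4^{-\Theta(d)})$-spherical \emph{and} the inner $\ell_1$-radius is $\Omega(1)$, the lower bound does not follow; this is a genuine gap, albeit one you explicitly flagged.
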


\Cref{thm:lb-main} is a direct consequence of the next theorem, which is a lower bound for adversarial perturbations of bounded magnitude:
\begin{theorem}\label{thm:lb-generic}
    For any $d > 5, n = 4d-13$, there exists a two-dimensional linear subspace $W \subset \R^{d}$ and vectors $\bar a_1,\ldots,\bar a_n \in \R^d$, $\max_{i \in n} \|\bar{a}_i\|_1 \leq 1$ such that the following holds. 
    For any $\eps < \frac{1}{180}$, if $a_1, \ldots, a_n \in \R^d$ satisfy $\|a_i - \bar a_i\|_1 \leq \eps$ for all $i \in [n]$
    then we have
    \begin{align*} 
        \edges(\conv(a_1,\ldots,a_n) \cap W) \geq \Omega \left(\min \Big(\frac{1}{\sqrt{\eps}},2^d \Big) \right).
    \end{align*}
\end{theorem}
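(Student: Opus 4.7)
The plan is to reduce the theorem to two ingredients previewed in the proof overview: (i) an explicit construction, inspired by extended formulations of regular $2^k$-gons, of a polytope $Q = \conv(\bar a_1, \ldots, \bar a_n) \subset \R^d$ with $n = 4d-15$ together with a two-dimensional linear subspace $W \subset \R^d$; and (ii) a Hausdorff sandwich argument that transfers the geometric properties of $Q \cap W$ to the perturbed polytope.

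For the construction, I would build $Q$ and $W$ so that $\Omega(1)\cdot \mathbb{B}_1^d \subseteq Q \subseteq \mathbb{B}_1^d$, while the two-dimensional polygon $Q \cap W$ has inner $\ell_2$-radius $r/(1+4^{-d})$ and outer $\ell_2$-radius $r$ about a common center $x_0 \in W$ (one may take $x_0 = 0$ after translation). The target polygon is essentially a regular $2^k$-gon with $k = \Theta(d)$, realized as the two-dimensional slice of a $d$-dimensional extended formulation adapted from the Ben-Tal--Nemirovski reflection ladder. The critical modification, compared to the original BN construction, is to ensure that the ambient polytope $Q$ has inner $\ell_1$-radius of constant order rather than exponentially small, while still using only $4d-15$ vertices. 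This is the main technical obstacle: one must lay out the $4d-15$ vertices so that their convex hull simultaneously (a) fills a constant fraction of $\mathbb{B}_1^d$ and (b) slices $W$ in an almost-regular $2^k$-gon. The vertex count $4d-15$ strongly suggests a pyramid built over a reflection ladder of depth $\Theta(d)$, with a small number of additional vertices rounding out the shape in directions transverse to $W$.

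Given the construction, the rest proceeds routinely. Any adversarial perturbation with $\|a_i - \bar a_i\|_1 \leq \eps$ yields $\tilde Q = \conv(a_1,\ldots,a_n)$ with $\ell_1$-Hausdorff distance at most $\eps$ from $Q$. Using that $Q$ has inner $\ell_1$-radius $t = \Omega(1)$ about $0$ and applying the containment lemma (\Cref{lem:hdist}) announced in the introduction, one obtains
\[
    (1 - 2\eps/t)\, Q \;\subseteq\; \tilde Q \;\subseteq\; (1 + \eps/t)\, Q.
\]
Intersecting with $W$ yields the corresponding two-dimensional sandwich, so $\tilde Q \cap W$ has outer-to-inner $\ell_2$-radius ratio at most $(1+4^{-d})(1+O(\eps))/(1-O(\eps)) = 1 + O(\eps + 4^{-d})$, provided $\eps < 1/90$ so that the denominator stays bounded away from zero.

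Finally, I would invoke the edge-counting lemma (\Cref{lem:edgecount}): any two-dimensional convex polygon whose outer-to-inner radius ratio is at most $1 + \eps'$ has $\Omega(1/\sqrt{\eps'})$ edges, because every edge subtends a chord of length $O(r\sqrt{\eps'})$ while the perimeter is $\Omega(r)$. Setting $\eps' = O(\eps + 4^{-d})$ gives
\[
    \edges(\tilde Q \cap W) \;\geq\; \Omega\!\left(\frac{1}{\sqrt{\eps + 4^{-d}}}\right) \;=\; \Omega\!\left(\min\!\left(\frac{1}{\sqrt{\eps}},\, 2^d\right)\right),
\]
which is the desired conclusion. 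The entire argument hinges on the extended-formulation construction in the second paragraph; once that is in place, the Hausdorff sandwich and the edge-count lemma are standard bookkeeping with constants tuned so that $\eps < 1/90 \leq t/4$.
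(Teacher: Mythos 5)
Your high-level plan coincides with the paper's: build $Q = \conv(\bar a_1,\dots,\bar a_n) \subset \R^d$ and a plane $W$ with $\Omega(1)\cdot\mathbb{B}_1^d \subset Q \subset \mathbb{B}_1^d$ and with $Q\cap W$ sandwiched between concentric disks of radius ratio $1+4^{-\Theta(d)}$, then transfer these containments to the perturbed hull via \Cref{lem:hdist} and finish with \Cref{lem:edgecount}. The bookkeeping in your last two paragraphs (the Hausdorff sandwich, the slice with $W$, the edge count, and the restriction $\eps < 1/90$) matches the paper's argument and is correct.

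The genuine gap is that you never supply the construction: you state its required properties, observe that the vertex count suggests ``a pyramid built over a reflection ladder,'' and then concede that ``the entire argument hinges on the extended-formulation construction.'' That construction is the technical content of this theorem and cannot be treated as a black box. What the paper actually does, and what your proposal is missing, is: (i) write an explicit primal polytope $P \subset \R^{k+5}$ with $k = d-5$, cut out by \eqref{cons:absvalue}--\eqref{cons:s}, using auxiliary points $p_1,\dots,p_k$ pinned down by equalities and, crucially, extra slack variables $t \in \R^k$, $s \in \R$; (ii) show $\ball^2 \subset \pi_W(P) \subset (1+4^{-k-2})\ball^2$ (\Cref{lem:lb-radius-piwP}); (iii) show a recentered $P$ satisfies $\frac{1}{30}\mathbb{B}_\infty^{k+5} \subset P - \bar z \subset \frac{3}{2}\mathbb{B}_\infty^{k+5}$ (\Cref{lem:lb-radius-P}, via the lengthy estimate in \Cref{lem:innerball}) --- here the slacks $t,s$ are precisely what let the inner $\ell_\infty$-radius be $\Omega(1)$ rather than $2^{-\Omega(k)}$, and verifying this requires tracking how a small $\ell_\infty$-perturbation of $(x,y,p_0,t,s)$ propagates through the reflection recursion; and (iv) pass to the scaled polar $Q = \frac{1}{30}(P - \bar z)^\circ$, so $\ell_\infty$-ball containments of $P$ become $\ell_1$-ball containments of $Q$ and the disk sandwich of $\pi_W(P)$ becomes one of $Q \cap W$ via \Cref{fact:dual-containing} and \Cref{fact:dual-intersection}. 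Step (iii) is exactly where a naive Ben-Tal--Nemirovski ladder would fail, and your proposal correctly flags that something must change there, but without the actual construction and its radius estimates the theorem remains unproved.
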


The rest of this section is organized as follows.
In \Cref{sec:lb-construction}, we construct an auxiliary polytope $P \subset \R^d$ and a two-dimensional shadow plane $W$.
In \Cref{sec:lb-radius-proj}, we show that the projection $\pi_W(P)$ approximates the unit disk $\ball^2$. 
In \Cref{sec:lb-radius-P}, we analyse the largest $\ell_\infty$-ball contained in $P$ and the smallest $\ell_\infty$-ball containing $P$.
\Cref{sec:lb-dual} investigates the polar polytope $Q = (P-x)^\circ$ of a shift of $P$, such that we may choose $\bar{a}_1,\dots,\bar{a}_n$
to satisfy $Q = \conv(\bar a_1,\dots,\bar a_n)$.
The largest contained $\ell_1$-ball in $Q$ is derived from the smallest $\ell_\infty$-ball containing $P$,
and the smallest $\ell_1$-ball containing $Q$ is derived from the largest $\ell_\infty$-ball contained in $P$.
Similarly, the section $Q \cap W$ approximates a circular disk because the projection $\pi_W(P)$ approximates a circular disk.
Finally, \Cref{sec:lb-perturbation} shows that the bounded ratio between the inner and outer radii implies that any
sufficiently small perturbation $\tilde Q$ still has $\tilde Q \cap W$ approximate the unit disk $\ball^2$ well
and uses this to prove \Cref{thm:lb-main}.

\subsection{Construction of the Auxiliary Polytope}\label{sec:lb-construction}

In this subsection, we construct the auxiliary polytope $P$ and the two-dimensional plane $W$.
For $k \in \N$, we construct a $(k + 5)$-dimensional polytope.
We will use the following vectors in the definition:
\begin{itemize}
    \item Define $e_1 = \begin{bmatrix}1\\ 0 \end{bmatrix} \in \R^2$ and $e_2 = \begin{bmatrix} 0\\ 1 \end{bmatrix}$.
    \item For every $i \in \{0, 1, \ldots, k\}$, define the pair of orthogonal unit vectors $w_i  = \begin{bmatrix}\cos(\pi/2^{i+2}) \\ \sin(\pi/2^{i+2}) \end{bmatrix} \in \R^2$ and $v_i =  \begin{bmatrix}\sin(\pi/2^{i+2}) \\ -\cos(\pi/2^{i+2})\end{bmatrix} \in \R^2$.
\end{itemize}

With these definitions in mind, construct an auxiliary $P' \subset \R^{3k+5}$ as the set of
points $(x,y, p_0,\dots,p_k, t,s)$ where $x,y \in \R, p_0,p_1,\ldots,p_k \in \R^2, t \in \R^k$ and $s \in \R$
satisfy the following system of linear inequalities:

\pagebreak
\begin{align}
    e_1^\top p_0 &\geq |x|, e_2^\top p_0 \geq |y| \label{cons:absvalue}  \\
    w_i^\top p_i &= w_i^\top p_{i-1} ,~ \forall i \in [k] \label{cons:w}\\
    t_i  + is = v_i^\top p_i &\geq |v_i^\top p_{i-1}| ,~ \forall  i \in [k] \label{cons:v}\\
    e_1^\top p_k &\leq 1 \label{cons:deathbarrier}\\
    \mathbf{0}_k \leq t &\leq \mathbf{1}_k \label{cons:t}\\
    0 \leq s &\leq 1 \label{cons:s}.
\end{align}

We remark that $p_0, t, s$ uniquely define the values of $p_1,p_2,\ldots,p_k$ via \eqref{cons:w} and \eqref{cons:v}. As such, define the polytope $P \subset \R^{k + 5}$ as the projection of $P'$ onto the subspace spanned by the variables $(x,y,p_0,t,s)$:
\begin{equation}\label{eq:extension}
P = \{(x,y,p_0,t,s) : \exists ~ p_1,\ldots,p_k \mbox{, s.t. } (x,y,p_0,\ldots,p_k,t,s) \in P'\}.
\end{equation}
We choose the plane $W$ to be the one that is spanned by the unit vectors in the $x$ and $y$ directions.

An illustration of the vertices of the projected polytope $\pi_W(P)$ can be found in \Cref{fig:unperturbed} for $k=4$.
Note that the figure appears to depict a regular polygon with $2^{k+1}$ vertices.
Our construction of $P'$ is similar to those of \cite{BN01,glineur}.
The primary difference lies in the addition of the variables
$t$ and $s$ in \eqref{cons:v}.
This change is made to ensure that the projected polytope $P$
has its largest contained and smallest containing
$\ell_\infty$-ball be of similar sizes.

\begin{sidefigure}
    \centering
    \includegraphics[width=0.5\textwidth]{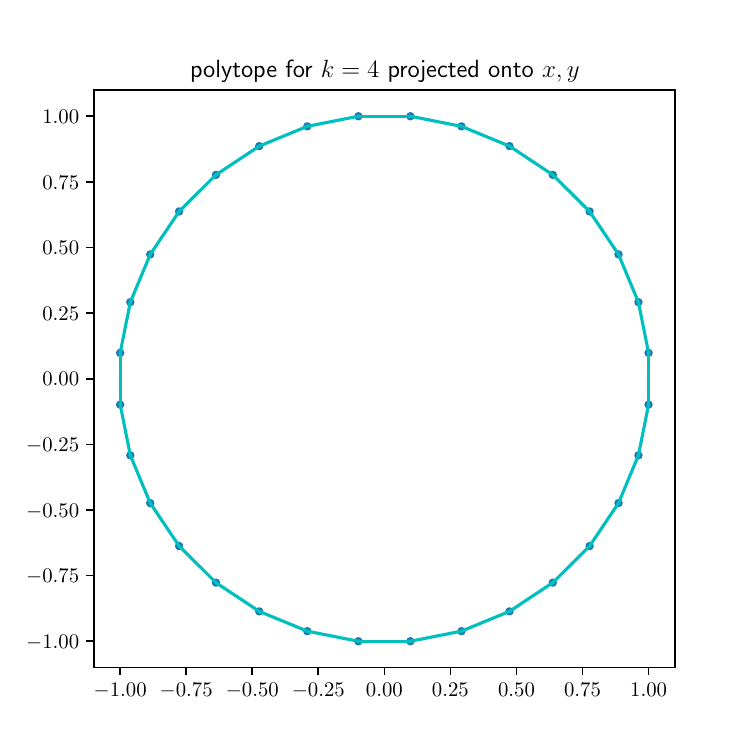}
    \caption{Vertices of the projected auxiliary polytope $\pi_W(P)$ (see \eqref{eq:extension}) without perturbation for $k=4$.}\label{fig:unperturbed}
\end{sidefigure}

\subsection{Projected Auxiliary Polytope Approximates Two-Dimensional Unit Disk}\label{sec:lb-radius-proj}

In this subsection, we will show that the polytope $P$ we constructed in
\eqref{eq:extension} has a projection $\pi_W(P)$ which approximates
the two-dimensional unit disk $\ball^2 = \{x, y \in \R: x^2 + y^2 \leq 1\}$
within exponentially small error:

\begin{lemma}[Projected auxiliary polytope approximates the two-dimensional disk]\label{lem:lb-radius-piwP}
For any $k \in \N$,
let $P \subset \R^{k+5}$ be the polytope defined by the linear system \eqref{eq:extension} with variables $x, y, s \in \R, p_0 \in \R^2, t \in \R^k$.
Let $W$ be the two-dimensional subspace spanned by the directions of $x$ and $y$. Then we have
\[
\ball^2 \subset \pi_W(P) \subset \cos(\pi/2^{k+2})^{-1}\ball^{2}.
\]
\end{lemma}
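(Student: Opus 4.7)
The proof naturally splits into the two inclusions; I would attack them separately.

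For the inclusion $\ball^2 \subseteq \pi_W(P)$, I would give an explicit witness. Fix any $(x,y)\in\ball^2$ and take $p_0 = (|x|,|y|)$, which satisfies \eqref{cons:absvalue} and has $\|p_0\|_2 \leq 1$. Define $p_1,\dots,p_k$ inductively by the reflection rule: if $v_i^\top p_{i-1} \geq 0$ set $p_i = p_{i-1}$, otherwise let $p_i$ be the reflection of $p_{i-1}$ across the line $\mathrm{span}(w_i)$. Either way \eqref{cons:w} holds and $v_i^\top p_i = |v_i^\top p_{i-1}| \geq 0$. Since each reflection preserves Euclidean length, $\|p_i\|_2 = \|p_0\|_2 \leq 1$ for every $i$. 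Setting $s=0$ and $t_i := v_i^\top p_i$ ensures $t_i\in[0,1]$, \eqref{cons:v} is satisfied with equality, and \eqref{cons:deathbarrier} follows from $e_1^\top p_k \leq \|p_k\|_2 \leq 1$.

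For the inclusion $\pi_W(P) \subseteq (1+4^{-k-2})\ball^2$, fix any $(x,y,p_0,\dots,p_k,t,s)\in P'$. Since $|x|\leq e_1^\top p_0$ and $|y|\leq e_2^\top p_0$ with $p_0$ in the non-negative orthant, one has $x^2+y^2 \leq \|p_0\|_2^2$, so it suffices to bound $\|p_0\|_2$. Writing each $p_i$ in the orthonormal frame $\{w_i,v_i\}$, set $\alpha_i = w_i^\top p_i$ and $\beta_i = v_i^\top p_i = t_i + is$; constraint \eqref{cons:v} together with $t_i,s \geq 0$ gives $\alpha_i,\beta_i \geq 0$. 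The change-of-basis identities (using $\phi_i := \pi/2^{i+2}$) translate \eqref{cons:w}--\eqref{cons:v} into the forward recursion $\alpha_{i+1} = \alpha_i \cos\phi_{i+1} + \beta_i \sin\phi_{i+1}$ and $\beta_{i+1} \geq |\beta_i\cos\phi_{i+1} - \alpha_i\sin\phi_{i+1}|$, while \eqref{cons:deathbarrier} reads $\alpha_k\cos\phi_k + \beta_k\sin\phi_k \leq 1$. Combined, these imply the monotonicity $\alpha_i^2+\beta_i^2 \leq \alpha_{i+1}^2+\beta_{i+1}^2$, so it remains to trace the terminal constraint back to a bound on $\alpha_1^2+\gamma_1^2 = \|p_0\|_2^2$ where $\gamma_1 = v_1^\top p_0$ satisfies $|\gamma_1|\leq \beta_1$.

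The main obstacle is the variable $s$, shared across all $i$, which forces $\beta_i \geq is$ simultaneously: a naive telescoping only delivers a constant-factor bound on $\|p_0\|_2$, not the claimed exponential slack $4^{-k-2}$. I expect the tight bound to come from an LP-duality certificate that combines \eqref{cons:deathbarrier} with non-negative multiples of $t_i \leq 1$ and $s \leq 1$ whose coefficients decay geometrically in $i$, using the doubling structure $\phi_i = \phi_{i-1}/2$ and the identity $\sum_{i\geq 1} i\sin\phi_i < \infty$ to show that the terminal constraint dominates; explicit matching of the LP dual with the radius function will yield $\|p_0\|_2^2 \leq (1+4^{-k-2})^2$.
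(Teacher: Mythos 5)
Your argument for the inner inclusion $\ball^2 \subset \pi_W(P)$ is correct and is essentially the proof of \Cref{lem:innerdisk}: take $p_0 = (|x|,|y|)$, set $s=0$, make \eqref{cons:v} tight so that $p_1,\dots,p_k$ are norm-preserving reflections of $p_0$, and then $t_i = |v_i^\top p_{i-1}| \le \|p_0\| \le 1$ and $e_1^\top p_k \le \|p_k\| = \|p_0\| \le 1$.

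For the outer inclusion there is a genuine gap, and your proposed repair points in the wrong direction. The recursion $\alpha_{i+1} = \alpha_i\cos\phi_{i+1} + \beta_i\sin\phi_{i+1}$, $\beta_{i+1} \geq |\beta_i\cos\phi_{i+1} - \alpha_i\sin\phi_{i+1}|$ is correct, and so is the consequent norm monotonicity $\alpha_i^2+\beta_i^2 \le \alpha_{i+1}^2+\beta_{i+1}^2$, but this reduces bounding $\|p_0\|$ to bounding $\|p_k\|$, and the terminal constraint $e_1^\top p_k = \alpha_k\cos\phi_k + \beta_k\sin\phi_k \le 1$ cannot bound $\|p_k\|=\sqrt{\alpha_k^2+\beta_k^2}$: taking $\alpha_k = 0$ and $\beta_k$ arbitrarily large is compatible with everything you have recorded about $\alpha_k,\beta_k$. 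The proof of \Cref{lem:outerdisk} does not attempt to control $\|p_k\|$. Instead it lower-bounds $e_1^\top p_k$ by comparing to the \emph{tight} sequence $p_i^*$ obtained by forcing equality in \eqref{cons:v}, i.e.\ $v_i^\top p_i^* = |v_i^\top p_{i-1}^*|$. Two facts then close the argument: (i) since \eqref{cons:w} forces $p_i - p_{i-1}$ to be a nonnegative multiple of $v_i$ plus the mandatory sign-correction, the increment $e_1^\top p_i - e_1^\top p_{i-1} = (e_1^\top v_i)\bigl(v_i^\top p_i - v_i^\top p_{i-1}\bigr)$ is minimized when \eqref{cons:v} is tight, and telescoping yields $e_1^\top p_k \ge e_1^\top p_k^*$; (ii) the tight sequence is a chain of reflections in $\mathrm{span}(w_i)$ or identities, so $\|p_k^*\|=\|p_0\|$ exactly, and the angle $\theta_i$ of $p_i^*$ to $e_1$ obeys $\theta_i = \pi/2^{i+2} - |\theta_{i-1}-\pi/2^{i+2}| \le \pi/2^{i+2}$. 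Combining with \eqref{cons:deathbarrier} gives $1 \ge e_1^\top p_k \ge e_1^\top p_k^* = \|p_0\|\cos\theta_k$, which bounds $\sqrt{x^2+y^2} \le \|p_0\|$. In particular this argument uses neither $t_i \le 1$ nor $s\le 1$, so the LP-dual certificate you anticipate, balancing \eqref{cons:deathbarrier} against geometrically-weighted multiples of those upper bounds, is not the mechanism at work; the idea you are missing is the reduction to the tight reflection chain, not any role played by the slack variable $s$.
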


\Cref{lem:lb-radius-piwP} directly follows from the next two lemmas.
First, we show that the two-dimensional unit disk is contained in $\pi_W(P)$:

\begin{lemma}[Inner $\ell_2$-radius of the projected auxiliary polytope]\label{lem:innerdisk}
For every $x,y \in \R$ with $x^2+y^2 \leq 1$ there exist $p_0 \in \R^2$,
$t \in \R^k$ and $s \in \R$ such that $(x,y,p_0,t,s) \in P$.
\end{lemma}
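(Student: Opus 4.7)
The plan is to construct an explicit witness $(p_0, t, s)$, together with the auxiliary vectors $p_1, \dots, p_k \in \R^2$ implicit in the definition \eqref{eq:extension}, by reflecting $(x,y)$ across the lines $\mathrm{span}(w_i)$ one at a time.

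Because \eqref{cons:absvalue} depends only on $|x|$ and $|y|$ while the remaining constraints do not involve $x,y$ at all, I may assume without loss of generality that $x, y \geq 0$ and set $p_0 := (x,y)$, which satisfies \eqref{cons:absvalue}. For each $i \in [k]$, writing the already-constructed $p_{i-1}$ in the orthonormal basis $(w_i,v_i)$ of $\R^2$ as $p_{i-1} = (w_i^\top p_{i-1})\,w_i + (v_i^\top p_{i-1})\,v_i$, I recursively define
\[
p_i := (w_i^\top p_{i-1})\,w_i + |v_i^\top p_{i-1}|\,v_i.
\]
Geometrically this leaves $p_{i-1}$ unchanged when $v_i^\top p_{i-1} \geq 0$ and otherwise reflects it across $\mathrm{span}(w_i)$. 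In either case one has $w_i^\top p_i = w_i^\top p_{i-1}$ and $v_i^\top p_i = |v_i^\top p_{i-1}| \geq 0$, which takes care of \eqref{cons:w} and the inequality portion of \eqref{cons:v}.

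I then take $s := 0$ and $t_i := v_i^\top p_i$, which enforces the equality $t_i + is = v_i^\top p_i$ in \eqref{cons:v} and trivially satisfies \eqref{cons:s}. The two remaining constraints \eqref{cons:deathbarrier} and \eqref{cons:t} both reduce to showing $\|p_i\|_2 \leq 1$ for every $i$, and this follows from the key observation that the recursion above preserves the Euclidean norm: since $(w_i,v_i)$ is orthonormal,
\[
\|p_i\|_2^2 = (w_i^\top p_{i-1})^2 + |v_i^\top p_{i-1}|^2 = \|p_{i-1}\|_2^2,
\]
so $\|p_i\|_2 = \|p_0\|_2 = \sqrt{x^2+y^2} \leq 1$ for all $i$. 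Consequently $e_1^\top p_k \leq \|p_k\|_2 \leq 1$ and, by Cauchy--Schwarz, $0 \leq t_i = |v_i^\top p_{i-1}| \leq \|p_{i-1}\|_2 \leq 1$.

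The whole argument is a direct verification on an explicit witness, so I foresee no real obstacle. The only mildly delicate point is that the scalar $s$ is shared across all $k$ constraints in \eqref{cons:v}, which is precisely why the choice $s = 0$ (paired with the per-coordinate bound $|v_i^\top p_i| \leq 1$ obtained from the norm-preservation above) suffices.
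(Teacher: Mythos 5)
Your proof is correct and takes essentially the same approach as the paper: both set $s = 0$, choose $p_0$ and the subsequent $p_i$ by making \eqref{cons:absvalue} and \eqref{cons:v} tight (which is exactly the reflection recursion you write out explicitly), and then use that reflections preserve the Euclidean norm together with Cauchy--Schwarz to verify \eqref{cons:t} and \eqref{cons:deathbarrier}. The paper's write-up is merely terser; you have filled in the same steps in more detail.
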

\begin{proof}
    Suppose $x,y \in \mathbb{R}$ with $x^2 + y^2 \le 1$. We want to exhibit $p_0 \in \mathbb{R}^2$, $t \in \mathbb{R}^k$, and $s \in \mathbb{R}$ such that $(x, y, p_0, t, s)\in P$. We proceed as follows:

    First, set $s = 0$ and define $p_0 = (x,y)$. Then define $p_{i}$ inductively by the recurrence 
    \[
    p_{i} = \begin{bmatrix} v_{i}^\top \\  w_{i}^\top \end{bmatrix}^{-1} \begin{bmatrix}\abs{v_{i-1}^{\top}p_{i-1}}\\ w_{i-1}^{\top}p_{i-1} \end{bmatrix},
    \]
    with the base case being $p_{0} = (|x|, |y|)$. Finally, for each $i \in [k]$ define $t_{i} = v_{i}^{\top} p_{i} = |v_{i}^{\top}p_{i-1}|$.

    Then \eqref{cons:absvalue}, \eqref{cons:w}, \eqref{cons:v} and \eqref{cons:s} follow by definition.  It remains to argue that \eqref{cons:deathbarrier} and \eqref{cons:t} are satisfied. Since $v_{i}$ and $w_{i}$ are orthogonal and defined to have norm $1$, the matrix $\begin{bmatrix} v_{i}^\T \\ w_{i}^\T \end{bmatrix}$ is an isometry. Thus, it preserves norms, meaning that each of the $p_{i}$ have $\ell_2$-norm $\sqrt{x^2 + y^2}$.
    Since $t_i = |v_i^\top p_{i-1}| \leq \|v_i\|\cdot \|p_{i-1}\| \leq 1$,
    we know that \eqref{cons:t} is satisfied.
    Furthermore, we have $e_1^\top p_k \leq \|e_1\|\cdot\|p_k\| = \|e_1\| \cdot \sqrt{x^2 + y^2} \leq 1$,
    which ensures that \eqref{cons:deathbarrier} is satisfied.
\end{proof}

In the next lemma, we show that the projection $\pi_W(P)$ is contained in the two-dimensional disk $\cos(\pi/2^{k+2})^{-1}\ball^2$:

\begin{lemma}[Outer $\ell_2$-radius of the projected auxiliary polytope]\label{lem:outerdisk}
For every $x,y \in \R$ such that $\sqrt{x^2+y^2} \geq \cos(\pi/2^{k+2})^{-1}$ , there exist no
$p_0 \in \R^2, t \in \R^k$ and $s \in \R$ such that $(x,y,p_0,t,s) \in P$.
\end{lemma}
\begin{proof}
Fix any $(x, y) \in \R^2$ and $p_0 \in \R^2$, such that $\sqrt{x^2 + y^2} > \cos(\pi/2^{k+2})^{-1}$ and $p_0 \geq \begin{bmatrix}|x|, |y| \end{bmatrix}^\top$. 
Also fix any $p_1, \ldots, p_k \in \R^2$ satisfying \eqref{cons:w} and \eqref{cons:v}. We will show that such $p_1, \ldots, p_k$ would violate \eqref{cons:deathbarrier}, i.e., $e_1^\top p_k > 1$.  To simplify our notation, for all $i \in \{0, 1, \ldots, k\}$, let $(p_i)_v = v_i^\top p_i \in \R$ and $(p_i)_w = w_i^\top p_i \in \R$. Then $p_i = (p_i)_v v_i + (p_i)_w w_i$.

Notice that for all $i \in [k]$, the increment of the first coordinate from $p_{i-1}$ to $p_i$ is
\begin{align}\label{eq:increment-e1}
    e_1^\top p_i - e_1^\top p_{i-1} &= e_1^\top \left( (p_i)_w w_i + (p_i)_v v_i - (w_i^\top p_{i-1})w_i - (v_i^\top p_{i-1})v_i  \right) \nonumber \\
    &= e_1^\top \left( (p_i)_v v_i - (v_i^\top p_{i-1})v_i  \right) & (\text{By \eqref{cons:w}}) \nonumber \\
    &\geq e_1^\top v_i \left( |v_i^\top p_{i-1}| - v_i^\top p_{i-1} \right) & (\text{By $e_1^\top v_i > 0$ and \eqref{cons:v}}) \\
    &\geq 0 \nonumber
\end{align}
where the inequality in \eqref{eq:increment-e1} is tight when $v_i^\top p_i = |v_i^\top p_{i-1}|$.
Let $p_0^*, p_1^*, \ldots, p_k^*\in\R^2$ be the (unique) sequence defined by
\begin{align*}
    p_0^* &= p_0 \\
    w_i^\top p_i^* &= w_i^\top p_{i-1}^*, & \forall i\in [k] \tag{Tight for \eqref{cons:w}} \\
    v_i^\top p_i^* &= |v_i^\top p_{i-1}^*|, & \forall i\in [k] \tag{Tight for \eqref{cons:v}}
\end{align*}
Then $e_1^\top p_i - e_1^\top p_{i-1} \geq e_1^\top p_i^* - e_1^\top p_{i-1}^*$ for each $i \in [k]$. Also, notice that $e_1^\top p_0 = e_1^\top p_0^*$, therefore for each $i \in [k]$, $e_1^\top p_i \geq e_1^\top p_i^*$.

It remains to show that $e_1^\top p_k^* > 1$.
For all $i \in \{0, 1, \ldots, k\}$, let $\theta_i \in [-\pi, \pi]$ denote the angle between $p_i^* \in \R^2$ and $e_1$. Then since $e_1^\top p_0 \geq 0$ and $e_2^\top p_0 \geq 0$, we have $0 \leq \theta_0 \leq \frac{\pi}{2}$.
For any $i \in [k]$, notice that $p_i^*$ equals to $p_{i-1}^*$ (if $\theta_{i-1} \leq \frac{\pi}{2^{i+2}}$), or equals to the mirror of $p_{i-1}^*$ with respect to the line spanned by $w_i$ (if $\theta_{i-1} \geq \frac{\pi}{2^{i+2}}$). By induction, this gives 
\begin{align*}
    \|p_i^*\|_2 = \|p_{i-1}^*\|_2 = \ldots = \|p_0^*\|_2 = \|p_0\|_2,
\end{align*}
and
\begin{align*}
    \theta_{i} = \frac{\pi}{2^{i+2}} - |\theta_{i-1} - \frac{\pi}{2^{i+2}}| \leq \frac{\pi}{2^{i+2}}.
\end{align*}
Therefore, we get
\begin{align*}
    e_1^\top p_k^* = \|p_k^*\| \cdot \cos(\theta_k) \geq \|p_0\| \cdot \cos\left(\frac{\pi}{2^{k + 2}}\right) > 1.
\end{align*}
Thus we have shown $x^2 + y^2 > \cos(\pi/2^{k+2})^{-1}$ implies that $e_1^\T p_k \geq e_1^\T p_k^* > 1$
as desired.
\end{proof}

\subsection{Inner and Outer \texorpdfstring{$\ell_{\infty}$}{l\_infty}-Radius of the Auxiliary Polytope}\label{sec:lb-radius-P}

In this subsection, we will show that the auxiliary polytope $P$ has large inner $\ell_\infty$-radius and small outer $\ell_\infty$-radius.

\begin{lemma}[Inner and Outer $\ell_\infty$-Radius of the Auxiliary Polytope]\label{lem:lb-radius-P}
For $k \in \N$, let $P \subset \R^{k+5}$ be the polytope defined by the linear system \eqref{eq:extension}.
Then for or $\bar x = \bar y = 0, \bar{p}_0 = (1/6,1/6)^\top, \bar t  = \mathbf{1}_{k}/30, \bar{s} = \frac{1}{3}$, it holds that
\[
    \frac{1}{30}\cdot \mathbb{B}_\infty^{k+5} \subset P - (\bar x, \bar y, \bar{p}_0, \bar t, \bar{s}) \subset \frac 3 2 \mathbb{B}_\infty^{k+5}.
\]
\end{lemma}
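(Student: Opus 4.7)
The plan is to take the center $(\bar x, \bar y, \bar p_0, \bar t, \bar s) := (0, 0, (\tfrac{1}{10}, \tfrac{1}{10}), \tfrac{1}{2}\mathbf{1}_k, \tfrac{1}{2})$ and verify the two inclusions separately. The outer bound is mostly bookkeeping: \eqref{cons:t} and \eqref{cons:s} give $|t_j - \tfrac{1}{2}|, |s-\tfrac{1}{2}| \leq \tfrac{1}{2}$; \Cref{lem:lb-radius-piwP} gives $|x|, |y| \leq \sqrt{1+4^{-k-2}} \leq \sqrt{2}$; and to bound $p_0$ I would note that $\|p_i\|_2$ is non-decreasing in $i$, since $\|p_i\|^2 = (w_i^\top p_{i-1})^2 + (t_i+is)^2 \geq (w_i^\top p_{i-1})^2 + (v_i^\top p_{i-1})^2 = \|p_{i-1}\|^2$, and then reuse the reflection argument from the proof of \Cref{lem:outerdisk} to conclude $\|p_0\|_2 \leq \|p_k\|_2 \leq 1/\cos(\pi/2^{k+2}) \leq \sqrt 2$. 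Every coordinate deviation from the center is thus bounded by $\sqrt 2 < \tfrac{3}{2}$.

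For the inner inclusion I would fix an arbitrary perturbation of $\ell_\infty$-norm at most $r := 1/30$ and check every defining inequality. Constraints \eqref{cons:t}, \eqref{cons:s} hold since $\tfrac{1}{2} \pm r \in [0,1]$, and \eqref{cons:absvalue} holds since $\bar p_{0,i} = \tfrac{1}{10} \geq 2r$. The challenging constraints \eqref{cons:v} and \eqref{cons:deathbarrier} involve the auxiliary $p_1,\dots,p_k$, which are uniquely determined by $p_0, t, s$ via \eqref{cons:w}-\eqref{cons:v}. Writing $u_j := t_j + js$ and $q_j := w_{j+1}^\top p_j$, a direct computation using that $\{w_j, v_j\}$ is orthonormal gives the recursions
\[
    q_j = \cos(\tfrac{\pi}{2^{j+3}})\, q_{j-1} + \sin(\tfrac{\pi}{2^{j+3}})\, u_j, \qquad v_i^\top p_{i-1} = \cos(\tfrac{\pi}{2^{i+2}})\, u_{i-1} - \sin(\tfrac{\pi}{2^{i+2}})\, q_{i-2}.
\]
Although $|\delta u_j| \leq (j+1)r$ grows with $j$, the quantity $q_j$ only accumulates a uniform perturbation: unrolling the recursion and using $\sum_{j \geq 1}(j+1)/2^j = 3$ yields $|\delta q_{k-1}| \leq 1.31\,r + \tfrac{\pi}{8}\cdot 3 \cdot r \leq 2.5\,r$, while at the center $\bar q_{k-1} \leq 1.31 \cdot \tfrac{1}{10} + \tfrac{3\pi}{16} \leq 0.72$.

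Constraint \eqref{cons:deathbarrier} then follows since $e_1^\top p_k = \cos(\tfrac{\pi}{2^{k+2}}) q_{k-1} + \sin(\tfrac{\pi}{2^{k+2}}) u_k \leq 0.72 + 2.5\,r + \tfrac{\pi(k+1)}{2^{k+3}}(1+2r) < 1$ uniformly in $k$. For \eqref{cons:v} at $i \geq 2$ (the case $i=1$ being a direct calculation using $|v_1^\top \bar p_0| = \tfrac{1}{10}(\cos\tfrac{\pi}{8} - \sin\tfrac{\pi}{8})$ and $\bar u_1 = 1$), the center slack $\bar u_i - v_i^\top \bar p_{i-1} = (\bar u_i - \bar u_{i-1}) + \bar u_{i-1}(1-\cos(\tfrac{\pi}{2^{i+2}})) + \sin(\tfrac{\pi}{2^{i+2}})\bar q_{i-2} \geq \tfrac{1}{2}$. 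Expanding the perturbation,
\[
    \delta(u_i - v_i^\top p_{i-1}) = \delta t_i - \cos(\tfrac{\pi}{2^{i+2}})\delta t_{i-1} + \big[1 + (i-1)(1-\cos(\tfrac{\pi}{2^{i+2}}))\big]\delta s + \sin(\tfrac{\pi}{2^{i+2}})\delta q_{i-2},
\]
and, crucially, the coefficient of $\delta s$ simplifies to $1 + O(i/4^i)$ rather than $i$: the $i\delta s$ contribution to $\delta u_i$ is almost exactly cancelled by the $(i-1)\cos(\tfrac{\pi}{2^{i+2}})\delta s$ contribution to $\delta(v_i^\top p_{i-1})$. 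Thus $|\delta(u_i - v_i^\top p_{i-1})| \leq 3r + o(1)$, and the slack survives.

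The main obstacle is exactly this cancellation. A naive accounting gives $|\delta u_i| \leq (i+1)r$, suggesting the slack of \eqref{cons:v} degrades linearly with $i$ and forcing an inner radius of size $\Theta(1/k)$. The very purpose of the coefficient $is$ (rather than $s$ alone, or a $k$-independent scalar) in the construction is to arrange that $\delta u_i$ and $\delta(v_i^\top p_{i-1})$ have nearly identical $\delta s$ contributions, so that the net perturbation to the slack is of constant order. This is the geometric reason why the $\ell_\infty$ inner radius of $P$ is bounded away from zero independently of the extension depth $k$.
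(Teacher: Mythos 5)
Your strategy is the same in spirit as the paper's: pick an explicit center, determine $p_1,\dots,p_k$ from $p_0,t,s$ via the equality version of \eqref{cons:w}--\eqref{cons:v}, track the ``lagging'' component $q_j = w_{j+1}^\top p_j$, and verify each defining inequality using the geometric decay of $\sin(\pi/2^{j+3})$ together with the cancellation coming from the $is$ term. You use a different center $(0,0,(\tfrac1{10},\tfrac1{10}),\tfrac12\mathbf 1_k,\tfrac12)$ than the paper's $(0,0,(\tfrac16,\tfrac16),\tfrac1{30}\mathbf 1_k,\tfrac13)$, but the plan is the same, and your recursions for $q_j$ and $v_i^\top p_{i-1}$ and the estimate $|\delta q_{k-1}|\le 2.5\,r$ are correct. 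Your remark at the end about the role of the coefficient $is$ accurately describes why the inner radius stays bounded away from zero independently of $k$.

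However, two steps in the verification do not close as written.

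First, the bound for \eqref{cons:deathbarrier} is numerically wrong. You claim
$e_1^\top p_k \le 0.72 + 2.5\,r + \frac{\pi(k+1)}{2^{k+3}}(1+2r) < 1$
uniformly in $k$, but at $k=1$ with $r=1/30$ this right-hand side is
$0.72 + 0.083 + \tfrac{\pi}{8}\cdot\tfrac{32}{30} \approx 1.22 > 1$.
The problem is that the two worst cases are incompatible: the uniform bound $\bar q_{k-1}\le 0.72$ is approached only as $k\to\infty$, whereas the term $\sin(\pi/2^{k+2})\,\bar u_k$ is largest at $k=1$ where $\bar q_0 \approx 0.13$ is far below $0.72$. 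Bounding the two pieces by their separate suprema therefore overshoots. To repair this you would need to bound $\cos(\pi/2^{k+2})\bar q_{k-1} + \sin(\pi/2^{k+2})\bar u_k$ jointly as a function of $k$, or switch to the telescoping decomposition $e_1^\top p_k = e_1^\top p_0 + \sum_{i=1}^{k}(e_1^\top p_i - e_1^\top p_{i-1})$ with $e_1^\top p_i - e_1^\top p_{i-1} = \sin(\pi/2^{i+2})(u_i - v_i^\top p_{i-1})$, which is what the paper does; that sum is geometrically summable uniformly in $k$ and sidesteps the issue.

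Second, for \eqref{cons:v} with $i\ge 2$ you only verify the side $u_i - v_i^\top p_{i-1}\ge 0$; the constraint is $u_i \ge |v_i^\top p_{i-1}|$, so the side $u_i + v_i^\top p_{i-1}\ge 0$ must also be argued. Here the cancellation you rely on works in the wrong direction: the coefficient of $\delta s$ in $\delta(u_i + v_i^\top p_{i-1})$ is $i + (i-1)\cos(\pi/2^{i+2}) \approx 2i-1$, which grows linearly, so the perturbation is not $O(r)$. The constraint still holds because the center slack $\bar u_i + v_i^\top\bar p_{i-1}\ge \tfrac{i+1}{2} + \tfrac{i}{2}\cos(\pi/2^{i+2}) - \sin(\pi/2^{i+2})\,\bar q_{i-2}$ also grows linearly in $i$ and outpaces the $\approx 2ir$ perturbation when $r=1/30$, but this comparison needs to be stated. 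Finally, a small slip in the outer inclusion: the reflection argument of \Cref{lem:outerdisk} gives $\|p_0\|\le 1/\cos(\pi/2^{k+2})$ directly (via $e_1^\top p_k \ge e_1^\top p_k^* = \|p_0\|\cos\theta_k$), not $\|p_k\|\le 1/\cos(\pi/2^{k+2})$; the intermediate $\|p_k\|$ in your chain is not controlled, though the conclusion you need for $p_0$ is correct.
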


The following lemma is the key to show Lemma~\ref{lem:lb-radius-P}, where we construct a point in $B$ such that the $\ell_\infty$ ball with radius $\frac{1}{30}$ centered at that point is contained in $P$.
Obtaining a large inner ball is relatively easy using linear programming.
In this exposition we demonstrate the large inner ball by hand, which will require a somewhat tedious calculation to 3 significant digits.

\begin{lemma}[Inner $\ell_\infty$-Radius of the Polytope]\label{lem:innerball}
For $\bar x = \bar y = 0, \bar{p}_0 = (1/6,1/6)^\top, \bar t  = \mathbf{1}_{k}/30, \bar{s} = \frac{1}{3}$,
we have $(\bar x, \bar y, \bar{p}_0, \bar t, \bar{s}) + r\cdot \mathbb B_\infty^{k+5} \subseteq P$ for $r = \frac{1}{30}$.
\end{lemma}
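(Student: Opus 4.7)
The plan is to explicitly construct witnesses $p_1,\dots,p_k \in \R^2$ certifying that every point $(x,y,p_0,t,s)$ in the specified $\ell_\infty$-ball lies in $P$, and then verify each defining inequality in turn. The key observation is that \eqref{cons:w} combined with the equality $v_i^\top p_i = t_i+is$ from \eqref{cons:v}, together with the orthonormality of $\{w_i,v_i\}$, force
\[
p_i = (w_i^\top p_{i-1})\,w_i + (t_i + is)\,v_i,
\]
so I will define $p_1,\dots,p_k$ by iterating this formula. Within the ball of radius $1/30$ we have $p_0 \in [2/15,\,1/5]^2$, $t \in [0,\,1/15]^k$, $s \in [3/10,\,11/30]$, and $|x|,|y|\leq 1/30$, which immediately verifies \eqref{cons:absvalue}, \eqref{cons:t}, and \eqref{cons:s}. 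The inequalities requiring real work are \eqref{cons:v} and \eqref{cons:deathbarrier}.

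To handle those, I will track the scalar coefficients $\alpha_i := w_i^\top p_{i-1}$ and $\beta_i := v_i^\top p_{i-1}$, which by the angle-sum identities $w_{i+1}^\top w_i = v_{i+1}^\top v_i = \cos(\pi/2^{i+3})$ and $w_{i+1}^\top v_i = -v_{i+1}^\top w_i = \sin(\pi/2^{i+3})$ satisfy the two-term recurrences
\[
\alpha_{i+1} = \alpha_i\cos(\pi/2^{i+3}) + (t_i+is)\sin(\pi/2^{i+3}), \qquad
\beta_{i+1} = -\alpha_i\sin(\pi/2^{i+3}) + (t_i+is)\cos(\pi/2^{i+3}).
\]
Iterating the $\alpha$-recurrence and summing the resulting series using $\sin(\pi/2^{j+3})\leq \pi/2^{j+3}$, $t_j+js\leq 1/15+11j/30$, and the base bound $|\alpha_1|\leq \|p_0\|_2\leq \sqrt{2}/5$, a routine geometric-series estimate gives the uniform bound $|\alpha_i| \leq \sqrt{2}/5 + \pi/10 < 0.6$ for all $i$.

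With this uniform bound on $|\alpha_i|$ the remaining inequalities close quickly. For \eqref{cons:v}, the $\beta$-recurrence yields $|\beta_{i+1}| \leq (t_i+is) + |\alpha_i|\sin(\pi/2^{i+3})$, so $t_{i+1}+(i+1)s \geq |\beta_{i+1}|$ reduces to the scalar inequality $|\alpha_i|\sin(\pi/2^{i+3}) \leq (t_{i+1}-t_i)+s$; the right side is at least $-1/15+3/10 = 7/30$ and the left side is at most $0.6\cdot \pi/16 < 7/30$. The base case $i=0$ reduces to $\|p_0\|_2 \leq t_1+s$, which holds since $\sqrt{2}/5 < 3/10$. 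For \eqref{cons:deathbarrier}, I expand $e_1^\top p_k = \alpha_k\cos(\pi/2^{k+2}) + (t_k+ks)\sin(\pi/2^{k+2})$, bound the first term by $0.6$ and the second by its maximum over $k\geq 1$ (attained at small $k$ and numerically below $0.2$), to get $e_1^\top p_k < 1$.

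The main technical obstacle is isolating the uniform bound $|\alpha_i| < 0.6$ with a constant strictly below $1$ so that the per-step slack in the $v_i$-constraint does not accumulate; this requires both the rapid decay of the $\sin(\pi/2^{i+3})$ factors and the careful calibration of the center $(\bar x,\bar y, \bar p_0,\bar t,\bar s)$ so that the perturbation keeps $s$ bounded away from $0$ and $t_i$ small. Once this bound is in place, every remaining check is a direct numerical verification.
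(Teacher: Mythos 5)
Your proof is correct and follows essentially the same strategy as the paper: construct $p_1,\dots,p_k$ by making \eqref{cons:w} and \eqref{cons:v} equalities, verify the trivial constraints immediately, establish a uniform bound on the $w_i$-coefficient of $p_i$ via a geometric-series recurrence, and then close \eqref{cons:v} and \eqref{cons:deathbarrier}. The two cosmetic differences are that you carry the two-sided bound $\lvert\alpha_i\rvert < 0.6$ directly (whereas the paper first proves $(p_i)_w \geq 0$ by induction and then an upper bound separately), and you verify \eqref{cons:deathbarrier} by expanding $e_1^\top p_k = \alpha_k\cos(\pi/2^{k+2}) + (t_k+ks)\sin(\pi/2^{k+2})$ in the orthonormal basis rather than telescoping the increments $e_1^\top p_i - e_1^\top p_{i-1}$; both shortcuts are valid and your numerics check out.
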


\begin{proof}
    Fix any $(x, y, p_0, t, s) \in \R^{k + 5}$ such that $\|(x - \bar{x}, y - \bar{y}, p_0 - \bar{p}_0, t - \bar{t}, s - \bar{s})\|_\infty \leq r$. 
    We set $p_1, \ldots, p_k \in \R^2$ inductively by \eqref{cons:w} and the equations in \eqref{cons:v}, i.e.,
    $w_i^\top p_i = w_i^\top p_{i-1} $ and $t_i + is = v_i^\top p_i $ start from the base case of $p_0, s$ and $t$. 
    
    We will show that $(x, y, p_0, t, s) \in P$ by verifying the conditions in (\ref{cons:absvalue} - \ref{cons:s}).
    To simplify our notation, we define $(p_i)_v = v_i^\top p_i \in \R$ and $(p_i)_w = w_i^\top p_i \in \R$ for all $i \in \{0, 1, \ldots, k\}$. Note that $v_i$ and $w_i$ are orthogonal, $p_i = (p_i)_v v_i + (p_i)_w w_i$.
    
    First, observe that $e_1^\top p_0 \geq \frac{1}{6} - r \geq r \geq |x|$ and $e_2^\top p_0 \geq \frac{1}{6} - r \geq r \geq |y|$, confirming that \eqref{cons:absvalue} holds.
    Also, notice that $t_i \in [\bar{t}_i - r, \bar{t_i} + r] \subset [0, 1]$
     and $s \in [\bar{s} - r, \bar{s} + r] \subset [0, 1]$. Thus \eqref{cons:t} and \eqref{cons:s} hold. 
    The equality constraint \eqref{cons:w} holds directly by definition of $p_1,\ldots, p_k$.
    It remains to show \eqref{cons:v} and \eqref{cons:deathbarrier}, i.e., $ (p_i)_v \geq |v_i^\top p_{i-1}|$ for all $i \in [k]$ and $e_1^\top p_k \leq 1$.
    
    To aid in the remaining steps of the proof, we show the claim that $(p_i)_w \geq 0$
    for all $i \in \{0,1,\ldots,k\}$ by induction. Observe that
    $w_0^\top p_0 \geq w_0^\top \bar{p}_0 - \|w_0\|\cdot\|p_0 - \bar p_0\| \geq \frac{\sqrt{2}}{6} - \sqrt 2 r \geq 0$.
    Also, for all $i \in [k]$,
    \begin{align}
        (p_i)_w &= w_i^\top p_{i-1} \tag{By \eqref{cons:w}} \nonumber \\
        &= w_i^\top \left( (p_{i-1})_w w_{i-1} + (p_{i-1})_v v_{i-1} \right) \nonumber \\
        &= (p_{i-1})_w w_i^\top w_{i-1} + (p_{i-1})_v w_i^\top v_{i-1} \nonumber \\
        &= (p_{i-1})_w \cdot \cos(\frac{\pi}{2^{i+2}}) + (p_{i-1})_v \cdot \sin(\frac{\pi}{2^{i+2}}) \label{eq:innerradius-cons-w} \\
        &\geq (p_{i-1})_w \cdot \cos(\frac{\pi}{2^{i+2}}), \tag{By $(p_{i-1})_v = t_{i-1} + (i-1)s \geq 0$} \nonumber
    \end{align}
    where \eqref{eq:innerradius-cons-w} follows from $w_{i}^\top w_{i-1} = \|w_{i}\|\|w_{i-1}\| \cos(\theta) = \cos(\theta)$ where $\theta = \pi / 2^{i+2}$ is the angle between $w_i$ and $w_{i-1}$; similarly,  we have $w_{i}^\top v_{i-1} = \|w_{i}\|\|v_{i-1}\| \cos(\pi/2 - \theta) = \sin(\theta) = \sin (\pi / 2^{i+2})$. 
    It then follows by induction that $(p_i)_w \geq 0$ for all $i \in \{0,1,\ldots,k\}$.
    
    \paragraph{Verify \eqref{cons:v}:}
    To verify the inequality listed in \eqref{cons:v}, i.e.,
    $(p_i)_v \geq |v_i^\top p_{i-1}|$ for all $i \in [k]$, we upper-bound the right-hand-side by expanding it into the inner product with $v_{i-1}$ and the inner product with $w_{i-1}$.
    Notice that for all $i \in [k]$,
    \begin{align}\label{eq:innerradius-cond-v}
        |v_i^\top p_{i-1}| &= \left|v_i^\top ((p_{i-1})_v v_{i-1} + (p_{i-1})_w w_{i-1}) \right| \nonumber \\
        &\leq |(p_{i-1})_v| v_i^\top v_{i-1} + |(p_{i-1})_w| \cdot |v_i^\top w_{i-1}| & \text{(Triangle inequality)} \nonumber \\
        &= (t_{i-1} + (i-1)s) \cdot \cos(\frac{\pi}{2^{i + 2}}) + (p_{i-1})_w \cdot \sin(\frac{\pi}{2^{i+2}}) & \text{(By \eqref{cons:v} and $(p_{i-1})_w \geq 0$)} 
    \end{align}
    Next, we require an upper bound on $(p_{i-1})_w$. For all $i\in [k]$, from \eqref{eq:innerradius-cons-w}
    \begin{align*}
        (p_i)_w &= (p_{i-1})_w \cdot \cos(\frac{\pi}{2^{i+2}}) + (p_{i-1})_v \cdot \sin(\frac{\pi}{2^{i+2}}) \\
        &\leq (p_{i-1})_w + (t_{i-1} + (i-1)s) \cdot \sin(\frac{\pi}{2^{i+2}}) \tag{By $(p_{i-1})_w \geq 0$ and \eqref{cons:v}} 
    \end{align*}
    Let $t_0 = v_0^\top p_0$ and $\bar{t}_0 = v_0^\top \bar{p}_0 = 0$.
    By applying the above inequality to $1, 2, \cdots, i-1$,
    we have
    \begin{align}
        (p_{i})_w &\leq w_0^\top p_0 + \sum_{j=0}^{i-1} (t_j + js) \cdot \sin(\frac{\pi}{2^{j+3}}) \nonumber \\
        &\leq w_0^\top p_0 + \sum_{j=0}^{i-1} (t_j + js) \cdot \sin(\frac{\pi}{8}) \cdot 1.9^{-j}  \nonumber \\
        &\leq w_0^\top p_0 + \sum_{j=0}^{i-1} (\bar{t}_j + r + js) \cdot \sin(\frac{\pi}{8}) \cdot 1.9^{-j} \nonumber \\
        &\leq (\frac{\sqrt{2}}{6} + \sqrt{2} r) + \sin(\frac{\pi}{8}) \sum_{j=0}^{i-1}  (\frac{1}{30} + r + js) \cdot 1.9^{-j} \tag{By $w_0^\top \bar{p}_0 = \frac{\sqrt{2}}{6}$, $\bar{t}_j = \frac{1}{30}$} \nonumber \\
        &\leq \frac{\sqrt{2}}{6} + \sin(\frac{\pi}{8}) \cdot 
    \frac{1}{30} \cdot \sum_{j=0}^{\infty} 1.9^{-j} + r\left(\sqrt{2} + \sin(\frac{\pi}{8})\sum_{j=0}^{\infty} 1.9^{-j}\right) +  s\left(\sin(\frac{\pi}{8}) \sum_{j=0}^{\infty} j(1.9)^{-j} \right) \nonumber \\
        &\leq 0.263 + 2.226r + 0.898s. \label{eq:innerradius-piw-ub}
    \end{align}
    Here the second inequality uses the fact that for every $x \in [0, \frac{\pi}{8}]$ one has $\sin(x)/1.9 \geq \sin(x/2)$.
    This is because by the half angle formula, $\sin(x/2) = \pm\sqrt{\frac{1-\cos(x)}{2}}$, thus
    \[
        \frac{\sin(x)^2}{\sin(x/2)^2} = \frac{1-\cos(x)^2}{(1-\cos(x))/2} = 2 + 2\cos(x) \geq 2 + 2\cos(\pi/8) \geq 1.9^2.
    \]

    Plugging \eqref{eq:innerradius-piw-ub} back into \eqref{eq:innerradius-cond-v}, we have for all $i \in [k]$,
    \begin{align*}
        |v_i^\top p_{i-1}| &\leq (t_{i-1} + (i-1)s) \cdot \cos(\frac{\pi}{2^{i + 2}}) +  (0.263 + 2.226r + 0.898s) \cdot \sin(\frac{\pi}{2^{i+2}}) \\
        &\leq ( \frac{1}{30} + r + (i-1)s) + ( 0.263 + 2.226r + 0.898s) \cdot \sin(\frac{\pi}{8}) \tag{By $\bar{t}_{i-1} = \frac{1}{30}$ and $i \geq 1$} \\
        &\leq 0.134 + 1.852r + (i - 0.656)s \\
        &\leq 0.134 + 1.852r + is - 0.656(\bar{s} - r)  \\
        &\leq is \tag{By $\bar{s} = \frac{1}{3}, r\leq \frac{1}{30}$} \\
        &\leq (p_i)_v.
    \end{align*}
    Therefore, $v_i^\top p_i \geq |v_i^\top p_{i-1}|$ for all $i \in [k]$ and \eqref{cons:v} holds.
    
    \paragraph{Verify \eqref{cons:deathbarrier}:}
    To verify \eqref{cons:deathbarrier}, notice that increment of the first coordinate from $p_{i-1}$ to $p_i$ is
    \begin{align}
        e_1^\top p_i - e_1^\top p_{i-1} &= e_1^\top \left( (p_i)_v v_i - (v_i^\top p_{i-1})v_i  \right) \tag{By $w_i^\top p_i = w_i^\top p_{i-1}$} \nonumber \\
        &= \sin(\frac{\pi}{2^{i + 2}}) \cdot (t_i + is - v_i^\top p_{i-1}) \tag{By $e_1^\top v_i = \sin(\frac{\pi}{2^{i + 2}})$ and \eqref{cons:v}} \nonumber \\
        &= \sin(\frac{\pi}{2^{i + 2}}) \cdot \left(t_i + is - v_i^\top ((p_{i-1})_w w_{i-1} + (p_{i-1})_v v_{i-1}) \right) \nonumber \\
        &= \sin(\frac{\pi}{2^{i + 2}}) \cdot \left(t_i + is + (p_{i-1})_w \cdot \sin(\frac{\pi}{2^{i + 2}}) - (t_{i-1} + (i-1)s) \cdot \cos(\frac{\pi}{2^{i + 2}})  \right)  \label{eq:innerball-1}
    \end{align}
    where the last step comes from $v_i^\top v_{i-1} = \cos(\frac{\pi}{2^{i + 2}})$ and $v_i^\top w_{i-1} = -\sin(\frac{\pi}{2^{i + 2}})$.
    For all $i \geq 2$, we can show that in \eqref{eq:innerball-1}, the third term in the brackets is at most the fourth term, thus the right-hand-side of \eqref{eq:innerball-1} is at most $\sin(\pi/2^{i+2}) \cdot (t_i + is)$:
    \begin{align*}
        (p_{i-1})_w \cdot \sin(\frac{\pi}{2^{i + 2}}) &\leq  (0.263 + 2.226r + 0.898s) \cdot \sin(\frac{\pi}{2^{i+2}}) \tag{By \eqref{eq:innerradius-piw-ub}}\\
        &\leq (0.263 + 2.226r + 0.898s)\cdot \tan(\frac{\pi}{2^{i+2}}) \cdot \cos(\frac{\pi}{2^{i+2}})  \\
        &\leq (0.263 + 2.226r + 0.898s)\cdot \tan(\frac{\pi}{8}) \cdot \cos(\frac{\pi}{2^{i+2}}) \tag{By $\frac{\pi}{2^{i+2}} \leq \frac{\pi}{8}$}  \\
        &\leq 0.277 \cdot \cos(\frac{\pi}{2^{i+2}})  \tag{By $r \leq \frac{1}{30}$ and $s \leq \bar{s} + r = \frac{11}{30}$} \\
        &\leq (\bar{t}_{i-1} - r + (i-1)s)  \cdot \cos(\frac{\pi}{2^{i+2}}) \tag{By $\bar{t}_{i-1}-r = 0$ and $(i-1)s \geq s \geq \bar{s} - r =0.3$}  \\
        &\leq (t_{i-1} + (i-1)s) \cdot \cos(\frac{\pi}{2^{i+2}}).
    \end{align*}
    Plugging back into \eqref{eq:innerball-1}, we have for all $i\geq 2$ that
    \begin{align*}
        e_1^\top p_i - e_1^\top p_{i-1} &\leq \sin(\frac{\pi}{2^{i + 2}}) \cdot (t_i + is) \\
        &\leq \sin(\frac{\pi}{2^{i + 2}}) \cdot (\bar{t}_i + i\bar{s} + (i+1)r) \\
        &\leq \sin(\frac{\pi}{8}) \cdot 1.9^{-(i-1)} \cdot (\frac{1}{30} + \frac{i}{3} + (i+1)r) \tag{By $\sin(\frac{\pi}{2^{i + 2}}) \leq \sin(\frac{\pi}{8}) \cdot 1.9^{-(i-1)}$}
    \end{align*}
    For $i=1$ we recall from above that $(p_{i-1})_w \cdot \sin(\frac{\pi}{2^{i + 2}}) \leq 0.277 \cdot \cos(\frac{\pi}{2^{i+2}})$.
    Now take \eqref{eq:innerball-1} and observe
    $e_1^\top p_1 - e_1^\top p_0 = \sin(\pi/8)(t_1 + s + (p_0)_w \cdot\sin(\pi/8) - t_0\cdot\cos(\pi/8))
    \leq \sin(\pi/8)(11/30 + 2r + 0.277 \cdot \cos(\pi/8))$.
    Hence we find that $e_1^\top p_1 - e_1^\top p_0 \leq 0.239 + 0.766r$    
    and therefore,
    \begin{align*}
        e_1^\top p_k &\leq e_1^\top p_0 + \sum_{i=1}^k (e_1^\top p_i - e_1^\top p_{i-1}) \\
        &\leq (\frac{1}{6} + r) + 0.239 + 0.766r + \sin(\frac{\pi}{8}) \cdot \sum_{i=2}^k 1.9^{-(i-1)} \cdot (\frac{1}{30} + \frac{i}{3} + (i+1)r) \\
        &\leq (\frac{1}{6} + r) + 0.239 + 0.766r + 0.456 + 1.749r \\
        &\leq 0.862 + 3.515 r \leq 1.
    \end{align*}
    where the last inequality holds for any $r \leq \frac{1}{30}$.
    Therefore \eqref{cons:deathbarrier} holds and $(x, y, p_0, t, s) \in P$.
\end{proof}

\begin{proof}[Proof of Lemma~\ref{lem:lb-radius-P}]
In \Cref{lem:innerball} we construct a point $(\bar x, \bar y, \bar p_0, \bar t, \bar s)$
such that $\frac{1}{30}\mathbb{B}_\infty^{k+5} \subset P - (\bar x, \bar y, \bar{p}_0, \bar t, \bar{s})$.

For the second inclusion, we show that $P \subset (\cos(\pi/2^{k+2})^{-1}\cdot \mathbb{B}_\infty^{k+5}$.
Suppose $(x,y,p_0,t,s) \in P$ is arbitrary. From \Cref{lem:outerdisk} we know
that $\|(x,y)\|_\infty \leq \sqrt{x^2 + y^2} \leq \|p_0\|_2 \leq \cos(\pi/2^{k+2})^{-1}$.
Since $\mathbf{0}_k \leq t \leq \mathbf{1}_k$ we get $\|t\|_\infty \leq 1$, and
lastly we have $0 \leq s \leq 1$.
Put together, we find that $\|(x,y,p_0,t,s)\|_\infty \leq \cos(\pi/2^{k+2})^{-1}$.
Since $(x,y,p_0,t,s) \in P$ was arbitrary, we find that $P \subset \cos(\pi/2^{k+2})^{-1}\cdot \mathbb{B}_\infty^{k+5}$.
By the triangle inequality we find that
$P - (\bar x, \bar y, \bar{p}_0, \bar t, \bar{s}) \subset (\cos(\pi/2^{k+2})^{-1} + \|(\bar x, \bar y, \bar p_0, \bar t, \bar s)\|_\infty)\cdot \mathbb{B}_\infty^{k+5}$
and we see in \Cref{lem:innerball} that $\|(\bar x, \bar y, \bar p_0, \bar t, \bar s)\|_\infty = 1/3$.
Finally, note that $\cos(\pi/2^{k+2})^{-1} \leq \cos(\pi/8)^{-1} \leq 1.1$, we have 
\[
    \cos(\frac{\pi}{2^{k+2}})^{-1} \leq \cos(\frac{\pi}{8})^{-1} + \frac{1}{3} \leq \frac{3}{2}.\qedhere
\]
\end{proof}

\subsection{Properties of the Polar Polytope}\label{sec:lb-dual}

In this section, we will analyse the scaled polar polytope
$Q = \frac{1}{30}(P - (\bar x, \bar y,\bar p_0, \bar t, \bar s))^\circ$.
From well-known duality properties, we will find that $Q$ satisfies the following desirable properties:
\begin{enumerate}
    \item $Q \cap W$ approximates a two-dimensional disk;
    \item The inner $\ell_1$-radius of $Q$ is at least $\frac{1}{45}$ when centered at $\mathbf{0}$;
    \item The outer $\ell_1$-radius of $Q$ is at most $1$ when centered at $\mathbf{0}$.
\end{enumerate}

\begin{lemma}\label{lem:lb-prim-to-dual}
For any $k \in \N$, there exists a two-dimensional linear subspace $W \subset \R^{k+5}$ and $n = 4k+7$ points $\bar a_1,\ldots,\bar a_n \in \mathbb{B}_1^{k+5}$ such that $Q := \conv(\bar a_1,\ldots,\bar a_n)$ satisfies
\[
\frac{\cos(\pi/2^{k+2})}{30} \cdot \ball^{k+5} \cap W \subset Q \cap W \subset \frac{1}{30}\cdot \ball^{k+5}\cap W
\]
and
\[
    \frac{1}{45} \cdot \mathbb{B}_1^{k+5} \subset Q \subset \mathbb{B}_1^{k+5}
\]
\end{lemma}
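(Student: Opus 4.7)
The plan is to realize $Q$ as a rescaled polar dual of a translate of $P$. Letting $\bar z = (\bar x, \bar y, \bar p_0, \bar t, \bar s)$ be the explicit center from \Cref{lem:innerball}, I will set
\[
    Q := \tfrac{1}{30}\,(P - \bar z)^\circ.
\]
By \Cref{lem:lb-radius-P} the set $P - \bar z$ is bounded and contains $\mathbf{0}_{k+5}$ in its interior, so $(P - \bar z)^\circ$ is a bounded polytope with $\mathbf{0}_{k+5}$ in its interior whose extreme points are (appropriately scaled) outward facet normals of $P - \bar z$. Dividing those extreme points by $30$ will give the candidate points $a_1,\dots,a_n$.

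Both geometric containments are then direct consequences of duality. For the $\mathbb{B}_1^{k+5}$ bounds, I will apply \Cref{fact:dual-containing} together with the identity $(r\,\mathbb{B}_\infty^{k+5})^\circ = \tfrac{1}{r}\mathbb{B}_1^{k+5}$ to \Cref{lem:lb-radius-P}, obtaining $\tfrac{2}{3}\mathbb{B}_1^{k+5} \subset (P-\bar z)^\circ \subset 30\,\mathbb{B}_1^{k+5}$; rescaling by $\tfrac{1}{30}$ yields $\tfrac{1}{45}\mathbb{B}_1^{k+5} \subset Q \subset \mathbb{B}_1^{k+5}$, and in particular each $a_i \in \mathbb{B}_1^{k+5}$. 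For the two-dimensional slice, the key point is that $\bar x = \bar y = 0$ forces $\pi_W(P - \bar z) = \pi_W(P)$, so \Cref{lem:lb-radius-piwP} gives $\ball^2 \subset \pi_W(P - \bar z) \subset (1 + 4^{-k-2})\ball^2$. Dualizing inside the plane $W$ and using \Cref{fact:dual-intersection}, which identifies $\pi_W(P-\bar z)^\circ = (P-\bar z)^\circ \cap W$, turns this into $\tfrac{1}{1+4^{-k-2}}\ball^2 \subset (P-\bar z)^\circ \cap W \subset \ball^2$, and rescaling by $\tfrac{1}{30}$ together with $4^{-k-2} \leq 4^{-k}$ gives the required $Q \cap W$ inclusions (noting $\mathbb{B}_2^{k+5}\cap W$ is a two-dimensional disk identified with $\ball^2$ inside $W$).

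The remaining and most delicate step will be the vertex count. After eliminating $p_1,\dots,p_k$ via the equalities in \eqref{cons:w} and the equality portion of \eqref{cons:v}, the inequalities \eqref{cons:absvalue}, \eqref{cons:v}, \eqref{cons:deathbarrier}, \eqref{cons:t}, and \eqref{cons:s} yield an explicit facet description of $P$. To match $n = 4k + 5 = 4d - 15$ I will need to verify that a pair among the upper-bound inequalities (the natural candidates are among $\{t_i \leq 1\}$ and $s \leq 1$) is implied by combinations of the others once \eqref{cons:deathbarrier} is substituted with the expression for $p_k$ as a linear function of $p_0,t,s$, so that their corresponding scaled facet normals already lie in the convex hull of the remaining normals and can be dropped from the vertex list of $(P-\bar z)^\circ$ without changing $Q$. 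This is a geometric check independent of $k$, and is the one place where care is needed; the remainder of the argument is bookkeeping through the duality identities above.
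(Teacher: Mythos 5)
Your proposal is correct and follows the paper's own argument step for step: take $Q = \tfrac{1}{30}(P-\bar z)^\circ$ with $\bar z$ the center from \Cref{lem:innerball}, then apply \Cref{lem:lb-radius-P} together with \Cref{fact:dual-containing} (plus $\ell_\infty$/$\ell_1$ duality) for the $\mathbb{B}_1^{k+5}$ sandwich, and apply \Cref{lem:lb-radius-piwP} together with \Cref{fact:dual-intersection} for the slice $Q \cap W$. The one step you flag as delicate is also the one place the paper is terse: the paper merely asserts that the constraints (\ref{cons:absvalue})--(\ref{cons:s}) reduce to a $(4k+5)\times(k+5)$ system $A z \leq \mathbf 1$, whereas a literal expansion (splitting each absolute-value inequality and each two-sided bound into two one-sided ones) produces $4k+7$ inequalities, so one would indeed have to check that two of these are not facet-defining to match the stated $n$; the paper does not supply that check, and in any case the discrepancy only shifts $n$ by an additive constant, which is immaterial to the asymptotic lower bound in \Cref{thm:lb-main}.
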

\begin{proof}
Let $P \subset \R^{k+5}$ be the polytope defined by the linear system in \eqref{eq:extension}, and let 
\begin{align*}
    \tilde{P} = P - (\bar x, \bar y, \bar{p}_0, \bar t, \bar{s})
\end{align*}
denote the polytope obtained from shifting its center $(\bar x, \bar y, \bar{p}_0, \bar t, \bar{s})$ to $\mathbf{0}_d$.
Here, as in \Cref{lem:innerball},
\begin{align*}
    \bar x = \bar y = 0, \bar{p}_0 = (1/6,1/6)^\top, \bar t  = \mathbf{1}_{k}/30, \bar{s} = \frac{1}{3}.
\end{align*}
By applying row rescalings we transform the constraints
(\ref{cons:absvalue} - \ref{cons:s})
into a matrix $A \in \R^{(4k+7)\times(k+5)}$
such that
\[
\tilde{P} = \{z \in \R^{k+5} : Az \leq 1\}.
\]
Let $\tilde{Q} = (\tilde{P})^\circ \subset \R^d$ denote the polar body of $\tilde{P}$.
Since $\tilde{P}$ is bounded, $\tilde{Q}$ is the convex hull of the rows of the matrix $A$, i.e.
\[
\tilde{Q} = \{A^\top\lambda : \lambda \in [0,1]^{4k+7}\; \mbox{s.t.} \sum_{i=1}^{4k+7} \lambda_i = 1\}.
\]
Then by \Cref{lem:lb-radius-P} and \Cref{fact:dual-containing}, the inner and outer ball of $\tilde{Q}$ satisfy
\[
    \frac{2}{3}\cdot \mathbb{B}_1^{k+5} \subset \tilde{Q} \subset 30\cdot \mathbb{B}_1^{k+5}.
\]
Also, by \Cref{lem:lb-radius-piwP}, \Cref{fact:dual-intersection} and \Cref{fact:dual-containing}, the inner and outer ball of $\tilde{Q} \cap W$ satisfy
\[
\cos(\pi/2^{k+2}) \cdot \ball^{k+5}\cap W \subset \tilde{Q} \cap W \subset \ball^{k+5} \cap W.
\]
The lemma then follows from taking $Q = \frac{1}{30}\tilde{Q}$.
\end{proof}

\subsection{Perturbation Analysis and Proof of the Lower Bound}\label{sec:lb-perturbation}

In this subsection, we study the number of edges of the intersection polygon $Q \cap W$ after perturbation and prove our main lower bound theorem (\Cref{thm:lb-main}). 
To show that our construction has many edges even after perturbation,
we require the following two statements:
\begin{lemma}\label{lem:hdist}
Let $\bar a_1,\dots,\bar a_n \in \R^d$ be points with
$r \mathbb B ^d_1 \subseteq \conv(\bar a_1,\dots,\bar a_n)$ for some $r > 0$.
If $\eps \leq r/2$ and points $a_1,\dots, a_n \in \R^d$ satisfy $\|a_i - \bar a_i\|_1 \leq \eps$
for all $i \in [n]$ then it follows that
\[
(1-\frac{2 \eps}{r}) \conv(\bar a_1,\dots,\bar a_n) \subseteq \conv(a_1,\dots,a_n)
\subseteq (1+\frac{\eps}{r}) \conv(\bar a_1,\dots,\bar a_n).
\]
\end{lemma}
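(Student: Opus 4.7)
The plan is to prove both inclusions via Minkowski-sum arguments that exploit the ball $r\mathbb B_1^d$ sitting inside $P := \conv(\bar a_1,\dots,\bar a_n)$. Throughout I write $v_i = a_i - \bar a_i$, so $\|v_i\|_1 \leq \eps$ and $v_i \in \eps\mathbb B_1^d \subset (\eps/r)P$, since $r\mathbb B_1^d \subset P$ and $P$ is convex with $0 \in P$.

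For the outer inclusion, I would write each $a_i = \bar a_i + v_i \in P + (\eps/r)P = (1+\eps/r)P$, using the identity $\lambda P + \mu P = (\lambda+\mu)P$ valid for any convex $P$ containing the origin, and then take the convex hull over $i$ to obtain $\conv(a_1,\dots,a_n) \subset (1+\eps/r)P$.

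The inner inclusion is less immediate, and I would break it into two steps. First, note the symmetric Hausdorff estimate $P \subset \tilde P + \eps\mathbb B_1^d$, where $\tilde P := \conv(a_1,\dots,a_n)$: indeed, any $\sum\lambda_i \bar a_i$ differs by $\sum \lambda_i v_i$ (of $\ell_1$-norm at most $\eps$) from $\sum\lambda_i a_i \in \tilde P$. Second, I would deduce a quantitative inner-ball bound $(r-\eps)\mathbb B_1^d \subset \tilde P$ using support functions: $r\mathbb B_1^d \subset P$ gives $h_P(c) \geq r\|c\|_\infty$, while $P \subset \tilde P + \eps\mathbb B_1^d$ gives $h_P(c) \leq h_{\tilde P}(c) + \eps\|c\|_\infty$, so $h_{\tilde P}(c) \geq (r-\eps)\|c\|_\infty$ for every direction $c$; a separating-hyperplane argument then rules out any $y \in (r-\eps)\mathbb B_1^d$ lying outside $\tilde P$. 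With a concrete ball inside $\tilde P$ in hand, the Minkowski trick of the outer inclusion runs with the roles swapped: any $\bar x \in P$ decomposes as $\bar x = \tilde x + u$ with $\tilde x \in \tilde P$ and $u \in \eps\mathbb B_1^d \subset (\eps/(r-\eps))\tilde P$, yielding $P \subset (1+\eps/(r-\eps))\tilde P$. The hypothesis $\eps \leq r/2$ gives $\eps/(r-\eps) \leq 2\eps/r$, and rearranging via $0 \in \tilde P$ together with the elementary estimate $(1+2\eps/r)^{-1} \geq 1 - 2\eps/r$ produces the desired $(1 - 2\eps/r)P \subset \tilde P$.

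The main obstacle I anticipate is the quantitative inner-radius bound $(r-\eps)\mathbb B_1^d \subset \tilde P$: without it, Minkowski manipulations alone do not yield the inner inclusion, because $\tilde P$ could a priori be thin or badly centered. Converting the Hausdorff estimate into an inner-ball guarantee via support functions and a separating hyperplane is where the assumption $r\mathbb B_1^d \subset P$ is genuinely used, beyond mere boundedness of $P$.
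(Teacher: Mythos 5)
Your proof is correct and follows essentially the same route as the paper: the outer inclusion via a Minkowski-sum/scaling argument, the Hausdorff estimate $P \subset \tilde P + \eps\mathbb B_1^d$, a support-function (separating-hyperplane) argument to extract an $\ell_1$-ball inside the perturbed hull, and finally the $(1+x)^{-1} \geq 1-x$ rearrangement. The only cosmetic difference is that you establish the slightly sharper inner ball $(r-\eps)\mathbb B_1^d \subset \tilde P$ where the paper settles for $\tfrac{r}{2}\mathbb B_1^d \subset \tilde P$ using $\eps \leq r/2$; both are then relaxed to the same $2\eps/r$ bound.
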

\begin{proof}
Write $Q = \conv(a_1,\dots,a_n)$ and $\bar Q = \conv(\bar a_1,\dots,\bar a_n)$.
The second inclusion follows by $Q \subseteq \bar Q + \eps \mathbb B_1^d \subseteq \bar Q + \frac{\eps}{r}\bar Q$.
For the first inclusion, we observe that
$r \mathbb B_1^d \subseteq \bar Q \subseteq Q + \eps \mathbb B_1^d \subseteq Q + \frac{r}{2} \mathbb B_1^d$.
This implies that $\frac r 2 \mathbb B_1^d \subseteq Q$, for
if there were to exist $x \in \frac r 2 \mathbb B^d_1$ such that $x \notin Q$ then,
since $Q$ is closed and convex, then we could find $y \in \R^d$ such that
$y^\T x > y^\T z$ for all $z \in Q$. Writing $f(S) = \max_{z \in S} y^\T z$
for $S \subseteq \R^d$, this would give
\[
f(r\mathbb B_1^d) \geq f(x + \frac r 2 \mathbb B_1^d)
= y^\T x + f(\frac r 2 \mathbb B_1^d)
> f(Q) + f(\frac r 2 \mathbb B_1^d)
\geq f(Q + \frac r 2 \mathbb B_1^d)
\geq f(r \mathbb B_1^d).
\]
By contradiction it follows that $\frac r 2 \mathbb B_1^d \subseteq Q$.

Note that $1-x^{2} \leq 1$, so $1-x \leq 1/(1+x)$ for any $1+x > 0$. Combining with $\bar Q \subseteq Q + \eps\mathbb B_1^d \subseteq Q + \frac{\eps}{r/2} Q$ we conclude the desired result.
\end{proof}

\begin{lemma}\label{lem:edgecount}
If a polygon $T \subset \R^2$ satisfies $\alpha\cdot \ball^2 \subset T \subset \beta\cdot \ball^2$
for some $\alpha, \beta > 0$ then $T$ has at least $\sqrt{\alpha/(\beta - \alpha)}$ edges.
\end{lemma}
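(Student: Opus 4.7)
The plan is to argue that $T$ must have many edges because the perimeter of $T$ is bounded below while every individual edge is short.

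First, I would lower bound the perimeter of $T$. Since $T$ is a convex set with $\alpha \ball^2 \subset T$, monotonicity of perimeter for nested convex sets (projection onto the smaller set is $1$-Lipschitz) yields $\mathrm{perimeter}(T) \geq \mathrm{perimeter}(\alpha \ball^2) = 2\pi\alpha$.

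Next, I would upper bound the length of any single edge $e$ of $T$. Let $L$ be the line supporting $e$. Because $T$ contains $\alpha \ball^2$ and $T$ lies on one side of $L$, the line $L$ must be at distance at least $\alpha$ from the origin. On the other hand, both endpoints of $e$ lie in $T \subset \beta \ball^2$, so they are at distance at most $\beta$ from the origin. Since $e$ is a chord of $\beta \ball^2$ whose supporting line is at distance $\geq \alpha$ from the center, elementary planar geometry (Pythagoras on the triangle formed by the origin, the foot of the perpendicular from the origin to $L$, and an endpoint of $e$) gives
\[
\mathrm{length}(e) \;\leq\; 2\sqrt{\beta^2 - \alpha^2}.
\]

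Combining the two bounds, the number of edges is at least
\[
\edges(T) \;\geq\; \frac{\mathrm{perimeter}(T)}{\max_e \mathrm{length}(e)} \;\geq\; \frac{2\pi\alpha}{2\sqrt{\beta^2-\alpha^2}} \;=\; \frac{\pi\alpha}{\sqrt{(\beta-\alpha)(\beta+\alpha)}}.
\]
To finish, I would split into two cases. If $\beta \leq 2\alpha$, then $\beta + \alpha \leq 3\alpha$, so the denominator is at most $\sqrt{3\alpha(\beta-\alpha)}$, giving $\edges(T) \geq \frac{\pi}{\sqrt{3}}\sqrt{\alpha/(\beta-\alpha)} = \Omega(\sqrt{\alpha/(\beta-\alpha)})$. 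If instead $\beta > 2\alpha$, then $\sqrt{\alpha/(\beta-\alpha)} < 1$ and the bound $\edges(T) \geq 3$ (trivially true for a polygon) absorbs the claim. No step looks technically delicate; the only small subtlety is verifying that the supporting line of each edge really is at distance at least $\alpha$ from the origin, which follows immediately from the fact that $\alpha\ball^2 \subset T$ and $T$ lies entirely on one side of that supporting line.
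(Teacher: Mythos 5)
Your proposal is correct and takes essentially the same approach as the paper: both arguments bound each edge length by roughly $2\sqrt{\beta^2-\alpha^2}$ via a Pythagorean argument (the paper uses the minimum-norm point on the edge, you use the foot of the perpendicular from the origin to the supporting line — effectively the same computation), lower bound the perimeter by $2\pi\alpha$, and divide. Your explicit case split on $\beta \le 2\alpha$ versus $\beta > 2\alpha$ is a clean way to handle the regime the paper leaves implicit (where $\sqrt{2\eps+\eps^2}\le 2\sqrt\eps$ requires $\eps\le 2$).
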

\begin{proof}
If $\beta > 2\alpha$ then the bound is trivially true,
so assume that $\beta \leq 2\alpha$.

Without loss of generality, re-scale $T$ so that $\ball^2 \subset T \subset (1+\eps)\cdot \ball^2$,
where $\eps = \beta/\alpha - 1 > 0$. Note that since $\beta \leq 2 \alpha$ we have $\eps \leq 1$.

Consider any edge $[q_1, q_2] \subset T$ and let $p \in [q_1,q_2]$ denote
the minimum-norm point in this edge.
Then we have $\|q_1 - p\|^2 = \|q_1\|^2 + \|p\|^2 - 2\langle q_1,p\rangle$.
Since $p$ is the minimum-norm point, we have $\langle q_1, p \rangle \geq \|p\|^2$,
and hence $\|q_1 - p\|^2 \leq \|q_1\|^2 - \|p\|^2 \leq (1+\eps)^2 - \|p\|^2$.
Since $p$ lies on the boundary of $T$ we have $\|p\| \geq 1$,
which implies that $\|q_1-p\|^2 \leq (1+\eps)^2 - 1 = 2\eps + \eps^2$.
The analogous argument for $\|q_2-p\|$ and the triangle inequality tell us that
$\|q_1-q_2\| \leq 2\sqrt{2\eps + \eps^2} \leq 4\sqrt{\eps}$, where we use $\eps \leq 1$ at the second step. The choice of the edge $[q_1,q_2]$ was
arbitrary, hence every edge of $T$ has length as most $4\sqrt{\eps}$.

But $T$ has perimeter at least $2\pi$. Since the perimeter of a polygon
is equal to the sum of the lengths of its edges, this implies that
$T$ has at least $\frac{2\pi}{4\sqrt{\eps}} > \frac{1}{\sqrt \eps}$ edges.
\end{proof}

Now, we can prove the generic lower bound claimed in \Cref{thm:lb-generic} on the shadow size
under adversarial $\ell_1$-perturbations.

\begin{proof}[Proof of Theorem~\ref{thm:lb-generic}]
Fix any $d > 5$, let $k = d-5$ and observe that $n = 4k+7$. Let $\bar a_1,\ldots,\bar a_{n}$ be as constructed in \Cref{lem:lb-prim-to-dual}. Then we have
\[
\frac{\cos(\pi/2^{k+2})}{30} \cdot \ball^{k+5} \cap W \subset \conv(\bar{a}_1, \ldots, \bar{a}_n) \cap W \subset \frac{1}{30}\cdot \ball^{k+5}\cap W ,
\]
and
\begin{align*}
    \frac{1}{45} \cdot \mathbb B_1^{k+5} \subset \conv(\bar{a}_1, \ldots, \bar{a}_n) \subset \mathbb B_1^{k+5} .
\end{align*}
For any set of points $a_1, \ldots, a_n$ such that $\|a_i - \bar{a}_i\|_1 \leq \eps$ for each $i \in [n]$. By \Cref{lem:hdist} and setting $r = 1/45$, we have
\[
    \conv(\bar{a}_1, \ldots, \bar{a}_n) \subseteq \frac{1}{1 - 2\eps / r} \conv(a_1, \ldots, a_n) = \frac{1}{1 - 90 \eps} \conv(a_1,\ldots,a_n).
\]
Therefore,
\[
    \frac{\cos(\pi/2^{k+2})}{30} \cdot \ball^{k+5} \cap W \subset \conv(\bar a_1,\ldots,\bar a_n) \cap W \subset (1 - 90\eps)^{-1} \conv(a_1,\ldots,a_n)\cap W ,
\]
and 
\[
    \conv(a_1,\ldots,a_n) \cap W \subset (1+45\eps) \conv(\bar a_1,\ldots,\bar a_n) \cap W \subset \frac{1+45\eps }{30}\cdot \ball^{k+5}\cap W.
\]
Therefore, we can bound the inner and outer $\ell_1$-radius of $\conv(a_1, \ldots, a_n) \cap W$ by 
\begin{align*}
    \frac{(1 - 90\eps) \cdot \cos(\pi/2^{k+2})}{30} \cdot \ball^{k+5}\cap W \subseteq \conv(a_1, \ldots, a_n) \cap W \subseteq \cdot \frac{1+45\eps}{30} \cdot \ball^{k+5}\cap W .
\end{align*}
It then follows from \Cref{lem:edgecount} that the polygon $\conv(a_1,\ldots,a_n) \cap W$ has at least
\begin{align*}
    & \left( \frac{(1 - 90\eps) \cdot \cos(\pi/2^{k+2})}{(1+45\eps) - (1 - 90\eps) \cdot \cos(\pi/2^{k+2})} \right)^{1/2}  \\
    \geq ~ & \left( \frac{(1 - 90\eps) \cdot (1 - \frac{\pi^2}{4^{k+2}})}{(1+45\eps) - (1 - 90\eps) \cdot (1 - \frac{\pi^2}{4^{k+2}})} \right)^{1/2} \tag{By $\cos(x) \geq 1 - x^2$ for $x \leq \frac{\pi}{2}$} \\
    \geq ~ & \left( \frac{1 - 90\eps - \frac{\pi^2}{4^{k+2}}}{(1+45\eps) - (1 - 90\eps - \frac{\pi^2}{4^{k+2}})} \right)^{1/2}\\
    \geq ~ & \Omega \left(\frac{1}{\sqrt{\eps + 4^{-k}}} \right)
\end{align*}
edges.
\end{proof}

Finally, we can prove our main result using Gaussian tail bound:
\begin{proof}[Proof of Theorem~\ref{thm:lb-main}]
Using concentration of Gaussian distribution in \Cref{cor:gaussian-globaldiam}, we find that if $\sigma
\leq 1/(360 d\sqrt{\log n})$, then with probability at least
$1 - \binom{n}{d}^{-1}$, we have $\max_{i \in [n]} \|a_i - \bar{a}_i\|_2 \leq
4\sigma \sqrt{d \log n} \leq \frac{1}{90\sqrt{d}}$.
The result follows from \Cref{thm:lb-generic} and the fact that $\|x\|_1 \leq \sqrt d \|x\|_2$
for every $x \in \R^d$.
\end{proof}

\subsection{Experimental Results}\label{sub:experiments}

\begin{figure}[ht]
    \centering
    \includegraphics[width=0.7\textwidth]{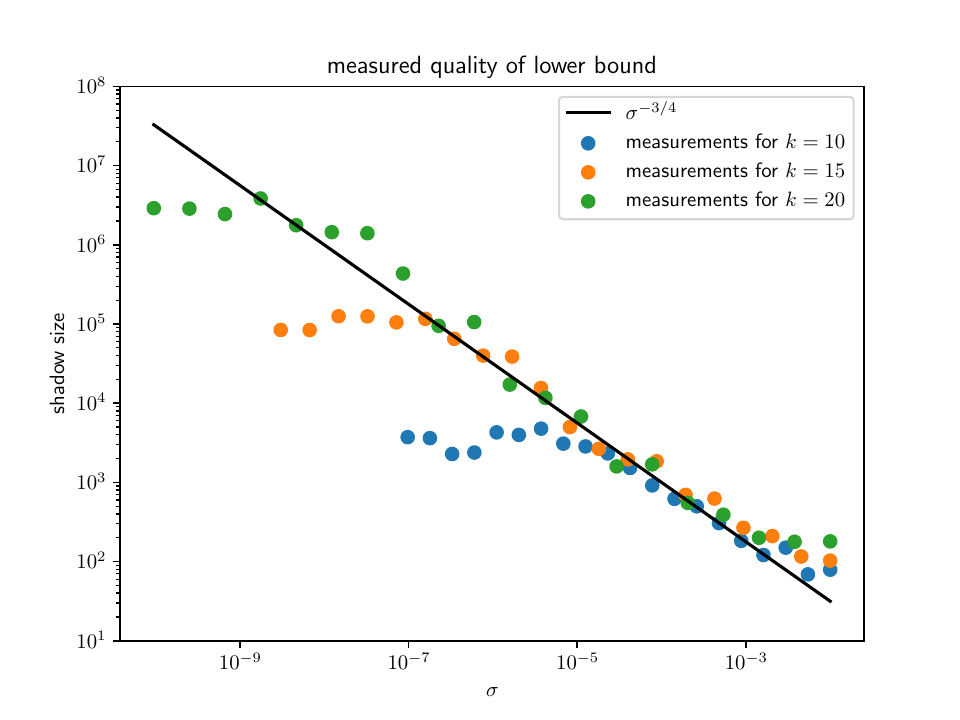}
    \caption{Measured shadow sizes for sampled perturbations of our construction, for different values of $k$ and $\sigma$.}
    \label{fig:measurements}
\end{figure}
To measure whether the analysis in \Cref{thm:lb-main} is tight or not, we ran numerical experiments.
Using \texttt{Python} and \texttt{Gurobi 10.0.3}, we constructed a matrix $A$ such that
\[
    P - (\bar x, \bar y,\bar p_0, \bar t, \bar s) = \{z \in \R^{k+5} : Az \leq \mathbf 1_{4k+7}\},
\]
as described earlier in this section.
Writing $R$ as the maximum Euclidean norm among the row vectors of $A$,
we sampled $\hat A$ with independent Gaussian distributed entries with standard deviation $\sigma R$ and $\E[\hat A] = A$.
To approximate the shadow size, we optimized the objective vectors
$\cos(\frac{(i+0.3)\pi}{2^{k+4}})x + \sin(\frac{(i+0.3)\pi}{2^{k+4}})y$, with $i = 0,\dots,2^{k+5}-1$,
over the polyhedron $\{z \in \R^{k+5} : \hat A z \leq 1\}$ and
counted the number of distinct values $(x, y)$ found among the solutions.
Two consecutive solutions were counted as distinct if their $x, y$ coordinates differed in $\ell_1$ norm
by at least $10^{-11}$.\footnote{The code used for this experiment is available at \url{https://github.com/sophiehuiberts/shadow-size/tree/main/HLZ23}}
Note that the number of vertices of this projection of the perturbed auxiliary polytope
is equal to the number of vertices of the section of the perturbed polar polytope,
which then describes the shadow size.

When $\sigma = 0$, our code found $2^{k+1}$ such distinct points.
For $\sigma > 0$, \Cref{thm:lb-main} shows that we expect to find at least
$\Omega \Big(\min \Big(\frac{1}{\sqrt{d\sigma\sqrt{\log d}}},2^k \Big) \Big)$ distinct pairs $(x,y)$.

For $k = 10,15,20$, we measured the shadow size for $20$ different values of $\sigma$ ranging from $0.01$ to $0.0001/2^k$.
The resulting data is depicted in \Cref{fig:measurements} along with a graph of the function $\sigma \mapsto \sigma^{-3/4}$.
We observe that for each $k$, the measured shadow size appears to follow the graphed function up to a point,
plateauing slightly above $2^{k+1}$ when $\sigma$ is small.
The fact that some measurements come out higher than $2^{k+1}$, the shadow size for $\sigma = 0$, is not unexpected:
the polytope $P$ is highly degenerate, whereas the perturbed polytope is simple and can thus have many more vertices.

The measured shadow sizes appear to grow much faster than $1/\sqrt\sigma$ as $\sigma$ gets small, closer to the $\sigma^{-3/4}$ line that we plotted.
These results suggest that the behaviour of the shadow size is
substantially different in $d = 2$, where
we have an upper bound of
$O\left(\frac{\sqrt[4]{\log(n)}}{\sqrt\sigma} + \sqrt{\log n}\right)$,
and $d > 3$, where one might expect a lower bound with a higher dependence on $\sigma$.


\printbibliography

\end{document}